\documentclass[11pt]{article}

\usepackage[margin=1.05in]{geometry}
\usepackage{amsmath,amssymb,amsthm,mathtools}
\usepackage{enumitem}
\usepackage{pgfplots}
\pgfplotsset{compat=1.18}
\usepackage[colorlinks=true,linkcolor=blue,citecolor=blue,urlcolor=blue]{hyperref}

\newtheorem{theorem}{Theorem}[section]
\newtheorem{proposition}[theorem]{Proposition}
\newtheorem{lemma}[theorem]{Lemma}
\newtheorem{corollary}[theorem]{Corollary}
\theoremstyle{definition}
\newtheorem{definition}[theorem]{Definition}
\newtheorem{example}[theorem]{Example}
\theoremstyle{remark}
\newtheorem{remark}[theorem]{Remark}

\DeclareMathOperator{\Tr}{Tr}
\DeclareMathOperator{\supp}{supp}

\newcommand{\M}{\mathcal{M}}
\newcommand{\D}{\mathcal{D}}

\newcommand{\Qaz}{Q_{\alpha,z}}
\newcommand{\Daz}{D_{\alpha,z}}
\newcommand{\Dt}{\Delta_t}
\newcommand{\Lt}{\ell_t}

\title{Local Determinant Defects and Low-Energy Subspace Stability\\
in Semifinite von Neumann Algebras}
\author{S. Mahmoud Manjegani\\[3pt]
\small Department of Mathematics and Statistics, University of Regina,\\
\small 3737 Wascana Parkway, Regina, Saskatchewan S4S 0A2, Canada\\[2pt]
\small Department of Mathematical Sciences, Isfahan University of Technology,\\
\small Isfahan 84156-83111, Iran\\[2pt]
\small \texttt{Seyed.Manjegani@uregina.ca}\quad
\small \texttt{manjgani@iut.ac.ir}}
\date{}

\begin{document}
\maketitle

\begin{abstract}
We give a finite-temperature test for the stability of low-energy spectral
subspaces.  The test uses a local determinant defect for the
$\alpha$--$z$ R\'enyi kernel of two Gibbs states and bounds the tracial
$L^2$ distance between their low-energy projections.  The main tool is a
spectral-flattening principle for bounded positive injective operators in a
semifinite von Neumann algebra.  It reduces the local product defect to an
exact two-projection formula.  The thermal result applies to lower-bounded
affiliated Hamiltonians with trace-class Gibbs operators and does not require
the Hamiltonians to be close in operator norm.  We give an infinite-level
example with unbounded Hamiltonians.  For matrices, the result gives a sum of
principal-angle remainders.  For qubits, the defect and the thermal gaps
determine the ground-state overlap exactly.  The same estimate gives a
uniform error bound for bounded observables in an effective low-energy model.
We also prove the determinant inequalities, their equality cases, and
$z$-monotonicity in matrix, finite, and semifinite settings.
\end{abstract}

\medskip
\noindent\textbf{Keywords:}
R\'enyi divergence; log-majorization; local determinant; generalized singular
values; semifinite von Neumann algebra; spectral projections; principal
angles; Gibbs states.

\medskip
\noindent\textbf{2020 Mathematics Subject Classification:}
Primary 46L51, 47A63; Secondary 81P45, 15A42.

\section{Introduction}

R\'enyi divergences measure how different two states are.  In the
noncommutative setting, several useful versions of this quantity coexist.  A
large part of the modern theory compares these divergences, proves data
processing inequalities, and studies their dependence on the parameters.
The $\alpha$--$z$ family includes both the Petz and sandwiched R\'enyi
divergences.  We study the following physical question.  Suppose that two
finite-temperature Gibbs states and their energy gaps are known.  Can this
information show that the corresponding low-energy spectral subspaces are
close, without a small operator-norm bound for the two Hamiltonians?  We give
such a test by using a local determinant defect of the R\'enyi kernel.

We do not compare different definitions of R\'enyi divergence.  Instead, we
ask how much of a divergence is seen by multiplicative spectral data.  The
simplest such datum is the determinant.  It gives a general lower bound.  The
gap between this bound and the exact divergence measures part of the spread
of the logarithmic spectrum.  This leads to a stability inequality.

The trace-inequality background also connects with the author's earlier work
on H\"older and Young inequalities for traces of operators
\cite{ManjeganiPositivity}.  That work established trace inequalities and
their equality conditions for matrices and for unital $C^*$-algebras carrying
a faithful tracial state.  We use the same tracial viewpoint here, but replace
one global trace inequality by determinant and local logarithmic data.

In earlier joint work with Reisizadeh~\cite{ReisizadehManjegani}, a
determinant lower bound was obtained for the Petz R\'enyi divergence of
faithful density matrices.  The present paper is self-contained and uses that result only as a
starting point.  We extend the Petz case to the full
$\alpha$--$z$ family, describe all equality cases, prove a quantitative
stability estimate, and obtain versions for finite and semifinite von Neumann
algebras.

In a finite von Neumann algebra, the Fuglede--Kadison determinant replaces the
normalized matrix determinant.  In a semifinite algebra, however, a global
determinant may vanish or fail to be finite.  We therefore use the local
quantities
\[
 \Lt(x)=\int_0^t\log\mu_s(x)\,ds,
 \qquad
 \Lambda(t;x)=\exp\bigl(\Lt(x)\bigr),
 \qquad
 \Dt(x)=\Lambda(t;x)^{1/t},
\]
where $\mu_s(x)$ denotes the generalized singular value function.  These
quantities are continuous analogues of partial products of singular values.

The paper has three parts.  We first prove the matrix results.  We then
develop the determinant method for finite and semifinite von Neumann
algebras.  Finally, an exact two-projection formula and a
spectral-flattening argument give local remainders in the semifinite setting.
We apply these results to Gibbs states and effective low-energy models.

The central implication may be summarized schematically as:
\[
\begin{gathered}
\text{small local R\'enyi determinant defect}\\
{}+\ \text{thermally resolved boundary gaps}
\end{gathered}
\quad\Longrightarrow\quad
\begin{gathered}
\text{stable low-energy}\\
\text{spectral subspaces}.
\end{gathered}
\]
Here the direction differs from the usual perturbation approach. We infer
the spectral geometry from thermal-state data rather than from a bound on
the perturbation of the Hamiltonian. The converse does not hold in general;
see Subsection~\ref{subsec:two-projection-remainder}. In the matrix case,
however, under strict boundary-gap assumptions, the zero-defect statement
is an equivalence; see \eqref{eq:angle-gap-equality}. We compare our result
with the Davis--Kahan theorem and ground-state fidelity in
Section~\ref{sec:relation-existing-work}.

\section{Matrix R\'enyi divergences}

Let $\mathbb{M}_d$ denote the algebra of $d\times d$ complex matrices, and let
\[
\mathcal{D}_d
=
\left\{
\rho\in\mathbb{M}_d^{+}:\operatorname{Tr}\rho=1
\right\}
\]
be its state space, where $\mathbb{M}_d^{+}$ denotes the cone of positive
semidefinite matrices. A state $\rho\in\mathcal{D}_d$ is called faithful if
$\rho>0$, or equivalently, if $\rho$ is invertible. For faithful states
$\rho,\sigma\in\mathcal{D}_d$, $\alpha>0$ with $\alpha\ne1$, and $z>0$, define
\begin{equation}\label{eq:Q-matrix}
 \Qaz(\rho,\sigma)
 :=\left(
 \sigma^{\frac{1-\alpha}{2z}}
 \rho^{\frac{\alpha}{z}}
 \sigma^{\frac{1-\alpha}{2z}}
 \right)^z
\end{equation}
and
\begin{equation}\label{eq:D-matrix}
 \Daz(\rho\Vert\sigma)
 :=\frac{1}{\alpha-1}\log\Tr\Qaz(\rho,\sigma).
\end{equation}
The choices $z=1$ and $z=\alpha$ give the Petz and sandwiched divergences,
respectively.  We work first with faithful states because negative powers and
logarithmic determinants otherwise require support conventions.


The determinant of the R\'enyi operator has a particularly simple form:
\begin{equation}\label{eq:det-Q}
 \det\Qaz(\rho,\sigma)
 = (\det\rho)^\alpha(\det\sigma)^{1-\alpha}.
\end{equation}
Notice that the right-hand side is independent of $z$.

We first recall the determinant estimate from
\cite{ReisizadehManjegani}.  In our notation, it is the
special case $z=1$ of the bound considered below.

\begin{proposition}[Petz determinant bound
{\cite{ReisizadehManjegani}}]\label{prop:old-petz}
Let $\rho,\sigma\in\D_d$ be faithful and let $\alpha>1$.  Then
\begin{equation}\label{eq:old-petz}
 D_{\alpha,1}(\rho\Vert\sigma)
 \geq
 \frac{1}{\alpha-1}
 \left[
 \log d+\frac{\alpha}{d}\log\det\rho
 +\frac{1-\alpha}{d}\log\det\sigma
 \right].
\end{equation}
\end{proposition}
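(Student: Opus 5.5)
The approach is AM--GM on the eigenvalues of the positive definite matrix $Q:=Q_{\alpha,1}(\rho,\sigma)=\sigma^{\frac{1-\alpha}{2}}\rho^{\alpha}\sigma^{\frac{1-\alpha}{2}}$, combined with the determinant identity \eqref{eq:det-Q}.

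Since $\rho,\sigma$ are faithful, $Q$ is positive definite. Let its eigenvalues be $\lambda_1,\dots,\lambda_d>0$. The arithmetic--geometric mean inequality gives
\[
\frac{1}{d}\Tr Q=\frac1d\sum_{i=1}^d\lambda_i\ \ge\ \Bigl(\prod_{i=1}^d\lambda_i\Bigr)^{1/d}=(\det Q)^{1/d}.
\]
Taking logarithms yields $\log\Tr Q\ge \log d+\frac1d\log\det Q$.

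Next I would compute $\det Q$ by multiplicativity of the determinant:
\[
\det Q=(\det\sigma)^{\frac{1-\alpha}{2}}(\det\rho)^{\alpha}(\det\sigma)^{\frac{1-\alpha}{2}}=(\det\rho)^\alpha(\det\sigma)^{1-\alpha}.
\]
This is the case $z=1$ of \eqref{eq:det-Q}. Substituting gives
\[
\log\Tr Q\ \ge\ \log d+\frac{\alpha}{d}\log\det\rho+\frac{1-\alpha}{d}\log\det\sigma .
\]

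Finally, because $\alpha>1$, the factor $\frac{1}{\alpha-1}$ is positive. Multiplying both sides by it preserves the inequality, and the left side becomes $D_{\alpha,1}(\rho\Vert\sigma)$ by \eqref{eq:D-matrix}. This proves \eqref{eq:old-petz}.

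There is no real obstacle here. The only point that needs care is the sign of $\alpha-1$: for $0<\alpha<1$ the same argument reverses the inequality.

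The AM--GM step never uses $z=1$. Since $\det Q_{\alpha,z}$ is independent of $z$, the argument extends verbatim to the whole $\alpha$--$z$ family. Its equality case is $Q_{\alpha,z}$ being a scalar multiple of the identity, i.e.\ $Q_{\alpha,z}=\frac{1}{d}\Tr Q_{\alpha,z}\cdot I$.
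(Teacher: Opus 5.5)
Your proof is correct and is essentially the paper's argument: the paper cites this bound from earlier work and obtains it as the case $z=1$ of Theorem~\ref{thm:matrix-det}, using the same AM--GM step $\Tr X\geq d(\det X)^{1/d}$ for $X=Q_{\alpha,z}(\rho,\sigma)$, the identity \eqref{eq:det-Q}, and division by $\alpha-1>0$. Your remarks on $z$-independence and on the equality case $Q_{\alpha,z}=cI$ also match Theorems~\ref{thm:matrix-det} and~\ref{thm:matrix-equality}.
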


The next theorem extends Proposition~\ref{prop:old-petz} from the Petz
divergence to every member of the $\alpha$--$z$ family.  In particular, the
choice $z=\alpha$ gives the corresponding bound for the sandwiched R\'enyi
divergence.

\begin{theorem}
\label{thm:matrix-det}
Let $\rho,\sigma\in\D_d$ be faithful, $\alpha>1$, and $z>0$.  Then
\begin{equation}\label{eq:matrix-det-bound}
 \Daz(\rho\Vert\sigma)
 \geq L_\alpha^{\det}(\rho,\sigma),
\end{equation}
where
\begin{equation}\label{eq:matrix-Ldet}
 L_\alpha^{\det}(\rho,\sigma)
 :=\frac{1}{\alpha-1}
 \left[
 \log d+\frac{\alpha}{d}\log\det\rho
 +\frac{1-\alpha}{d}\log\det\sigma
 \right].
\end{equation}
For $0<\alpha<1$, the direction in \eqref{eq:matrix-det-bound} is reversed.
\end{theorem}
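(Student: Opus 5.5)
The plan is to apply the AM--GM inequality to the eigenvalues of the positive definite matrix $\Qaz(\rho,\sigma)$ and then use the determinant identity \eqref{eq:det-Q}. Since $\rho,\sigma$ are faithful, the matrix $\sigma^{\frac{1-\alpha}{2z}}\rho^{\frac{\alpha}{z}}\sigma^{\frac{1-\alpha}{2z}}$ is positive definite. Hence $\Qaz(\rho,\sigma)$ is positive definite, with eigenvalues $\lambda_1,\dots,\lambda_d>0$. AM--GM then gives
\[
\frac{1}{d}\Tr\Qaz(\rho,\sigma)=\frac1d\sum_{j=1}^d\lambda_j\ \ge\ \Bigl(\prod_{j=1}^d\lambda_j\Bigr)^{1/d}=\bigl(\det\Qaz(\rho,\sigma)\bigr)^{1/d}.
\]

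Next I verify \eqref{eq:det-Q}, using multiplicativity of $\det$ and $\det(A^z)=(\det A)^z$ for $A>0$:
\[
\det\Qaz=\Bigl(\det\sigma^{\frac{1-\alpha}{2z}}\,\det\rho^{\frac{\alpha}{z}}\,\det\sigma^{\frac{1-\alpha}{2z}}\Bigr)^z=(\det\rho)^{\alpha}(\det\sigma)^{1-\alpha}.
\]
Taking logarithms in the AM--GM inequality yields
\[
\log\Tr\Qaz(\rho,\sigma)\ \ge\ \log d+\frac{\alpha}{d}\log\det\rho+\frac{1-\alpha}{d}\log\det\sigma .
\]

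For $\alpha>1$ the factor $\frac{1}{\alpha-1}$ is positive. Multiplying by it gives \eqref{eq:matrix-det-bound} with $L_\alpha^{\det}$ as in \eqref{eq:matrix-Ldet}. For $0<\alpha<1$ the factor is negative, so the inequality reverses. The inequality is valid for every $z>0$ because the determinant side does not depend on $z$. Proposition~\ref{prop:old-petz} is recovered at $z=1$.

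There is no real obstacle. The only point needing care is that the real power $z$ and the possibly negative exponent $\frac{1-\alpha}{2z}$ are well defined through functional calculus, which holds because of faithfulness. The equality case follows from the AM--GM equality case: it holds exactly when $\Qaz(\rho,\sigma)$ is a scalar multiple of the identity.
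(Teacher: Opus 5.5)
Your proposal is correct and follows essentially the same route as the paper: the paper's proof also applies AM--GM to the eigenvalues of $\Qaz(\rho,\sigma)$, uses the determinant identity \eqref{eq:det-Q}, takes logarithms, and divides by $\alpha-1$, noting that the direction is preserved for $\alpha>1$ and reversed for $0<\alpha<1$. Your explicit check of \eqref{eq:det-Q} via multiplicativity of the determinant and $\det(A^z)=(\det A)^z$ fills in a step that the paper only states.
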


\begin{proof}
For every positive matrix $X$, the arithmetic--geometric mean inequality for
its eigenvalues gives
\[
 \Tr X\geq d(\det X)^{1/d}.
\]
Apply this to $X=\Qaz(\rho,\sigma)$ and use \eqref{eq:det-Q}.  Taking
logarithms and dividing by $\alpha-1$ gives the assertion.  Division preserves
the direction for $\alpha>1$ and reverses it for $0<\alpha<1$.
\end{proof}


The equality condition can be written directly in terms of the two states.

\begin{theorem}
\label{thm:matrix-equality}
Under the assumptions of Theorem~\ref{thm:matrix-det}, equality holds in
\eqref{eq:matrix-det-bound} if and only if
\begin{equation}\label{eq:power-relation}
 \rho^\alpha=c\,\sigma^{\alpha-1}
\end{equation}
for some $c>0$.  Equivalently,
\begin{equation}\label{eq:escort-form}
 \rho=
 \frac{\sigma^{(\alpha-1)/\alpha}}
 {\Tr\sigma^{(\alpha-1)/\alpha}}.
\end{equation}
In particular, for a fixed $\sigma$, there is exactly one density matrix
$\rho$ for which equality holds, and this matrix commutes with $\sigma$.
\end{theorem}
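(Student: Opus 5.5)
Equality in \eqref{eq:matrix-det-bound} holds exactly when equality holds in the AM--GM step of the proof of Theorem~\ref{thm:matrix-det}, applied to $X=\Qaz(\rho,\sigma)$. Dividing by $\alpha-1\neq 0$ and taking logarithms are reversible operations. For a positive definite $X$, $\Tr X=d(\det X)^{1/d}$ holds if and only if all eigenvalues of $X$ coincide, that is, $X=\lambda I$ for some $\lambda>0$. So I will characterize when $\Qaz(\rho,\sigma)$ is a positive scalar multiple of the identity.

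First, since $t\mapsto t^z$ is a bijection of $(0,\infty)$ and functional calculus respects it, $\Qaz(\rho,\sigma)=\lambda I$ holds if and only if
\[
 \sigma^{\frac{1-\alpha}{2z}}\rho^{\frac{\alpha}{z}}\sigma^{\frac{1-\alpha}{2z}}=\lambda^{1/z}I .
\]
Multiplying on both sides by the invertible $\sigma^{\frac{\alpha-1}{2z}}$, this is equivalent to $\rho^{\alpha/z}=\lambda^{1/z}\sigma^{(\alpha-1)/z}$. Raising both sides to the power $z$, which is again a bijective functional calculus on positive definite matrices, gives $\rho^\alpha=\lambda\,\sigma^{\alpha-1}$. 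This is \eqref{eq:power-relation} with $c=\lambda$. Every step reverses: if $\rho^\alpha=c\sigma^{\alpha-1}$, then $\rho^{\alpha/z}=c^{1/z}\sigma^{(\alpha-1)/z}$, and hence $\Qaz(\rho,\sigma)=cI$. The only point needing care is that $(cA)^{1/z}=c^{1/z}A^{1/z}$ for a positive scalar $c$ and a positive definite $A$, which is immediate from the spectral theorem.

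Next, \eqref{eq:power-relation} implies $\rho=c^{1/\alpha}\sigma^{(\alpha-1)/\alpha}$. Taking traces and using $\Tr\rho=1$ fixes $c^{1/\alpha}=1/\Tr\sigma^{(\alpha-1)/\alpha}$, which gives \eqref{eq:escort-form}. Conversely, \eqref{eq:escort-form} gives \eqref{eq:power-relation} by raising to the power $\alpha$. Since the right-hand side of \eqref{eq:escort-form} is determined by $\sigma$ alone, and it is a faithful density matrix, there is exactly one $\rho$ attaining equality. It is a function of $\sigma$, so it commutes with $\sigma$.

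The argument is elementary and I expect no real obstacle. The one step to state carefully is the equivalence between $\Qaz=\lambda I$ and the power relation, since it relies on injectivity of the functional calculus $t\mapsto t^{z}$ and on invertibility of $\sigma$. Faithfulness of the states is assumed precisely for this. The same characterization holds for $0<\alpha<1$, because AM--GM equality does not depend on the sign of $\alpha-1$.
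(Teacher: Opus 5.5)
Your proof is correct and follows the paper's route: equality in the AM--GM step forces all eigenvalues of $Q_{\alpha,z}(\rho,\sigma)$ to coincide, so $Q_{\alpha,z}(\rho,\sigma)=cI$. Your write-up is more complete than the paper's matrix proof, which stops at $Q_{\alpha,z}(\rho,\sigma)=cI$. You also supply the functional-calculus equivalence with $\rho^\alpha=c\,\sigma^{\alpha-1}$ and the trace normalization that gives \eqref{eq:escort-form}; the paper carries out these steps only in the finite-algebra version, Theorem~\ref{thm:finite-det}.
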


\begin{proof}
Since $\Qaz(\rho,\sigma)$ is positive definite, let
$\lambda_1,\ldots,\lambda_d>0$ denote its eigenvalues. By the
arithmetic--geometric mean inequality,
\[
\frac{1}{d}\Tr \Qaz(\rho,\sigma)
=
\frac{1}{d}\sum_{j=1}^{d}\lambda_j
\geq
\left(\prod_{j=1}^{d}\lambda_j\right)^{1/d}
=
\bigl(\det \Qaz(\rho,\sigma)\bigr)^{1/d}.
\]
Equality holds if and only if $\lambda_1=\cdots=\lambda_d=c$
for some $c>0$. Since $\Qaz(\rho,\sigma)$ is positive definite, the
spectral theorem then implies that
\[
\Qaz(\rho,\sigma)=cI.
\]
Conversely, if $\Qaz(\rho,\sigma)=cI$ for some $c>0$, all its
eigenvalues are equal, and hence equality holds in the
arithmetic--geometric mean inequality.
\end{proof}

\begin{remark}
This equality problem is not the same as equality between the Petz,
sandwiched, and maximal R\'enyi divergences.  Here equality is measured against
a determinant bound.
\end{remark}

\begin{remark}[Escort-state interpretation]
Put $q=(\alpha-1)/\alpha\in(0,1)$.  The density matrix
\[
 \sigma_q:=\frac{\sigma^q}{\Tr\sigma^q}
\]
is commonly called the escort state of order $q$ associated with $\sigma$.
Thus Theorem~\ref{thm:matrix-equality} says that equality in the determinant
bound occurs exactly when $\rho$ is this escort state.  An important feature
is that the equality state is independent of $z$.
\end{remark}

\section{The determinant gap and stability}

For $\alpha>1$, define
\begin{equation}\label{eq:matrix-gap}
 \mathcal{G}_{\alpha,z}(\rho,\sigma)
 :=\Daz(\rho\Vert\sigma)-L_\alpha^{\det}(\rho,\sigma).
\end{equation}
If $X=\Qaz(\rho,\sigma)$, then
\begin{equation}\label{eq:exact-gap}
 \mathcal{G}_{\alpha,z}(\rho,\sigma)
 =\frac{1}{\alpha-1}
 \log\frac{\Tr X}{d(\det X)^{1/d}}.
\end{equation}
Thus the determinant gap is exactly the logarithmic gap between the arithmetic
and geometric means of the eigenvalues of $X$.
For $X>0$, put
\begin{equation}\label{eq:kappa}
 \kappa(X):=\frac{\lambda_{\max}(X)}{\lambda_{\min}(X)}
\end{equation}
and define the logarithmic spectral variance
\begin{equation}\label{eq:logvar}
 V_{\log}(X)
 :=\frac1d\sum_{j=1}^d
 \left(
  \log\lambda_j(X)-\frac1d\log\det X
 \right)^2.
\end{equation}

\begin{lemma}
\label{lem:stable-amgm}
Let $X\in\mathbb{M}_d$ be positive definite. Then
\begin{equation}\label{eq:stable-amgm}
\log\left(
\frac{\operatorname{Tr}X}
     {d(\det X)^{1/d}}
\right)
\geq
\log\left(
1+\frac{V_{\log}(X)}{2\kappa(X)}
\right).
\end{equation}
Both sides of \eqref{eq:stable-amgm} are nonnegative. Moreover, the
following conditions are equivalent:
\begin{enumerate}
    \item the left-hand side of \eqref{eq:stable-amgm} is zero;
    \item the right-hand side of \eqref{eq:stable-amgm} is zero;
    \item $X=cI$ for some $c>0$.
\end{enumerate}
\end{lemma}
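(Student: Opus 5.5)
Since $X>0$, we write its eigenvalues as $\lambda_j=g e^{x_j}$, where $g=(\det X)^{1/d}$ is the geometric mean and $x_j=\log\lambda_j-\frac1d\log\det X$. Then $\sum_j x_j=0$ and $V_{\log}(X)=\frac1d\sum_j x_j^2$. The ratio in \eqref{eq:stable-amgm} becomes
\[
\frac{\Tr X}{d(\det X)^{1/d}}=\frac1d\sum_{j=1}^d e^{x_j}.
\]
Because $\log$ is increasing, it suffices to prove the pointwise-averaged inequality
\[
\frac1d\sum_j e^{x_j}\ \ge\ 1+\frac{1}{2\kappa(X)}\cdot\frac1d\sum_j x_j^2 .
\]
The key step is a second-order Taylor bound for the exponential with a quantitative remainder. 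Write $e^{x}=1+x+\frac{x^2}{2}e^{\xi}$ with $\xi$ between $0$ and $x$. Averaging over $j$ kills the linear terms, since $\sum_j x_j=0$. It then remains to show $e^{\xi_j}\ge 1/\kappa(X)$ for each $j$. Since $\xi_j$ lies between $0$ and $x_j$, we have $e^{\xi_j}\ge \min(1,e^{x_j})\ge e^{\min_k x_k}=\lambda_{\min}/g$. Because $g\le\lambda_{\max}$, this is at least $\lambda_{\min}/\lambda_{\max}=1/\kappa(X)$. The one point needing care is whether $\min_k x_k\le 0$, which holds because the $x_k$ sum to zero; so $\min(1,e^{x_j})\ge e^{\min_k x_k}$ is valid. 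This gives the inequality, and taking logarithms yields \eqref{eq:stable-amgm}.

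Nonnegativity is immediate. The left side is $\ge0$ by the AM--GM inequality already used in Theorem~\ref{thm:matrix-det}. The right side is $\ge0$ because $V_{\log}\ge0$ and $\kappa\ge1$.

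For the equivalences, we prove (3)$\Rightarrow$(1), (1)$\Rightarrow$(2), and (2)$\Rightarrow$(3).
\begin{enumerate}
\item[] (3)$\Rightarrow$(1): if $X=cI$, both sides vanish by direct computation.
\item[] (1)$\Rightarrow$(2): if the left side is zero, the proved inequality together with nonnegativity of the right side forces the right side to be zero.
\item[] (2)$\Rightarrow$(3): if the right side is zero, then $V_{\log}(X)=0$, so all $x_j=0$. Hence all eigenvalues equal $g$, and by the spectral theorem $X=gI$, as in the proof of Theorem~\ref{thm:matrix-equality}.
\end{enumerate}
The main obstacle is the lower bound on the Taylor remainder, specifically the uniform bound $e^{\xi_j}\ge1/\kappa(X)$. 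The rest is bookkeeping.
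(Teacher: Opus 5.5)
Your proposal is correct and follows essentially the same route as the paper. Both arguments center the log-eigenvalues at $\frac1d\log\det X$, apply a second-order Taylor bound for $e^{x}$ whose second-derivative factor is bounded below by $1/\kappa(X)$, and average so the linear terms cancel. Your bound $e^{\xi_j}\ge\min(1,e^{x_j})\ge\lambda_{\min}/g\ge1/\kappa(X)$ is a slightly more explicit version of the paper's observation that $|u_j|\le\log\kappa(X)$.
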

\begin{proof}
Write
\[
 m=\frac1d\sum_{j=1}^d\log\lambda_j(X),
 \qquad
 u_j=\log\lambda_j(X)-m.
\]
Then $\sum_j u_j=0$ and $|u_j|\leq\log\kappa(X)$.  Taylor's theorem, together
with the lower bound for the second derivative of the exponential on this
interval, gives
\[
 e^{u_j}\geq1+u_j+\frac{e^{-\log\kappa(X)}}{2}u_j^2
 =1+u_j+\frac{u_j^2}{2\kappa(X)}.
\]
Averaging over $j$ and using $\sum_j u_j=0$, we obtain
\[
 \frac1d\sum_{j=1}^d e^{u_j}
 \geq1+\frac{V_{\log}(X)}{2\kappa(X)}.
\]
The left side equals $\Tr X/[d(\det X)^{1/d}]$.  Taking logarithms proves
\eqref{eq:stable-amgm}.  Moreover, $V_{\log}(X)=0$ exactly when all eigenvalues
of $X$ are equal.
\end{proof}

\begin{theorem}
\label{thm:matrix-stability}
Let $\rho,\sigma\in\D_d$ be faithful, $\alpha>1$, and $z>0$.  With
$X=\Qaz(\rho,\sigma)$,
\begin{equation}\label{eq:matrix-stability}
 \mathcal{G}_{\alpha,z}(\rho,\sigma)
 \geq\frac{1}{\alpha-1}
 \log\left(1+\frac{V_{\log}(X)}{2\kappa(X)}\right).
\end{equation}
Consequently, a small determinant gap forces the logarithmic spectrum of
$Q_{\alpha,z}(\rho,\sigma)$ to have small variance, provided its condition
number is controlled.
\end{theorem}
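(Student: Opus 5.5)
The inequality follows by combining the exact gap formula \eqref{eq:exact-gap} with Lemma~\ref{lem:stable-amgm}; there is no genuinely new estimate to prove. First, since $\rho$ and $\sigma$ are faithful, the operator $\sigma^{\frac{1-\alpha}{2z}}\rho^{\frac{\alpha}{z}}\sigma^{\frac{1-\alpha}{2z}}$ is positive definite, being a congruence of a positive definite matrix by an invertible one. Hence $X=\Qaz(\rho,\sigma)$, its $z$-th power, is positive definite. This makes $\kappa(X)$ and $V_{\log}(X)$ well defined and puts $X$ within the scope of Lemma~\ref{lem:stable-amgm}.

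Next, I would verify \eqref{eq:exact-gap} directly from the definitions. We have $\Daz(\rho\Vert\sigma)=\frac{1}{\alpha-1}\log\Tr X$. By \eqref{eq:det-Q},
\[
\frac{1}{d}\log\det X=\frac{\alpha}{d}\log\det\rho+\frac{1-\alpha}{d}\log\det\sigma .
\]
Therefore $L_\alpha^{\det}(\rho,\sigma)=\frac{1}{\alpha-1}\bigl[\log d+\tfrac1d\log\det X\bigr]$. Subtracting gives $\mathcal{G}_{\alpha,z}(\rho,\sigma)=\frac{1}{\alpha-1}\log\frac{\Tr X}{d(\det X)^{1/d}}$.

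Then I would apply \eqref{eq:stable-amgm} to this $X$ and multiply both sides by $1/(\alpha-1)$. This factor is positive because $\alpha>1$, so the direction of the inequality is preserved, and \eqref{eq:matrix-stability} follows.

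For the qualitative consequence, I would rearrange the bound as
\[
V_{\log}(X)\le 2\kappa(X)\bigl(e^{(\alpha-1)\mathcal{G}_{\alpha,z}(\rho,\sigma)}-1\bigr).
\]
This makes explicit that a small determinant gap forces a small logarithmic variance whenever $\kappa(X)$ is bounded.

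The only point requiring care is the positivity of the factor $\alpha-1$ and of $X$. Everything else is bookkeeping, so I expect no real obstacle.
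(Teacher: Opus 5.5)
Your proposal is correct and follows essentially the same route as the paper, whose proof simply combines \eqref{eq:exact-gap} with Lemma~\ref{lem:stable-amgm}. You additionally verify \eqref{eq:exact-gap} from \eqref{eq:det-Q}, check positive definiteness of $X$, and note that $\alpha-1>0$ preserves the inequality; these are correct details that the paper leaves implicit, and your rearranged bound $V_{\log}(X)\le 2\kappa(X)\bigl(e^{(\alpha-1)\mathcal{G}_{\alpha,z}(\rho,\sigma)}-1\bigr)$ makes the stated consequence explicit.
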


\begin{proof}
Combine \eqref{eq:exact-gap} with Lemma~\ref{lem:stable-amgm}.
\end{proof}

\begin{remark}
The constant in \eqref{eq:matrix-stability} is not claimed to be optimal.  A
sharp remainder in terms of spectral diameter, condition number, or a
unitarily invariant distance is a natural next question.
\end{remark}

\section{Finite von Neumann algebras}

Let $(\M,\tau)$ be a finite von Neumann algebra with a faithful normal tracial
state, normalized by $\tau(I)=1$.  For an invertible $x\in\M$, its
Fuglede--Kadison determinant is
\begin{equation}\label{eq:FK-det}
 \Delta_\tau(x):=\exp\tau(\log|x|).
\end{equation}
If $x>0$, Jensen's inequality gives
\begin{equation}\label{eq:FK-amgm}
 \Delta_\tau(x)\leq\tau(x).
\end{equation}

For the moment, let $a,b\in\M$ be positive and boundedly invertible, with
$\tau(a)=\tau(b)=1$.  Equivalently, there are constants
$\varepsilon_a,\varepsilon_b>0$ such that
\[
 a\geq\varepsilon_a I,
 \qquad
 b\geq\varepsilon_b I.
\]
This assumption is stronger than faithfulness.  It ensures that all positive
and negative powers occurring below belong to $\M$.  Define
\begin{align}
 Q_{\alpha,z}^{\tau}(a,b)
 &:=\left(
 b^{\frac{1-\alpha}{2z}}
 a^{\frac{\alpha}{z}}
 b^{\frac{1-\alpha}{2z}}
 \right)^z,                                                   \label{eq:Q-finite}\\
 D_{\alpha,z}^{\tau}(a\Vert b)
 &:=\frac1{\alpha-1}\log\tau(Q_{\alpha,z}^{\tau}(a,b)).       \label{eq:D-finite}
\end{align}

\begin{theorem}
\label{thm:finite-det}
Let $a,b\in\mathcal{M}$ be positive invertible elements with
$\tau(a)=\tau(b)=1$. For $\alpha>1$ and $z>0$, we have
\begin{equation}\label{eq:finite-det-bound}
 D_{\alpha,z}^{\tau}(a\Vert b)
 \geq
 \frac{
 \alpha\log\Delta_\tau(a)
 +(1-\alpha)\log\Delta_\tau(b)
 }{\alpha-1}.
\end{equation}
Equality holds if and only if
\begin{equation}\label{eq:finite-power-relation}
 a^\alpha=c\,b^{\alpha-1}
\end{equation}
for some $c>0$.  Equivalently,
\begin{equation}\label{eq:finite-escort-form}
 a=
 \frac{b^{(\alpha-1)/\alpha}}
 {\tau\bigl(b^{(\alpha-1)/\alpha}\bigr)}.
\end{equation}
Thus, for each fixed $b$, equality holds for exactly one positive
element $a$ satisfying $\tau(a)=1$. This element commutes with $b$
and does not depend on $z$.
\end{theorem}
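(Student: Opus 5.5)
The plan is to copy the matrix argument, with the Fuglede--Kadison determinant in place of the normalized determinant. Put $X=Q_{\alpha,z}^{\tau}(a,b)$. Since $a,b\ge\varepsilon I$, the element $y:=b^{\frac{1-\alpha}{2z}}a^{\frac{\alpha}{z}}b^{\frac{1-\alpha}{2z}}$ is positive and boundedly invertible, so $X=y^z$ is too.

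The first step is the determinant identity $\Delta_\tau(X)=\Delta_\tau(a)^\alpha\Delta_\tau(b)^{1-\alpha}$. I would use three standard properties of $\Delta_\tau$ on invertible elements:
\begin{itemize}
\item multiplicativity, $\Delta_\tau(xy)=\Delta_\tau(x)\Delta_\tau(y)$ (Fuglede--Kadison);
\item $\Delta_\tau(y^z)=\Delta_\tau(y)^z$ for positive invertible $y$, which follows directly from $\log(y^z)=z\log y$ and functional calculus;
\item $\Delta_\tau(a^{p})=\Delta_\tau(a)^{p}$ for all real $p$, by the same reasoning.
\end{itemize}
Combining these gives
\[
\log\Delta_\tau(X)=z\Bigl[\tfrac{1-\alpha}{z}\log\Delta_\tau(b)+\tfrac{\alpha}{z}\log\Delta_\tau(a)\Bigr],
\]
which is the identity. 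The inequality \eqref{eq:finite-det-bound} then follows from \eqref{eq:FK-amgm}: $\log\tau(X)\ge\log\Delta_\tau(X)$, and dividing by $\alpha-1>0$ gives the claim.

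For the equality case, equality holds exactly when $\tau(X)=\exp\tau(\log X)$. This is Jensen's inequality for the strictly concave function $\log$, applied to the spectral measure $\mu$ of $X$ with respect to $\tau$, which is a probability measure on $[\varepsilon,\|X\|]$. Equality in Jensen forces $\mu$ to be a point mass at some $c>0$. Because $\tau$ is faithful, this means $X=cI$.

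It remains to translate $X=cI$ into the relation between $a$ and $b$. If $X=cI$, then $y=c^{1/z}I$, so $a^{\alpha/z}=c^{1/z}b^{\frac{\alpha-1}{z}}$. Raising both sides to the power $z$ (the right side is a positive function of $b$) gives $a^\alpha=c\,b^{\alpha-1}$. Conversely, if $a^\alpha=cb^{\alpha-1}$, then $a$ is a function of $b$ and commutes with it. In that case $a^{\alpha/z}=c^{1/z}b^{(\alpha-1)/z}$, so $y=c^{1/z}I$ and $X=cI$.

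For the escort form, take the $\alpha$-th root to get $a=c^{1/\alpha}b^{(\alpha-1)/\alpha}$. The normalization $\tau(a)=1$ fixes $c^{1/\alpha}=1/\tau(b^{(\alpha-1)/\alpha})$. This gives uniqueness, commutation with $b$, and independence of $z$.

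The only step that needs care is the rigorous equality case of Jensen's inequality in $(\M,\tau)$. I would handle it through the scalar spectral measure as described: if $\mu$ is not a point mass, strict concavity gives strict inequality, and faithfulness of $\tau$ converts "$\mu=\delta_c$" into $X=cI$. Everything else is routine functional calculus.
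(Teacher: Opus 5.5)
Your proposal is correct and follows the same route as the paper. Fuglede--Kadison multiplicativity gives $\Delta_\tau(X)=\Delta_\tau(a)^\alpha\Delta_\tau(b)^{1-\alpha}$, and the tracial AM--GM inequality \eqref{eq:FK-amgm} then gives the bound. In the equality case, strict concavity of $\log$ together with faithfulness of $\tau$ forces $X=cI$. Functional calculus and the normalization $\tau(a)=1$ then yield the power relation and the escort form.

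Your equality-case argument via the scalar spectral measure of $X$ is more explicit than the paper's, which only refers back to the matrix argument. One small correction: that measure is supported in $[\|X^{-1}\|^{-1},\|X\|]$, not in $[\varepsilon,\|X\|]$ with the $\varepsilon$ coming from $a,b$.
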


\begin{proof}
Let $X=Q_{\alpha,z}^{\tau}(a,b)$.  Multiplicativity of the
Fuglede--Kadison determinant gives
\[
 \Delta_\tau(X)
 =\Delta_\tau(a)^\alpha\Delta_\tau(b)^{1-\alpha}.
\]
Together with \eqref{eq:FK-amgm}, this yields
\[
 \tau(X)\geq
 \Delta_\tau(a)^\alpha\Delta_\tau(b)^{1-\alpha}.
\]
Taking logarithms proves \eqref{eq:finite-det-bound}.  Strict concavity of the
logarithm with respect to the faithful trace shows that equality in
\eqref{eq:FK-amgm} holds exactly when $X=cI$.  The argument of
Theorem~\ref{thm:matrix-equality}, which uses only functional calculus and
bounded invertibility, then gives \eqref{eq:finite-power-relation}.

To determine the constant, take the positive $\alpha$-th root in
\eqref{eq:finite-power-relation}.  We obtain
\[
 a=c^{1/\alpha}b^{(\alpha-1)/\alpha}.
\]
Since $\tau(a)=1$,
\[
 1=c^{1/\alpha}\tau\bigl(b^{(\alpha-1)/\alpha}\bigr),
\]
and therefore
\begin{equation}\label{eq:finite-equality-constant}
 c=
 \left[
 \tau\bigl(b^{(\alpha-1)/\alpha}\bigr)
 \right]^{-\alpha}.
\end{equation}
This proves \eqref{eq:finite-escort-form}.  Conversely,
\eqref{eq:finite-escort-form} implies
\eqref{eq:finite-power-relation} with the constant in
\eqref{eq:finite-equality-constant}, and substitution into
\eqref{eq:Q-finite} gives $Q_{\alpha,z}^{\tau}(a,b)=cI$.  Hence equality holds
in \eqref{eq:finite-det-bound}.
\end{proof}

\begin{remark}
If $q=(\alpha-1)/\alpha$, then
\[
 b_q:=\frac{b^q}{\tau(b^q)}
\]
is the natural tracial escort element associated with $b$.  Theorem
\ref{thm:finite-det} says that equality holds precisely when $a=b_q$.
\end{remark}

\begin{remark}
In a finite von Neumann algebra, the condition $s(a)=I$ means that $a$ is
faithful, but it does not imply that $a\geq\varepsilon I$ for some
$\varepsilon>0$.  Hence a faithful element may have zero in its spectrum and
its negative powers may be unbounded.  The present theorem is therefore first
stated for boundedly invertible elements.  We retain this hypothesis in the
finite-algebra equality theorem: extending its exact escort characterization
to faithful elements with unbounded negative powers would require a separate
spectral-regularization and limit argument.  The later semifinite results use
the weaker $(\alpha,z)$-admissibility and local log-integrability conditions,
but they do not assert such an extension of the equality characterization.
\end{remark}

We next obtain a finite-algebra stability result.  For $X>0$ define
\begin{equation}\label{eq:finite-logvar}
 V_{\tau,\log}(X)
 :=\tau\left((\log X-\tau(\log X)I)^2\right)
\end{equation}
and
\begin{equation}\label{eq:finite-kappa}
 \kappa(X):=\|X\|\,\|X^{-1}\|.
\end{equation}

\begin{theorem}
\label{thm:finite-stability}
Let $X\in\M$ be positive and boundedly invertible.  Then
\begin{equation}\label{eq:finite-stability}
 \log\frac{\tau(X)}{\Delta_\tau(X)}
 \geq
 \log\left(
 1+\frac{V_{\tau,\log}(X)}{2\kappa(X)}
 \right).
\end{equation}
Equality in \eqref{eq:finite-stability} at zero holds if and only if $X$ is a
positive scalar multiple of $I$.
Consequently, for $X=Q_{\alpha,z}^{\tau}(a,b)$ and $\alpha>1$, the difference
between the two sides of \eqref{eq:finite-det-bound} is at least
\begin{equation}\label{eq:finite-renyi-stability}
 \frac1{\alpha-1}
 \log\left(
 1+\frac{V_{\tau,\log}(X)}{2\kappa(X)}
 \right).
\end{equation}
\end{theorem}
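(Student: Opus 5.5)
We transplant the proof of Lemma~\ref{lem:stable-amgm} to the tracial setting, replacing eigenvalue averages by the trace $\tau$ applied to functional calculus of the self-adjoint element $\log X$. Since $X$ is positive and boundedly invertible, its spectrum lies in $[\|X^{-1}\|^{-1},\|X\|]\subset(0,\infty)$. Hence $\log X\in\M$ is self-adjoint and $\Delta_\tau(X)=\exp\tau(\log X)$ is finite and positive. Put $m=\tau(\log X)$ and $U=\log X-mI$, so that $\tau(U)=0$ and $X=e^{m}e^{U}$. Then
\[
 \frac{\tau(X)}{\Delta_\tau(X)}=\tau\bigl(e^{U}\bigr).
\]

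The key step is a scalar inequality applied through functional calculus. The spectrum of $\log X$ lies in $[-\log\|X^{-1}\|,\log\|X\|]$, an interval of length $\log\kappa(X)$. The number $m$ lies in the same interval, because $\tau$ is a state and the interval is convex. Hence every $u$ in the spectrum of $U$ satisfies $|u|\le\log\kappa(X)$. For such $u$, Taylor's theorem with the bound $e^{\xi}\ge e^{-\log\kappa(X)}=\kappa(X)^{-1}$ on the second derivative gives
\[
 e^{u}\ge 1+u+\frac{u^{2}}{2\kappa(X)}.
\]
This scalar inequality holds on the spectrum of $U$, so the continuous functional calculus gives the operator inequality
\[
 e^{U}\ge I+U+\frac{U^{2}}{2\kappa(X)}.
\]
Applying the positive functional $\tau$, and using $\tau(U)=0$ and $\tau(U^{2})=V_{\tau,\log}(X)$, we obtain
\[
 \tau\bigl(e^{U}\bigr)\ge 1+\frac{V_{\tau,\log}(X)}{2\kappa(X)}.
\]
Taking logarithms gives \eqref{eq:finite-stability}. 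The right-hand side is nonnegative, so the left-hand side is nonnegative as well; this recovers \eqref{eq:FK-amgm}.

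For the equality-at-zero statement, we show that either side vanishes exactly when $X$ is scalar. If $X=cI$, then $U=0$ and both sides vanish. Conversely, suppose the left-hand side is zero. Then the displayed inequality forces $V_{\tau,\log}(X)=\tau(U^{2})=0$. The same conclusion holds directly if the right-hand side is zero. Faithfulness of $\tau$ then gives $U^{2}=0$, and since $U$ is self-adjoint, $U=0$. Thus $\log X=mI$ and $X=e^{m}I$.

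For the R\'enyi consequence, we take $X=Q_{\alpha,z}^{\tau}(a,b)$, which is positive and boundedly invertible. In the proof of Theorem~\ref{thm:finite-det}, multiplicativity of the Fuglede--Kadison determinant gave $\log\Delta_\tau(X)=\alpha\log\Delta_\tau(a)+(1-\alpha)\log\Delta_\tau(b)$. Therefore the difference between the two sides of \eqref{eq:finite-det-bound} equals
\[
 \frac{1}{\alpha-1}\log\frac{\tau(X)}{\Delta_\tau(X)}.
\]
Dividing \eqref{eq:finite-stability} by $\alpha-1>0$ gives \eqref{eq:finite-renyi-stability}.

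The only step needing care is the spectral bound $|u|\le\log\kappa(X)$, which requires $m$ to lie in the convex hull of the spectrum of $\log X$. This follows because $\tau$ is a state and $\log X$ is sandwiched between scalar multiples of $I$.
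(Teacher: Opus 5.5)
Your proof is correct and follows the paper's argument essentially step for step. You center $\log X$ at $\tau(\log X)$, apply the scalar bound $e^{u}\ge 1+u+u^{2}/(2\kappa(X))$ on the spectrum via functional calculus, take traces, and use faithfulness of $\tau$ for the equality case. You also make explicit two points the paper leaves implicit: why $\|\log X-\tau(\log X)I\|\le\log\kappa(X)$ (since $\tau(\log X)$ lies in the convex hull of the spectrum), and why the R\'enyi gap equals $\frac{1}{\alpha-1}\log\bigl(\tau(X)/\Delta_\tau(X)\bigr)$.
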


\begin{proof}
Put $h=\log X-\tau(\log X)I$.  Then $\tau(h)=0$ and
$\|h\|\leq\log\kappa(X)$.  Functional calculus applied to the scalar estimate
\[
 e^t\geq1+t+\frac{t^2}{2\kappa(X)},
 \qquad |t|\leq\log\kappa(X),
\]
gives
\[
 \tau(e^h)\geq1+\frac{\tau(h^2)}{2\kappa(X)}.
\]
Since $X=\Delta_\tau(X)e^h$, the left side is
$\tau(X)/\Delta_\tau(X)$.  This proves \eqref{eq:finite-stability}; the last
claim follows from the proof of Theorem~\ref{thm:finite-det}.  Finally,
$\tau(h^2)=0$ holds if and only if $h=0$, because $\tau$ is faithful.  This is
equivalent to $X=\Delta_\tau(X)I$ and proves the equality statement.
\end{proof}

\section{Local determinants in the semifinite setting}

Let $(\M,\tau)$ be a semifinite von Neumann algebra equipped with a
faithful normal semifinite trace $\tau$, and let $S(\M,\tau)$ denote
the algebra of all $\tau$-measurable operators affiliated with $\M$.
For $x\in S(\M,\tau)$, we denote its generalized singular value
function by
\[
t\longmapsto\mu_t(x),\qquad t>0.
\]

For $r>0$, set
\[
\log_+ r:=\max\{\log r,0\},
\qquad
\log_- r:=\max\{-\log r,0\},
\]
and use the convention $\log 0=-\infty$.

Following Dodds, Dodds, Sukochev, and Zanin
\cite[Section~3, equations~(3.1)--(3.4)]{DoddsEtAl}, we define
\begin{equation}\label{eq:Llog-plus}
L_{\log_+}(\M,\tau)
:=
\left\{
x\in S(\M,\tau):
\log_+|x|\in L_1(\M,\tau)+\M
\right\}.
\end{equation}
Since
\[
\mu_t(\log_+|x|)=\log_+\mu_t(x),
\]
the standard characterization of $L_1(\M,\tau)+\M$ yields
\begin{equation}\label{eq:Llog-plus-mu}
x\in L_{\log_+}(\M,\tau)
\quad\text{if and only if}\quad
\int_0^t\log_+\mu_s(x)\,ds<\infty
\quad\text{for every }t>0.
\end{equation}
Equivalently, it is enough to require
\[
\int_0^1\log_+\mu_s(x)\,ds<\infty.
\]
The space $L_{\log_+}(\M,\tau)$ is the standard natural domain for
logarithmic submajorization and the associated determinant function.

We recall the determinant function introduced in this setting by
Dodds, Dodds, Sukochev, and Zanin
\cite[Section~4]{DoddsEtAl}, together with the normalized form that
will be used throughout this paper.

\begin{definition}
\label{def:local-det}
Let $x\in L_{\log_+}(\M,\tau)$ and $t>0$. Define
\begin{equation}\label{eq:Lambda-local}
\begin{split}
\Lt(x)
&:=\int_0^t\log\mu_s(x)\,ds\\
&:=\int_0^t\log_+\mu_s(x)\,ds
   -\int_0^t\log_-\mu_s(x)\,ds
   \in[-\infty,\infty).
\end{split}
\end{equation}
The associated determinant function and its normalized geometric mean
are defined, respectively, by
\begin{equation}\label{eq:Delta-local}
\Lambda(t;x):=\exp\bigl(\Lt(x)\bigr),
\qquad
\Dt(x):=\Lambda(t;x)^{1/t}
=\exp\left(\frac{1}{t}\Lt(x)\right),
\end{equation}
where $\exp(-\infty):=0$.
\end{definition}
Condition \eqref{eq:Llog-plus-mu} guarantees that the positive part of the
integral in \eqref{eq:Lambda-local} is finite, so the expression can never be
of the indeterminate form $\infty-\infty$.  The negative part may be infinite.
In particular, if $\mu_s(x)=0$ on a subset of $(0,t)$ of positive measure,
then
\[
 \Lt(x)=-\infty,
 \qquad
 \Lambda(t;x)=\Dt(x)=0.
\]
If instead $\int_0^t\log_-\mu_s(x)\,ds<\infty$, then $\Lt(x)$ is real and
both determinants are strictly positive.  For example, if the support
projection of $x$ has finite trace and $t>\tau(\supp x)$, then the local
determinants vanish.

We now introduce a class of admissible pairs and a quantity that
measures the defect in the local determinant inequality.

\begin{definition}
\label{def:local-product-defect}
Fix $t>0$. A pair
$
(A,B)\in
L_{\log_+}(\M,\tau)\times L_{\log_+}(\M,\tau)
$
is called \emph{$t$-determinant-admissible} if
$\ell_t(A),\ell_t(B)\in\mathbb{R}$.
For a $t$-determinant-admissible pair $(A,B)$ and $r>0$, define its
\emph{local product defect} by
\begin{equation}\label{eq:local-product-defect}
\mathfrak{D}_t^{(r)}(A,B)
:=
\frac{2}{r}
\left(
\ell_t(A)+\ell_t(B)-\ell_t(|AB|)
\right).
\end{equation}
If $\ell_t(|AB|)=-\infty$, we set $\mathfrak{D}_t^{(r)}(A,B):=+\infty$.
\end{definition}

\begin{remark}\label{rem:product-defect-well-defined}
The defect in Definition~\ref{def:local-product-defect} is well defined
and belongs to $[0,\infty]$. Indeed, $L_{\log_+}(\M,\tau)$ is an
algebra by \cite[Proposition~3.1]{DoddsEtAl}, and hence
$AB\in L_{\log_+}(\M,\tau)$.

Moreover, the local determinant inequality
\cite[Theorem~4.2]{DoddsEtAl} gives
$\Lambda(t;AB)
\leq
\Lambda(t;A)\Lambda(t;B)$.
If $\ell_t(|AB|)>-\infty$, then, using
$\ell_t(AB)=\ell_t(|AB|)$ and taking logarithms, we obtain
\[
\ell_t(|AB|)
\leq
\ell_t(A)+\ell_t(B).
\]
Therefore, $\mathfrak{D}_t^{(r)}(A,B)\geq0$.
If $\ell_t(|AB|)=-\infty$, the defect is $+\infty$ by definition.
\end{remark}

\begin{proposition}\cite[Section~2]{FackKosaki}\label{prop:matrix-local-det}
Let $X\in\mathbb M_d$ and equip $\mathbb M_d$ with the usual, unnormalized
trace.  Write its singular values as
\[
 s_1(X)\geq s_2(X)\geq\cdots\geq s_d(X)\geq0.
\]
Then
\[
 \mu_s(X)=s_j(X)\quad\text{for }j-1\leq s<j,
 \qquad j=1,\ldots,d,
\]
and $\mu_s(X)=0$ for $s\geq d$.  Consequently, for every integer
$1\leq k\leq d$,
\begin{align}
 \ell_k(X)&=\sum_{j=1}^k\log s_j(X),\label{eq:matrix-ell-k}\\
 \Lambda(k;X)&=\prod_{j=1}^k s_j(X),\label{eq:matrix-Lambda-k}\\
 \Delta_k(X)&=\left(\prod_{j=1}^k s_j(X)\right)^{1/k}.
 \label{eq:matrix-Delta-k}
\end{align}
These identities use the extended conventions above; in particular, they
remain valid when one of the first $k$ singular values is zero.
\end{proposition}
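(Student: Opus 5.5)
The plan is to compute the distribution function of $|X|$ on $\mathbb M_d$ with the unnormalized trace, and then read off the decreasing rearrangement. By the polar decomposition, $\mu_s(X)=\mu_s(|X|)$. So we may replace $X$ by $|X|=(X^*X)^{1/2}$, whose eigenvalues are $s_1(X)\ge\cdots\ge s_d(X)\ge 0$.

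Recall that
\[
\mu_s(X)=\inf\{\lambda\ge0:\ \tau(e_{(\lambda,\infty)}(|X|))\le s\}.
\]
By the spectral theorem, $\tau(e_{(\lambda,\infty)}(|X|))=\#\{j: s_j(X)>\lambda\}$, since $\tau$ counts rank. This is a right-continuous, nonincreasing, integer-valued step function of $\lambda$.

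Now fix $j-1\le s<j$. For $\lambda\ge s_j(X)$, at most $j-1$ singular values exceed $\lambda$, and $j-1\le s$, so such $\lambda$ is admissible. For $\lambda<s_j(X)$, at least $j$ singular values exceed $\lambda$, and $j>s$, so such $\lambda$ is not admissible. Hence the infimum equals $s_j(X)$.

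For $s\ge d$, every $\lambda\ge0$ is admissible because $\#\{j:s_j>\lambda\}\le d\le s$. Hence $\mu_s(X)=0$. This establishes the step-function description.

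The identities then follow by integrating a step function. For $1\le k\le d$,
\[
\ell_k(X)=\int_0^k\log\mu_s(X)\,ds=\sum_{j=1}^k\int_{j-1}^j\log s_j(X)\,ds=\sum_{j=1}^k\log s_j(X).
\]
This uses the convention $\log 0=-\infty$ and the definition through $\log_\pm$. The positive parts are finite sums, so there is no $\infty-\infty$ ambiguity. If some $s_j(X)=0$ with $j\le k$, then $\mu_s=0$ on an interval of positive length, both sides equal $-\infty$, and the identity still holds.

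Exponentiating gives $\Lambda(k;X)=\prod_{j\le k}s_j(X)$, with $\exp(-\infty)=0$ matching a zero product. Taking the $1/k$-th power gives the formula for $\Delta_k(X)$.

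The only point needing care is the boundary behavior of the infimum at the endpoints $s=j-1$, and when several singular values are repeated. If $s_{j}=s_{j+1}$, the counting argument above still gives the same value $s_j$ on both adjacent intervals, so no separate treatment is needed. The strict inequality in $e_{(\lambda,\infty)}$ is what makes the half-open intervals $[j-1,j)$ come out right.
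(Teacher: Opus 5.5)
Your proof is correct, and it is essentially the standard computation the paper delegates to Fack--Kosaki. You use the distribution-function characterization $\mu_s(X)=\inf\{\lambda\geq0:\tau(e_{(\lambda,\infty)}(|X|))\leq s\}$, identify the trace of that spectral projection with the counting function $\#\{j:s_j(X)>\lambda\}$ to get the step-function description, and integrate with the $\log_\pm$ conventions, which handles a zero singular value correctly.
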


Thus $\Lambda(t;x)$ is the continuous counterpart of a partial product of
singular values, while $\Delta_t(x)$ is its normalized geometric mean.

\begin{definition}
\label{def:weak-logmaj}
Let $x,y\in L_{\log_+}(\M,\tau)$. We say that $x$ is
\emph{weakly logarithmically majorized} by $y$, and write
\[
x\prec_{w\log}y,
\]
if
\begin{equation}\label{eq:weak-logmaj}
\Lt(x)\leq\Lt(y)
\qquad\text{for every }t>0.
\end{equation}
Equivalently,
\begin{equation}\label{eq:weak-logmaj-Lambda}
\Lambda(t;x)\leq\Lambda(t;y)
\qquad\text{for every }t>0.
\end{equation}
The inequality in \eqref{eq:weak-logmaj} is understood with respect
to the usual order on $[-\infty,\infty)$.
\end{definition}

This relation is standard in the literature and is also called
\emph{logarithmic submajorization}; see
\cite[Section~3, Remark~3.3]{DoddsEtAl}. It is often denoted by
\[
x\prec\!\prec_{\log}y.
\]
Definition~\ref{def:weak-logmaj} also applies to noninvertible
operators. In particular, if one of the logarithmic integrals is
equal to $-\infty$, the inequality is interpreted in the
extended-real sense. Thus, for positive operators, regularization by
$x+\varepsilon I$ is only a tool that may be used in proofs and is
not part of the definition.

\begin{definition}
\label{def:full-logmaj}
Assume that $T:=\tau(I)<\infty$, and let
$x,y\in L_{\log_+}(\M,\tau)$.  We say that $x$ is logarithmically majorized
by $y$, and write
$x\prec_{\log}y$, if
\begin{equation}\label{eq:full-logmaj}
 \Lt(x)\leq\Lt(y)\quad(0<t<T),
 \qquad
 \ell_T(x)=\ell_T(y).
\end{equation}
Equivalently, $x\prec_{w\log}y$ and
\begin{equation}\label{eq:full-logmaj-det}
 \Lambda(T;x)=\Lambda(T;y).
\end{equation}
When both endpoint logarithmic integrals are finite, the endpoint condition
is also equivalent to $\Delta_T(x)=\Delta_T(y)$.
\end{definition}

Equality in \eqref{eq:full-logmaj-det} is still meaningful when both
sides are zero. However, when studying equality and rigidity, we
usually impose the nondegeneracy conditions
\begin{equation}\label{eq:nondegenerate-full-logmaj}
\ell_T(x)>-\infty,
\qquad
\ell_T(y)>-\infty.
\end{equation}
Without these conditions, equality of the endpoint determinants may
reduce to the trivial identity $0=0$ and provide no useful spectral
information.
\begin{example}[The matrix case]
\label{ex:matrix-logmaj}
Let $X,Y\in\mathbb{M}_d$, and write their singular values in decreasing
order:
\[
s_1(X)\geq\cdots\geq s_d(X),
\qquad
s_1(Y)\geq\cdots\geq s_d(Y).
\]
By Proposition~\ref{prop:matrix-local-det}, the relation
$X\prec_{w\log}Y$ is equivalent to
\begin{equation}\label{eq:matrix-weak-logmaj}
\prod_{j=1}^k s_j(X)
\leq
\prod_{j=1}^k s_j(Y),
\qquad k=1,\ldots,d.
\end{equation}
Indeed, the functions $t\mapsto\ell_t(X)$ and
$t\mapsto\ell_t(Y)$ are affine on each interval $[k-1,k]$.
Therefore, it is enough to compare them at the integer endpoints.

Similarly,
$X\prec_{\log}Y$
if and only if
\begin{equation}\label{eq:matrix-full-logmaj}
\begin{cases}
\displaystyle
\prod_{j=1}^k s_j(X)
\leq
\prod_{j=1}^k s_j(Y),
& 1\leq k<d,\\[6pt]
\displaystyle
\prod_{j=1}^d s_j(X)
=
\prod_{j=1}^d s_j(Y).
\end{cases}
\end{equation}
Since
\[
\prod_{j=1}^d s_j(X)=|\det X|,
\]
the endpoint condition is equivalent to
$|\det X|=|\det Y|.$
In particular, if $X$ and $Y$ are positive definite, it reduces to
$\det X=\det Y$.
\end{example}
\begin{remark}
If $\tau(I)=\infty$, there is no finite terminal point corresponding to
$k=d$ in \eqref{eq:matrix-full-logmaj}.  Hence weak logarithmic majorization
is the canonical local relation on a general semifinite algebra.  A full
relation can still be used inside a finite corner $e\M e$, with terminal
point $T=\tau(e)$, or after imposing a separately stated normalization at
infinity.  In this paper we do not use the notation $\prec_{\log}$ on an
infinite trace space unless such an endpoint condition has been specified.
\end{remark}

\section{The semifinite formulation and \texorpdfstring{$z$}{z}-monotonicity}

We first specify when the R\'enyi operator is well defined.  Products of
$\tau$-measurable operators are understood as the closures of their natural
products.  With this convention, $S(\M,\tau)$ is a $*$-algebra
\cite{FackKosaki}.
For later use, define the locally log-integrable class
\begin{equation}\label{eq:Llog-class}
 L_{\log}(\M,\tau)
 :=\left\{x\in L_{\log_+}(\M,\tau):
 \int_0^t\log_-\mu_s(x)\,ds<\infty\text{ for every }t>0\right\}.
\end{equation}
Thus $\ell_t(x)\in\mathbb R$ for every $t>0$ whenever
$x\in L_{\log}(\M,\tau)$.

The sandwiched operator appearing below and the support condition for
$\alpha>1$ are standard in the theory of $(\alpha,z)$-R\'enyi
divergences; see, for example, \cite{AudenaertDatta,HiaiJencova}.
To work with possibly unbounded $\tau$-measurable operators and local
logarithmic determinants, we introduce the following admissibility
conditions.

\begin{definition}
\label{def:admissible-pair}
Fix $\alpha>0$, $\alpha\ne1$, and $z>0$, and let
$a,b\in L_{\log_+}(\M,\tau)^+$.
Suppose first that $\alpha>1$. Set $e=\supp b$ and assume that
\begin{equation}\label{eq:support-condition}
\supp a\leq e,
\qquad
b^{-\frac{\alpha-1}{2z}}
\in S(e\M e,\tau|_{e\M e}),
\end{equation}
where the negative power is defined by functional calculus in the
corner $e\M e$. If $0<\alpha<1$, no negative power occurs, so no
additional support condition is required.

In either case, define
\begin{equation}\label{eq:C-semifinite}
C_{\alpha,z}(a,b)
:=
\overline{
b^{\frac{1-\alpha}{2z}}
a^{\frac{\alpha}{z}}
b^{\frac{1-\alpha}{2z}}
}.
\end{equation}
When $\alpha>1$, this operator is first formed in $e\M e$ and then
extended by zero on $I-e$. It is positive, since it is the closed
positive operator associated with
\[
\left(
a^{\frac{\alpha}{2z}}
b^{\frac{1-\alpha}{2z}}
\right)^*
\left(
a^{\frac{\alpha}{2z}}
b^{\frac{1-\alpha}{2z}}
\right).
\]

We call $(a,b)$ \emph{$(\alpha,z)$-admissible} if
\begin{equation}\label{eq:Q-logplus-condition}
Q_{\alpha,z}(a,b)
:=
C_{\alpha,z}(a,b)^z
\in L_{\log_+}(\M,\tau).
\end{equation}
An $(\alpha,z)$-admissible pair is called
\emph{locally log-integrable} if
\begin{equation}\label{eq:strong-admissibility}
a,b,Q_{\alpha,z}(a,b)\in L_{\log}(\M,\tau).
\end{equation}
\end{definition}

Under Definition~\ref{def:admissible-pair}, the semifinite R\'enyi operator is
\begin{equation}\label{eq:Q-semifinite}
 Q_{\alpha,z}(a,b)
 =\left(
 b^{\frac{1-\alpha}{2z}}
 a^{\frac{\alpha}{z}}
 b^{\frac{1-\alpha}{2z}}
 \right)^z.
\end{equation}

\begin{proposition}[Sufficient conditions for admissibility]
\label{prop:safe-admissible}
Fix $\alpha>0$, $\alpha\ne1$, and $z>0$. Each of the following
conditions gives an $(\alpha,z)$-admissible pair.

\begin{enumerate}[label=(\roman*)]
\item Let $a,b\in\M^+$ be invertible. Equivalently, assume that
$
a\geq\varepsilon_a I,~
b\geq\varepsilon_b I
$
for some $\varepsilon_a,\varepsilon_b>0$. Then $(a,b)$ is a locally
log-integrable $(\alpha,z)$-admissible pair in $(\M,\tau)$.

\item Let $e\in\M$ be a projection with $\tau(e)<\infty$, and let
$a,b\in(e\M e)^+$ be invertible in the corner $e\M e$. Then $(a,b)$
is a locally log-integrable $(\alpha,z)$-admissible pair in $(e\M e,\tau|_{e\M e})$.

After extending $a$ and $b$ by zero on $I-e$, the resulting pair is
$(\alpha,z)$-admissible in $(\M,\tau)$. Moreover, its local
logarithmic quantities are finite for
$0<t\leq\tau(e)$.

\end{enumerate}
\end{proposition}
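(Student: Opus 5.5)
The plan is to verify the definitions directly: show that $Q_{\alpha,z}(a,b)$ is a bounded operator with a bounded inverse, then read off the local logarithmic integrals from two-sided bounds on $\mu_s$. For (i), the spectral theorem gives $\varepsilon_a I\le a\le\|a\|I$ and $\varepsilon_b I\le b\le\|b\|I$. Hence every real power $a^{p}$, $b^{q}$ lies in $\M$, is positive, and is invertible in $\M$. For $\alpha>1$ we have $\supp b=I$, so $\supp a\le e=I$ holds trivially. Also $b^{-(\alpha-1)/(2z)}\in\M\subset S(\M,\tau)$, which is condition \eqref{eq:support-condition}. For $0<\alpha<1$ there is nothing to check. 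The product $b^{\frac{1-\alpha}{2z}}a^{\frac{\alpha}{z}}b^{\frac{1-\alpha}{2z}}$ is a product of bounded operators, so no closure is needed. It is of the form $y^*y$ with $y=a^{\frac{\alpha}{2z}}b^{\frac{1-\alpha}{2z}}$ invertible in $\M$. Therefore $C_{\alpha,z}(a,b)\ge \|y^{-1}\|^{-2}I$, and $Q_{\alpha,z}(a,b)=C_{\alpha,z}(a,b)^z$ satisfies
\[
 \delta I\le Q_{\alpha,z}(a,b)\le \|C_{\alpha,z}(a,b)\|^z I,\qquad \delta:=\|y^{-1}\|^{-2z}>0 .
\]

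The key step is a general observation. If $x\in\M$ and $x\ge\delta I$ with $\delta>0$, then for every $s$ in the range where $\mu_s$ is computed on a unit with $\tau(I)\ge s$, one has $\delta\le\mu_s(x)\le\|x\|$. I would derive this from the standard formula $\mu_s(x)=\inf\{\|x(I-p)\|:\tau(p)\le s\}$. The upper bound is immediate. For the lower bound, take a projection $p$ with $\tau(p)\le s<\tau(I)$; then $I-p\ne0$, and $\|x(I-p)\|\ge\delta$ since $(I-p)x^2(I-p)\ge\delta^2(I-p)$. Then $\log_+\mu_s(x)\le\log_+\|x\|$ and $\log_-\mu_s(x)\le\log_-\delta$, so both integrals over $(0,t)$ are finite for every finite $t$. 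This puts $x$ in $L_{\log}(\M,\tau)$, provided $t\le\tau(I)$. Applying it to $a$, $b$ and $Q_{\alpha,z}(a,b)$ gives the locally log-integrable admissibility.

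I expect the main obstacle to be the case $\tau(I)<\infty$ with $t>\tau(I)$. There $\mu_s=0$ for $s\ge\tau(I)$, so $\ell_t=-\infty$ and membership in $L_{\log}$ as literally defined in \eqref{eq:Llog-class} fails. I would handle this by reading "for every $t>0$" as "for every $0<t\le\tau(I)$", the same convention used for the corner in (ii). Alternatively, if $\tau(I)=\infty$, the lower bound holds for all $s$ and no issue arises. I would state this convention explicitly.

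For (ii), I would apply (i) inside the finite von Neumann algebra $(e\M e,\tau|_{e\M e})$ with unit $e$, where $a,b$ are invertible. This gives local log-integrability there for $0<t\le\tau(e)$. After extending by zero, $\supp b=e$ because $b$ is invertible in $e\M e$, and $\supp a\le e$. The negative power $b^{-(\alpha-1)/(2z)}$ is computed in $e\M e$ and is bounded, so \eqref{eq:support-condition} holds. The operator $Q_{\alpha,z}(a,b)$ formed in $e\M e$ and extended by zero is bounded, hence lies in $\M\subset L_{\log_+}(\M,\tau)$. Finally, the singular value functions computed in $\M$ and in $e\M e$ agree for $s<\tau(e)$ and vanish for $s\ge\tau(e)$; this follows from the infimum formula, since $x=exe$. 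Hence the local quantities in $(\M,\tau)$ coincide with those in the corner and are finite exactly for $0<t\le\tau(e)$, which is the stated range.
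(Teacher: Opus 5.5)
Your proposal is correct and follows the paper's own route: bounded invertibility of $a$, $b$ and of $C_{\alpha,z}(a,b)=y^*y$ gives two-sided bounds on $\mu_s$, hence finite local logarithmic integrals, and (ii) is obtained by applying (i) in the corner $e\M e$. Your caveat is correct and more careful than the paper. When $\tau(I)<\infty$ one has $\mu_s\equiv0$ for $s\ge\tau(I)$, so membership in $L_{\log}(\M,\tau)$ as literally defined holds only if $t$ is restricted to $(0,\tau(I)]$. The paper's proof of (i) and the first claim of (ii) pass over this point.
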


\begin{proof}
For part \textup{(i)}, all positive and negative powers of $a$ and
$b$ that occur in Definition~\ref{def:admissible-pair} are bounded
elements of $\M$. In particular, when $\alpha>1$, we have
\[
\supp a=\supp b=I,
\qquad
b^{-\frac{\alpha-1}{2z}}\in\M.
\]
Hence
\[
C_{\alpha,z}(a,b)
=
b^{\frac{1-\alpha}{2z}}
a^{\frac{\alpha}{z}}
b^{\frac{1-\alpha}{2z}}
\]
is a positive invertible element of $\M$. It follows that
\[
Q_{\alpha,z}(a,b)=C_{\alpha,z}(a,b)^z
\]
is also positive and invertible in $\M$. Therefore the positive and
negative logarithmic parts of $a$, $b$, and
$Q_{\alpha,z}(a,b)$ are bounded. Thus
$a,b,Q_{\alpha,z}(a,b)\in L_{\log}(\M,\tau)$,
which proves \textup{(i)}.

For part \textup{(ii)}, the same argument applies in the finite von
Neumann algebra $e\M e$, whose identity is $e$. Thus
$
a,b,Q_{\alpha,z}(a,b)
\in L_{\log}(e\M e,\tau|_{e\M e}).
$

After extension by zero on $I-e$, these operators belong to
$L_{\log_+}(\M,\tau)$ and satisfy the required support condition.
Hence the extended pair is $(\alpha,z)$-admissible in $(\M,\tau)$.

Since $a$, $b$, and $Q_{\alpha,z}(a,b)$ are bounded above and bounded
away from zero on $e$, their generalized singular values are bounded
above and below by positive constants on
$0<t<\tau(e)$. Their local logarithmic integrals are therefore finite
for every $0<t\leq\tau(e)$.
\end{proof}

\begin{remark}
\label{rem:faithful-not-enough}
Suppose that $\alpha>1$. Then the power of $b$ appearing in
\eqref{eq:Q-semifinite} is negative. Even when $b$ is faithful, that
is, $\supp b=I$, the operator
$
b^{-\frac{\alpha-1}{2z}}
$
may be unbounded and may fail to be $\tau$-measurable. Therefore,
faithfulness of $b$ alone does not guarantee that the expression in
\eqref{eq:Q-semifinite} is well defined in $S(\M,\tau)$. The
measurability requirement in \eqref{eq:support-condition} must thus
be imposed as a separate domain condition.
\end{remark}

The trace H\"older and Young inequalities in
\cite{ManjeganiPositivity} are part of the finite tracial background for the
next argument.  They do not by themselves imply logarithmic majorization:
the stronger determinant form of the Araki--Lieb--Thirring inequality is the
ingredient that controls every initial segment of the generalized singular
value function.

\begin{lemma}
\label{lem:ALT-determinant}
Let $x,y\in L_{\log_+}(\M,\tau)^+$ and let $p\geq1$.  Then
\begin{equation}\label{eq:ALT-determinant}
 \Lambda(t;|xy|^p)\leq\Lambda(t;x^p y^p),
 \qquad t>0.
\end{equation}
Moreover, if $h\in L_{\log_+}(\M,\tau)^+$ and $q>0$, then
\begin{equation}\label{eq:Lambda-power}
 \Lambda(t;h^q)=\Lambda(t;h)^q.
\end{equation}
\end{lemma}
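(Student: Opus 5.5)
The second claim \eqref{eq:Lambda-power} is the easier one, so I would settle it first. For $h\geq0$ measurable and $q>0$, the map $\lambda\mapsto\lambda^q$ is continuous and increasing on $[0,\infty)$ and vanishes at $0$. The standard functional-calculus property of generalized singular values therefore gives $\mu_s(h^q)=\mu_s(h)^q$ for every $s>0$. It follows that $\log_\pm\mu_s(h^q)=q\log_\pm\mu_s(h)$, and in particular $h^q\in L_{\log_+}(\M,\tau)$. Integrating over $(0,t)$ gives $\ell_t(h^q)=q\,\ell_t(h)$, including the case where both sides equal $-\infty$. Exponentiating with the convention $\exp(-\infty)=0$ yields $\Lambda(t;h^q)=\Lambda(t;h)^q$.

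For \eqref{eq:ALT-determinant}, I would reduce to the log-majorization form of the Araki--Lieb--Thirring inequality. In the matrix case this is the classical statement $|xy|^p\prec_{w\log}|x^py^p|$ for $p\geq1$, equivalently $(y x^2 y)^{p/2}\prec_{w\log}|x^p y^p|$. One proves it by the antisymmetric-tensor argument: first establish the largest singular value bound $\|xy\|^p\leq\|x^py^p\|$, which comes from the Cordes inequality / operator monotonicity of $t\mapsto t^{1/p}$, and then apply it to $\wedge^k$. The semifinite analogue $\ell_t(|xy|^p)\leq\ell_t(x^py^p)$ for $\tau$-measurable $x,y\ge0$ is established in the literature; see the log-submajorization results of Fack--Kosaki and their extension to $L_{\log_+}$ by Dodds, Dodds, Sukochev and Zanin \cite{DoddsEtAl}, which build on Kosaki's ALT inequality. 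I would invoke this result directly. Combined with $\ell_t(x^py^p)=\ell_t(|x^py^p|)$ and $\mu_s(|xy|^p)=\mu_s(xy)^p$, exponentiation gives the claim. First, though, I must check that all operators lie in $L_{\log_+}$. By the first step $x^p,y^p\in L_{\log_+}$, and $L_{\log_+}$ is an algebra, so $x^py^p\in L_{\log_+}$ and likewise $|xy|^p\in L_{\log_+}$.

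If a self-contained argument is wanted instead of a citation, I would proceed by approximation. First take $x,y\in\M^+$ bounded and $\tau$-finitely supported, or work in a finite corner. There the Fuglede--Kadison/Haagerup--Schultz Brown-measure framework gives the determinant ALT through the operator-norm inequality $\|x y\|^p\le\|x^py^p\|$ and a $t$-local "exterior power" substitute. The standard substitute is the characterization
\[
\ell_t(z)=\inf\{\log\Delta_\tau(\ldots)\}
\]
over projections of trace $t$, or the Fack--Kosaki approach via $\int_0^t\log\mu_s=\lim_{r\to0}\frac1r\log\int_0^t\mu_s^r$ together with Hölder-type majorization. After that, I would pass to general measurable $x,y$ by truncation: use spectral cut-offs $x_n=x\,\chi_{[1/n,n]}(x)$ together with the normality of $\mu_s$ under monotone convergence, and use Fatou/monotone convergence for the $\log_+$ and $\log_-$ parts separately.

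I expect the main obstacle to be this limit step, specifically the $\log_-$ parts, where an infinite negative integral must be handled in the extended-real sense. The upper semicontinuity of $\ell_t$ along a decreasing regularization such as $x+\varepsilon$ is what makes the inequality survive the passage $\varepsilon\to0$. I would first try the regularization $x+\varepsilon I$ and $y+\varepsilon I$, restricted to a finite-trace corner.
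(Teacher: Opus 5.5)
Your proposal is correct and matches the paper's proof. The power identity comes from $\mu_s(h^q)=\mu_s(h)^q$, and the inequality is the Araki--Lieb--Thirring log-submajorization of Dodds, Dodds, Sukochev and Zanin \cite{DoddsEtAl}, cited by the paper as Proposition~5.1 and Corollary~5.3; your check that $x^p y^p$ and $|xy|^p$ lie in $L_{\log_+}(\M,\tau)$ is a reasonable addition, while the optional self-contained approximation sketch is unnecessary and too vague (e.g.\ the unspecified infimum formula for $\ell_t$) to serve as a proof by itself.
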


\begin{proof}
Inequality \eqref{eq:ALT-determinant} is the determinant version of the
Araki--Lieb--Thirring inequality; see
\cite[Proposition~5.1 and Corollary~5.3]{DoddsEtAl}.  The power identity
follows from
\[
 \mu_s(h^q)=\mu_s(h)^q
\]
and the definition of $\Lambda(t;h)$.
\end{proof}

The following result is a consequence of the
Araki--Lieb--Thirring log-majorization for $\tau$-measurable
operators; see \cite[Proposition~5.1]{DoddsEtAl}. Its connection with
the operators $Q_{\alpha,z}$ in the matrix setting was studied by
Hiai \cite{Hiai2019}.

\begin{corollary}[$z$-monotonicity in a finite corner]
\label{thm:finite-corner-z}
Let $e\in\M$ be a projection with
\[
T:=\tau(e)<\infty,
\]
and let $a,b\in(e\M e)^+$ be invertible in the corner $e\M e$.
Fix $\alpha>0$, $\alpha\ne1$. If
$
0<z_1\leq z_2,
$
then
\begin{equation}\label{eq:local-z-monotonicity}
Q_{\alpha,z_2}(a,b)
\prec_{\log}
Q_{\alpha,z_1}(a,b),
\end{equation}
where logarithmic majorization is taken in the finite von Neumann
algebra $(e\M e,\tau|_{e\M e})$. In particular,
\begin{equation}\label{eq:Lambda-z}
\Lt\bigl(Q_{\alpha,z_2}(a,b)\bigr)
\leq
\Lt\bigl(Q_{\alpha,z_1}(a,b)\bigr),
\qquad 0<t<T.
\end{equation}
At the endpoint, we have
\begin{equation}\label{eq:z-endpoint-equality}
\ell_T\bigl(Q_{\alpha,z}(a,b)\bigr)
=
\alpha\ell_T(a)+(1-\alpha)\ell_T(b),
\qquad z>0.
\end{equation}
\end{corollary}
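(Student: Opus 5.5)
We work throughout in the finite von Neumann algebra $(e\M e,\tau|_{e\M e})$ with identity $e$ and terminal point $T=\tau(e)$. There $a,b$ are boundedly invertible, so every power below is a bounded invertible element. By Proposition~\ref{prop:safe-admissible}(ii), all local logarithmic integrals are finite for $0<t\le T$. The proof has two parts: the weak log-majorization \eqref{eq:Lambda-z}, obtained from the determinant Araki--Lieb--Thirring inequality, and the endpoint identity \eqref{eq:z-endpoint-equality}, obtained from multiplicativity of the Fuglede--Kadison determinant.

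\emph{Weak part.} Put
\[
x:=a^{\frac{\alpha}{2z_2}},\qquad y:=b^{\frac{1-\alpha}{2z_2}},\qquad p:=z_2/z_1\ge1 .
\]
Then
\[
Q_{\alpha,z_2}(a,b)=\bigl(yx^2y\bigr)^{z_2}=|xy|^{2z_2},
\qquad
Q_{\alpha,z_1}(a,b)=|x^py^p|^{2z_1}.
\]
Lemma~\ref{lem:ALT-determinant} gives $\Lambda(t;|xy|^p)\le\Lambda(t;x^py^p)$. Note that $\Lambda(t;x^py^p)=\Lambda(t;|x^py^p|)$, since singular values of an operator and of its modulus coincide. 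Raising both sides to the power $2z_1>0$ and applying the power identity \eqref{eq:Lambda-power} twice, we get
\[
\Lambda\bigl(t;|xy|^{2pz_1}\bigr)\le\Lambda\bigl(t;|x^py^p|^{2z_1}\bigr).
\]
Since $2pz_1=2z_2$, this is $\Lambda(t;Q_{\alpha,z_2})\le\Lambda(t;Q_{\alpha,z_1})$. All quantities are finite and positive, so taking logarithms yields \eqref{eq:Lambda-z} for $0<t\le T$.

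\emph{Endpoint part.} On the finite algebra $e\M e$ with unnormalized trace, $\mu_s(h)$ for $0<s<T$ is the decreasing rearrangement of the spectrum of a positive $h$ with respect to $\tau$. Hence $\ell_T(h)=\tau(\log h)$ for positive invertible $h$, which is $T$ times the logarithm of the normalized Fuglede--Kadison determinant. The key step is to check this identification: apply the spectral-distribution formula $\int_0^T f(\mu_s(h))\,ds=\tau(f(h))$ with $f=\log$. This is legitimate because the spectrum of $h$ lies in a compact subset of $(0,\infty)$, so $\log$ is bounded there.

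It then remains to show that $\ell_T$ is additive on products of invertible elements and homogeneous under powers. This follows from the multiplicativity of the Fuglede--Kadison determinant used in the proof of Theorem~\ref{thm:finite-det}, applied after normalizing $\tau$ on $e\M e$. The computation is
\[
\ell_T(Q_{\alpha,z})=z\,\ell_T(C_{\alpha,z})=z\Bigl(\tfrac{1-\alpha}{z}\ell_T(b)+\tfrac{\alpha}{z}\ell_T(a)\Bigr),
\]
which is \eqref{eq:z-endpoint-equality}. In particular, the right-hand side does not depend on $z$, so the endpoint values for $z_1$ and $z_2$ agree. Combined with the weak part, this gives \eqref{eq:local-z-monotonicity} in the sense of Definition~\ref{def:full-logmaj}.

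I expect the main obstacle to be bookkeeping rather than new ideas. The ALT lemma is stated in the ambient algebra, so we must justify applying it inside the corner. This is done by viewing $e\M e$ as a semifinite algebra in its own right, where Lemma~\ref{lem:ALT-determinant} applies verbatim. We must also confirm that the endpoint is correctly placed at $T$ with the unnormalized trace.
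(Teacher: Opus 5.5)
Your proposal is correct and follows the paper's proof. Your $x,y$ are exactly the paper's $X^{r_2},Y^{r_2}$, with $X=a^{\alpha/2}$, $Y=b^{(1-\alpha)/2}$, $r_i=1/z_i$. The weak part is the same determinant Araki--Lieb--Thirring step with $p=z_2/z_1$ followed by the power identity, and the endpoint is the same Fuglede--Kadison multiplicativity computation. Your explicit identification $\ell_T(h)=\tau(\log h)$ in the corner, which also covers the negative powers of $b$ when $\alpha>1$, is a detail the paper leaves implicit.
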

\begin{proof}
Put
\[
 r_i:=\frac1{z_i}\quad(i=1,2),
 \qquad
 X:=a^{\alpha/2},
 \qquad
 Y:=b^{(1-\alpha)/2}.
\]
Then $0<r_2\leq r_1$ and
\begin{equation}\label{eq:Q-as-H-square}
 Q_{\alpha,z_i}(a,b)
 =
 \left(|X^{r_i}Y^{r_i}|^{1/r_i}\right)^2.
\end{equation}
Set $p=r_1/r_2\geq1$.  Apply
Lemma~\ref{lem:ALT-determinant} to $X^{r_2}$ and $Y^{r_2}$.  For every
$0<t\leq T$,
\[
 \Lambda\left(t;|X^{r_2}Y^{r_2}|^p\right)
 \leq
 \Lambda\left(t;X^{r_1}Y^{r_1}\right).
\]
Using \eqref{eq:Lambda-power} and then raising both sides to the positive
power $1/r_1$ gives
\begin{equation}\label{eq:H-r-monotonicity}
 \Lambda\left(t;|X^{r_2}Y^{r_2}|^{1/r_2}\right)
 \leq
 \Lambda\left(t;|X^{r_1}Y^{r_1}|^{1/r_1}\right).
\end{equation}
Here we used the fact that an operator and its modulus have the same
generalized singular values.  Applying the power rule once more to
\eqref{eq:Q-as-H-square} proves the weak logarithmic majorization in
\eqref{eq:local-z-monotonicity}.

It remains to verify equality at $t=T$.  Multiplicativity of the
Fuglede--Kadison determinant in the finite algebra $e\M e$ gives
\[
 \ell_T\left(|X^rY^r|^{2/r}\right)
 =\frac2r\bigl(r\ell_T(X)+r\ell_T(Y)\bigr)
 =\alpha\ell_T(a)+(1-\alpha)\ell_T(b).
\]
The right-hand side is independent of $r$, proving
\eqref{eq:z-endpoint-equality}.  Definition~\ref{def:full-logmaj} now upgrades
the weak relation to full logarithmic majorization.
\end{proof}

\begin{example}[Matrix specialization]
\label{ex:matrix-z-monotonicity}
Let $a,b\in\mathbb{M}_d$ be positive definite, and fix
$\alpha>0$ with $\alpha\ne1$. If
$0<z_1\leq z_2$,
then Corollary~\ref{thm:finite-corner-z}, applied with
$e=I$ and the usual unnormalized trace, gives
$
Q_{\alpha,z_2}(a,b)
\prec_{\log}
Q_{\alpha,z_1}(a,b).
$
Equivalently,
\[
\prod_{j=1}^k
s_j\bigl(Q_{\alpha,z_2}(a,b)\bigr)
\leq
\prod_{j=1}^k
s_j\bigl(Q_{\alpha,z_1}(a,b)\bigr),
\qquad 1\leq k<d,
\]
while
\[
\prod_{j=1}^d
s_j\bigl(Q_{\alpha,z_2}(a,b)\bigr)
=
\prod_{j=1}^d
s_j\bigl(Q_{\alpha,z_1}(a,b)\bigr).
\]
This is the familiar matrix form of the $z$-monotonicity obtained
from the Araki--Lieb--Thirring log-majorization; see
\cite{Hiai2019}.
\end{example}

The following semifinite version of the $z$-monotonicity is a direct
consequence of the Araki--Lieb--Thirring inequality for
$\tau$-measurable operators; see
\cite[Proposition~5.1]{DoddsEtAl}. We state it in terms of the
admissibility conditions introduced above.

\begin{corollary}
\label{thm:semifinite-z-monotonicity}
Fix $\alpha>0$ with $\alpha\ne1$, and let
$0<z_1\leq z_2$.
Suppose that $(a,b)$ is both $(\alpha,z_1)$-admissible and
$(\alpha,z_2)$-admissible. Assume further that
\begin{equation}\label{eq:semifinite-z-hypotheses}
a^{\alpha/2},\ b^{(1-\alpha)/2}
\in L_{\log_+}(\M,\tau).
\end{equation}
Then
\begin{equation}\label{eq:semifinite-z-monotonicity}
Q_{\alpha,z_2}(a,b)
\prec_{w\log}
Q_{\alpha,z_1}(a,b).
\end{equation}
Equivalently,
\begin{equation}\label{eq:semifinite-z-ell}
\ell_t\bigl(Q_{\alpha,z_2}(a,b)\bigr)
\leq
\ell_t\bigl(Q_{\alpha,z_1}(a,b)\bigr)
\qquad\text{for every }t>0.
\end{equation}
\end{corollary}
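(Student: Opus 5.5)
The plan follows the finite-corner argument of Corollary~\ref{thm:finite-corner-z}, dropping the endpoint step. Put
\[
 X:=a^{\alpha/2},\qquad Y:=b^{(1-\alpha)/2},\qquad r_i:=1/z_i \ (i=1,2),
\]
so that $0<r_2\le r_1$. For $\alpha>1$, $Y$ is understood as the operator formed in the corner $e\M e$ and extended by zero on $I-e$. By \eqref{eq:semifinite-z-hypotheses}, both $X$ and $Y$ lie in $L_{\log_+}(\M,\tau)^+$. Since $\mu_s(h^q)=\mu_s(h)^q$ for $h\ge 0$, the powers $X^{r}$ and $Y^{r}$ also lie in $L_{\log_+}(\M,\tau)^+$ for every $r>0$.

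The first step is to identify, for each $i$,
\[
Q_{\alpha,z_i}(a,b)=\bigl(|X^{r_i}Y^{r_i}|^{1/r_i}\bigr)^2 .
\]
By Definition~\ref{def:admissible-pair}, $C_{\alpha,z_i}(a,b)$ is the closed positive operator $(X^{r_i}Y^{r_i})^*(X^{r_i}Y^{r_i})=|X^{r_i}Y^{r_i}|^2$. Then $Q_{\alpha,z_i}=C_{\alpha,z_i}^{z_i}=|X^{r_i}Y^{r_i}|^{2/r_i}$ by functional calculus. This step uses that products are taken as closures in the $*$-algebra $S(\M,\tau)$. In the case $\alpha>1$ it also uses the measurability condition \eqref{eq:support-condition}, which makes $Y^{r_i}$ $\tau$-measurable. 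Both admissibility hypotheses are needed here, one for each $z_i$.

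Next, apply the determinant Araki--Lieb--Thirring inequality \eqref{eq:ALT-determinant} of Lemma~\ref{lem:ALT-determinant} with $x=X^{r_2}$, $y=Y^{r_2}$ and $p=r_1/r_2\ge1$. This gives
\[
\Lambda(t;|X^{r_2}Y^{r_2}|^p)\le \Lambda(t;X^{r_1}Y^{r_1})\qquad\text{for all } t>0 .
\]
Replace the right side by $\Lambda(t;|X^{r_1}Y^{r_1}|)$, since an operator and its modulus have the same singular values. Then use the power rule \eqref{eq:Lambda-power} on both sides and raise to the power $2/r_1$. The left side becomes $\Lambda(t;|X^{r_2}Y^{r_2}|^{2/r_2})$ and the right side becomes $\Lambda(t;|X^{r_1}Y^{r_1}|^{2/r_1})$. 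By the identification above, this is exactly $\Lambda(t;Q_{\alpha,z_2})\le\Lambda(t;Q_{\alpha,z_1})$ for every $t>0$.

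Taking logarithms in $[-\infty,\infty)$, with $\log 0=-\infty$, gives \eqref{eq:semifinite-z-ell}. This is weak logarithmic majorization in the sense of Definition~\ref{def:weak-logmaj}. No endpoint condition is claimed, so finiteness of $\tau(I)$ is never needed.

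The main obstacle is the rigorous identification of $Q_{\alpha,z_i}$ with $|X^{r_i}Y^{r_i}|^{2/r_i}$ for unbounded operators. I would handle it using the fact that, in $S(\M,\tau)$, the closure of $T^*T$ equals $|T|^2$, together with the fact that $Y$ lives in $e\M e$ and $\supp a\le e$. I also need the powers to stay in $L_{\log_+}(\M,\tau)$, which is required for Lemma~\ref{lem:ALT-determinant} to apply. This follows from $\int_0^1\log_+\mu_s(h)^q\,ds=q\int_0^1\log_+\mu_s(h)\,ds<\infty$. The remaining steps are formal manipulations of $\Lambda$.
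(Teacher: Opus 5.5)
Your proposal is correct and follows essentially the same route as the paper. The paper also sets $X=a^{\alpha/2}$, $Y=b^{(1-\alpha)/2}$, $r_i=1/z_i$ and applies the determinant Araki--Lieb--Thirring inequality of Lemma~\ref{lem:ALT-determinant} to $X^{r_2},Y^{r_2}$ with $p=r_1/r_2$, then concludes with $\mu_s(Z)=\mu_s(|Z|)$, the power identity \eqref{eq:Lambda-power}, and $Q_{\alpha,z_i}(a,b)=|X^{r_i}Y^{r_i}|^{2/r_i}$; your justifications of this identification for unbounded operators and of $X^{r},Y^{r}\in L_{\log_+}(\M,\tau)$ only spell out details the paper leaves implicit.
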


\begin{proof}
Set
\[
r_i:=\frac{1}{z_i}\quad(i=1,2),
\qquad
X:=a^{\alpha/2},
\qquad
Y:=b^{(1-\alpha)/2}.
\]
The admissibility assumptions ensure that the operators
$Q_{\alpha,z_i}(a,b)$ are well defined and belong to
$L_{\log_+}(\M,\tau)$. Moreover,
\eqref{eq:semifinite-z-hypotheses} allows us to apply
Lemma~\ref{lem:ALT-determinant} to the operators used below.

Since $0<r_2\leq r_1$, set
\[
p:=\frac{r_1}{r_2}\geq1.
\]
Applying Lemma~\ref{lem:ALT-determinant} to $X^{r_2}$ and
$Y^{r_2}$ gives
\[
\Lambda\left(
t;\left|X^{r_2}Y^{r_2}\right|^p
\right)
\leq
\Lambda\left(t;X^{r_1}Y^{r_1}\right),
\qquad t>0.
\]
Using the power identity for $\Lambda$ and the equality
$\mu_s(Z)=\mu_s(|Z|)$, we obtain
\[
\Lambda\left(
t;\left|X^{r_2}Y^{r_2}\right|^{1/r_2}
\right)
\leq
\Lambda\left(
t;\left|X^{r_1}Y^{r_1}\right|^{1/r_1}
\right).
\]
Since
$Q_{\alpha,z_i}(a,b)=\left(\left|X^{r_i}Y^{r_i}\right|^{1/r_i}\right)^2$,
another application of the power identity yields
\[
\Lambda\bigl(t;Q_{\alpha,z_2}(a,b)\bigr)
\leq
\Lambda\bigl(t;Q_{\alpha,z_1}(a,b)\bigr),
\qquad t>0.
\]
Definition~\ref{def:weak-logmaj} now gives
\[
Q_{\alpha,z_2}(a,b)
\prec_{w\log}
Q_{\alpha,z_1}(a,b).
\]
\end{proof}

\begin{corollary}
\label{cor:bounded-semifinite-z}
Let $a,b\in\M$ be positive invertible elements. Then, for every
$\alpha>0$ with $\alpha\ne1$ and every $0<z_1\leq z_2$, we have
\[
Q_{\alpha,z_2}(a,b)
\prec_{w\log}
Q_{\alpha,z_1}(a,b).
\]
\end{corollary}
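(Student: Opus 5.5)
The plan is to derive this corollary from Corollary~\ref{thm:semifinite-z-monotonicity}. We only need to check its two hypotheses for bounded, boundedly invertible $a,b$: that $(a,b)$ is $(\alpha,z_1)$- and $(\alpha,z_2)$-admissible, and that condition \eqref{eq:semifinite-z-hypotheses} holds. Once both are verified, the weak logarithmic majorization $Q_{\alpha,z_2}(a,b)\prec_{w\log}Q_{\alpha,z_1}(a,b)$ is exactly the conclusion of that corollary.

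For admissibility, we invoke Proposition~\ref{prop:safe-admissible}\textup{(i)}. Since $a\geq\varepsilon_aI$ and $b\geq\varepsilon_bI$, the pair $(a,b)$ is a locally log-integrable $(\alpha,z)$-admissible pair for every $z>0$, in particular for $z=z_1$ and $z=z_2$. Concretely:
\begin{itemize}
\item when $\alpha>1$, we have $\supp a=\supp b=I$, and $b^{-(\alpha-1)/(2z)}\in\M\subset S(\M,\tau)$;
\item $Q_{\alpha,z}(a,b)$ is a positive invertible element of $\M$, so it lies in $L_{\log_+}(\M,\tau)$.
\end{itemize}

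For \eqref{eq:semifinite-z-hypotheses}, note that $a^{\alpha/2}$ and $b^{(1-\alpha)/2}$ are bounded positive operators by functional calculus. For $b^{(1-\alpha)/2}$ this uses the invertibility of $b$ when $\alpha>1$, since the exponent is then negative. For any bounded $x\in\M$ we have $\mu_s(x)\leq\|x\|$, so $\int_0^t\log_+\mu_s(x)\,ds\leq t\log_+\|x\|<\infty$. By \eqref{eq:Llog-plus-mu}, both operators therefore belong to $L_{\log_+}(\M,\tau)$.

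Because every ingredient is bounded, the closure conventions for products in $S(\M,\tau)$ are trivial here, and the identity $Q_{\alpha,z_i}(a,b)=(|X^{r_i}Y^{r_i}|^{1/r_i})^2$ used in the cited corollary holds as a genuine operator identity in $\M$. I expect no real obstacle. The only point needing care is to state that negative powers of $b$ are bounded, so that $\tau$-measurability in \eqref{eq:support-condition} is automatic. This is precisely the point where mere faithfulness would not suffice (Remark~\ref{rem:faithful-not-enough}).
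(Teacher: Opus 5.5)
Your proposal is correct and matches the paper's proof: both verify the hypotheses of Corollary~\ref{thm:semifinite-z-monotonicity}, using Proposition~\ref{prop:safe-admissible}\textup{(i)} for admissibility at every $z>0$ and the boundedness of all positive and negative powers of $a$ and $b$ for condition \eqref{eq:semifinite-z-hypotheses}. Your estimate $\mu_s(x)\leq\|x\|$ spells out the membership in $L_{\log_+}(\M,\tau)$ that the paper leaves implicit.
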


\begin{proof}
By Proposition~\ref{prop:safe-admissible}, the pair $(a,b)$ is
locally log-integrable $(\alpha,z)$-admissible for every $z>0$.
Moreover, all positive and negative powers of $a$ and $b$ are bounded
elements of $\M$. Hence the hypotheses of
Corollary~\ref{thm:semifinite-z-monotonicity} are satisfied, and the
result follows.
\end{proof}

\begin{proposition}
\label{prop:commuting-z-equality}
Fix $\alpha>0$ with $\alpha\ne1$, and let
$a,b\in L_{\log_+}(\M,\tau)^+$ strongly commute. If $(a,b)$ is
$(\alpha,z)$-admissible for some $z>0$, then
\begin{equation}\label{eq:commuting-Q}
Q_{\alpha,z}(a,b)
=
a^\alpha b^{1-\alpha},
\end{equation}
where, for $\alpha>1$, the right-hand side is formed in
$\supp(b)\M\supp(b)$ and then extended by zero on
$I-\supp(b)$.
Consequently, if $(a,b)$ is both $(\alpha,z_1)$-admissible and
$(\alpha,z_2)$-admissible, then
\[
Q_{\alpha,z_1}(a,b)
=
Q_{\alpha,z_2}(a,b)
=
a^\alpha b^{1-\alpha}.
\]
In particular,
\[
\ell_t\bigl(Q_{\alpha,z_1}(a,b)\bigr)
=
\ell_t\bigl(Q_{\alpha,z_2}(a,b)\bigr),
\qquad t>0.
\]
\end{proposition}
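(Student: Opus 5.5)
The argument rests on the joint spectral calculus of the strongly commuting pair $(a,b)$. Strong commutation means that the spectral projections of $a$ and $b$ commute. Hence there is a single abelian von Neumann subalgebra $\mathcal{A}\subseteq\M$ containing both spectral resolutions. Every Borel function of $a$ and of $b$ is then affiliated with $\mathcal{A}$, so all operators entering $C_{\alpha,z}(a,b)$ are normal and mutually strongly commuting.

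For $0<\alpha<1$, consider the products $b^{\frac{1-\alpha}{2z}}a^{\frac{\alpha}{z}}b^{\frac{1-\alpha}{2z}}$. All powers are nonnegative functions of $a$ and $b$. In the joint functional calculus, the closure of the product of commuting affiliated operators is the operator of the pointwise product function $f(\lambda,\mu)=\mu^{\frac{1-\alpha}{2z}}\lambda^{\frac{\alpha}{z}}\mu^{\frac{1-\alpha}{2z}}=\lambda^{\alpha/z}\mu^{(1-\alpha)/z}$. This product lives in $S(\M,\tau)$, since $S(\M,\tau)$ is a $*$-algebra under closed products and the joint calculus is multiplicative on measurable elements. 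Therefore
\[
C_{\alpha,z}(a,b)=a^{\alpha/z}b^{(1-\alpha)/z}.
\]
Applying the functional calculus once more to the positive operator $C_{\alpha,z}(a,b)$ gives $C^z=(\lambda^{\alpha/z}\mu^{(1-\alpha)/z})^z=\lambda^\alpha\mu^{1-\alpha}$. This is \eqref{eq:commuting-Q}.

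For $\alpha>1$, set $e=\supp b$. This projection is a spectral projection of $b$, so it lies in $\mathcal{A}$ and commutes with $a$. The condition $\supp a\le e$ from \eqref{eq:support-condition} places everything in the corner $e\M e$. Inside $e\mathcal{A}$, the negative power $b^{(1-\alpha)/(2z)}$ is the function $\mu^{(1-\alpha)/(2z)}$ on $\{\mu>0\}$; it is $\tau$-measurable by admissibility. The same pointwise computation as before gives $C_{\alpha,z}=a^{\alpha/z}b^{(1-\alpha)/z}$ in $e\M e$, extended by zero on $I-e$. Raising to the power $z$ yields $a^\alpha b^{1-\alpha}$ formed in the corner. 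Since $\supp a\le e$, the support conventions on both sides agree.

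\textbf{Main obstacle.} The one point needing care is justifying that the closure of the natural product equals the joint functional-calculus operator when the factors are unbounded. I would handle this with the standard fact that for commuting normal measurable operators in an abelian algebra, the strong product $\overline{f(a,b)g(a,b)}$ equals $(fg)(a,b)$; for example, check this on the dense set $\chi_{\{|\lambda|,|\mu|\le n\}}H$, which is a core for both sides. Equivalently, one can realize $\mathcal{A}\cong L^\infty(\Omega)$, where $S(\mathcal{A},\tau|_{\mathcal{A}})$ consists of measurable functions with pointwise operations, and the closure of each product is exactly the pointwise product.

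The remaining claims follow immediately. The right-hand side of \eqref{eq:commuting-Q} does not depend on $z$, so admissibility for both $z_1$ and $z_2$ gives $Q_{\alpha,z_1}(a,b)=Q_{\alpha,z_2}(a,b)$. Equal operators have equal generalized singular values, and hence equal $\ell_t$ for all $t>0$.
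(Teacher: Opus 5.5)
Your proposal is correct and follows the paper's route: both use the joint functional calculus of the strongly commuting pair to collapse $b^{\frac{1-\alpha}{2z}}a^{\frac{\alpha}{z}}b^{\frac{1-\alpha}{2z}}$ to $(a^\alpha b^{1-\alpha})^{1/z}$, so the $z$-th power no longer depends on $z$. You also supply details the paper leaves implicit, namely closures of unbounded products and the corner $\supp(b)\M\supp(b)$ for $\alpha>1$. One small slip: in the negative-power case your truncating core should come from sets like $\{\lambda\le n,\ 1/n\le\mu\le n\}$, because $\mu^{(1-\alpha)/(2z)}$ is unbounded near $\mu=0$; your $L^\infty(\Omega)$ realization covers this anyway.
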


\begin{proof}
Since $a$ and $b$ strongly commute, their powers are defined by their
joint functional calculus and commute with one another. Therefore,
\[
b^{\frac{1-\alpha}{2z}}
a^{\frac{\alpha}{z}}
b^{\frac{1-\alpha}{2z}}
=
a^{\frac{\alpha}{z}}
b^{\frac{1-\alpha}{z}}
=
\left(a^\alpha b^{1-\alpha}\right)^{1/z}.
\]
Taking the $z$-th power gives
\[
Q_{\alpha,z}(a,b)=a^\alpha b^{1-\alpha}.
\]
The right-hand side is independent of $z$, which proves the remaining
assertions.
\end{proof}
A unitarily invariant norm $\|\cdot\|$ on $\mathbb{M}_d$ is called
\emph{strictly increasing} if
\[
0\leq X\leq Y
\quad\text{and}\quad
\|X\|=\|Y\|
\]
imply that $X=Y$. Every Schatten $p$-norm with
$1\leq p<\infty$ has this property.

The following characterization is a consequence of the equality
condition in Araki's norm inequality. We use the corrected proof
given by Hiai \cite[Appendix~A]{Hiai2024}.

\begin{corollary}
\label{cor:matrix-z-equality}
Let $a,b\in\mathbb{M}_d$ be positive definite, let
$\alpha>0$ with $\alpha\ne1$, and let
$0<z_1<z_2$.
Then the following conditions are equivalent:
\begin{enumerate}[label=(\roman*)]

\item Equality holds at every level of the logarithmic majorization;
that is,
\begin{equation}\label{eq:matrix-profile-equality}
\prod_{j=1}^k
s_j\bigl(Q_{\alpha,z_2}(a,b)\bigr)
=
\prod_{j=1}^k
s_j\bigl(Q_{\alpha,z_1}(a,b)\bigr),
\qquad k=1,\ldots,d.
\end{equation}

\item The matrices $Q_{\alpha,z_1}(a,b)$ and
$Q_{\alpha,z_2}(a,b)$ have the same eigenvalues, counting
multiplicities.

\item For some strictly increasing unitarily invariant norm,
\begin{equation}\label{eq:strict-ui-equality}
\bigl\|Q_{\alpha,z_2}(a,b)\bigr\|
=
\bigl\|Q_{\alpha,z_1}(a,b)\bigr\|.
\end{equation}

\item The matrices $a$ and $b$ commute.

\end{enumerate}
\end{corollary}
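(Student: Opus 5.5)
The plan is to prove the cycle (iv)$\Rightarrow$(i)$\Rightarrow$(ii)$\Rightarrow$(iii)$\Rightarrow$(iv). The first three implications are elementary. The substance lies in (iii)$\Rightarrow$(iv), where we will invoke the equality case of Araki's norm inequality in the corrected form of \cite[Appendix~A]{Hiai2024}.

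For (iv)$\Rightarrow$(i), Proposition~\ref{prop:commuting-z-equality} applies to commuting positive definite matrices. It gives $Q_{\alpha,z_1}(a,b)=Q_{\alpha,z_2}(a,b)=a^\alpha b^{1-\alpha}$, so all partial products of singular values agree. For (i)$\Rightarrow$(ii), both matrices are positive definite, so singular values equal eigenvalues. Dividing consecutive partial products, which are nonzero, recovers each $s_k$ individually; hence the ordered eigenvalue lists coincide. For (ii)$\Rightarrow$(iii), note that a unitarily invariant norm depends only on singular values, so any strictly increasing norm, for instance the trace norm, takes equal values.

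For (iii)$\Rightarrow$(iv), write $X=a^{\alpha/2}$, $Y=b^{(1-\alpha)/2}$ and $r_i=1/z_i$, so that $r_2<r_1$ and, as in \eqref{eq:Q-as-H-square},
\[
Q_{\alpha,z_i}(a,b)=\bigl|X^{r_i}Y^{r_i}\bigr|^{2/r_i}
=\bigl(Y^{r_i}X^{2r_i}Y^{r_i}\bigr)^{1/r_i}.
\]
Set $A:=X^{2r_2}$, $B:=Y^{2r_2}$ and $p:=r_1/r_2>1$. Then
\[
Q_{\alpha,z_1}(a,b)=\bigl(B^{p/2}A^{p}B^{p/2}\bigr)^{1/(pr_2)},
\qquad
Q_{\alpha,z_2}(a,b)=\bigl(B^{1/2}AB^{1/2}\bigr)^{1/r_2}.
\]
Araki's log-majorization $(B^{1/2}AB^{1/2})^{p}\prec_{\log}B^{p/2}A^{p}B^{p/2}$ for $p\ge1$ is preserved under the increasing power $1/(pr_2)$; this is Example~\ref{ex:matrix-z-monotonicity}. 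Since a strictly increasing norm is involved, we need more than majorization. We want to convert equality of a strictly increasing unitarily invariant norm of the two sides of a log-majorization into equality of the full eigenvalue lists.

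This conversion is the step I expect to be the main obstacle. The plan is first to pass from log-majorization to weak majorization of the eigenvalue vectors, using convexity and monotonicity of $\exp$. A standard Ky Fan dominance argument, together with strict monotonicity of the norm, should then force equality of eigenvalues. One way to do this is to choose a doubly substochastic matrix mapping one eigenvalue vector to the other and conclude that the diagonal matrices coincide. If this route becomes delicate, I would instead use the fact that Hiai's corrected equality theorem is stated precisely for strictly increasing unitarily invariant norms. It asserts that $\|(B^{1/2}AB^{1/2})^{pq}\|=\|(B^{p/2}A^pB^{p/2})^{q}\|$ with $p>1$ forces $AB=BA$, and it can be applied directly with $q=1/(pr_2)$.

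From $AB=BA$ we then deduce that $a^{\alpha r_2}$ and $b^{(1-\alpha)r_2}$ commute. Since $\alpha\neq0$ and $1-\alpha\ne0$, each of $a$ and $b$ is a positive-definite functional-calculus power of these, and hence is a polynomial in it. Therefore $a$ and $b$ commute, which proves (iv) and closes the cycle.
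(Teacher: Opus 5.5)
Your cycle, and in particular the decisive step (iii)$\Rightarrow$(iv), is essentially the paper's proof: you apply Hiai's corrected equality case of Araki's norm inequality \cite[Appendix~A]{Hiai2024} and then use functional calculus to pass from $AB=BA$ to $ab=ba$, and your parametrization $A=a^{\alpha r_2}$, $B=b^{(1-\alpha)r_2}$, $p=r_1/r_2$ is equivalent to the paper's choice $A=b^{1-\alpha}$, $B=a^{\alpha}$ with two distinct exponents $r_1\neq r_2$. Your preliminary route (turning norm equality into equality of eigenvalue lists) is only a detour: it would give (iii)$\Rightarrow$(ii), not commutativity, and reducing to plain weak majorization discards the strict convexity of $\exp$ that such an argument needs (for the trace norm, equal-trace majorization pairs already defeat it), so it is Hiai's theorem that actually carries the proof.
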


\begin{proof}
Since $Q_{\alpha,z_1}(a,b)$ and $Q_{\alpha,z_2}(a,b)$ are positive
definite, their singular values are the same as their eigenvalues.
If \textup{(i)} holds, then dividing the equality for $k$ by the
equality for $k-1$ gives
\[
s_k\bigl(Q_{\alpha,z_2}(a,b)\bigr)
=
s_k\bigl(Q_{\alpha,z_1}(a,b)\bigr),
\qquad k=1,\ldots,d,
\]
where the case $k=1$ follows directly from
\eqref{eq:matrix-profile-equality}. Thus \textup{(i)} implies
\textup{(ii)}. The converse follows by multiplying the first $k$
eigenvalues. Hence \textup{(i)} and \textup{(ii)} are equivalent.

Condition \textup{(ii)} implies that the two matrices have the same
value under every unitarily invariant norm. In particular,
\textup{(ii)} implies \textup{(iii)}.

We next prove that \textup{(iii)} implies \textup{(iv)}. Set
\[
r_i:=\frac{1}{z_i},
\qquad
A:=b^{1-\alpha},
\qquad
B:=a^\alpha.
\]
Since $0<z_1<z_2$, we have
$0<r_2<r_1$. Moreover,
\begin{equation}\label{eq:Q-Araki-family}
Q_{\alpha,z_i}(a,b)
=
\left(
A^{r_i/2}B^{r_i}A^{r_i/2}
\right)^{1/r_i},
\qquad i=1,2.
\end{equation}
The equality condition in Araki's norm inequality, with its corrected
proof given in \cite[Appendix~A]{Hiai2024}, shows that equality for
two distinct parameters in \eqref{eq:Q-Araki-family}, with respect
to a strictly increasing unitarily invariant norm, holds if and only
if $AB=BA$. Therefore,
\[
a^\alpha b^{1-\alpha}
=
b^{1-\alpha}a^\alpha.
\]
Since the functions $t\mapsto t^\alpha$ and
$t\mapsto t^{1-\alpha}$ are one-to-one on $(0,\infty)$, functional
calculus gives $ab=ba$.
Thus \textup{(iii)} implies \textup{(iv)}.

Finally, suppose that \textup{(iv)} holds. By
Proposition~\ref{prop:commuting-z-equality},
\[
Q_{\alpha,z_1}(a,b)
=
Q_{\alpha,z_2}(a,b)
=
a^\alpha b^{1-\alpha}.
\]
Therefore, equality holds in all the partial products in
\eqref{eq:matrix-profile-equality}, and hence \textup{(iv)} implies
\textup{(i)}. This completes the proof.
\end{proof}

\begin{remark}[Scope of the $z$-monotonicity results]
\label{rem:scope-of-z-monotonicity}
Corollary~\ref{thm:finite-corner-z} follows from the
Araki--Lieb--Thirring logarithmic submajorization, and
Corollary~\ref{thm:semifinite-z-monotonicity} gives its semifinite form under
the stated assumptions.  We use these results as a basis for the local
determinant remainder below.

The integrability assumption in \eqref{eq:semifinite-z-hypotheses} is
important. When
$0<\alpha<1$, both
$a^{\alpha/2}~\text{and}~b^{(1-\alpha)/2}$
are positive powers. Hence their membership in
$L_{\log_+}(\M,\tau)$ follows from the stability of this class under
positive powers. When $\alpha>1$, however,
$b^{(1-\alpha)/2}=b^{-(\alpha-1)/2}$
is a negative power. The condition
$b^{-(\alpha-1)/2}\in L_{\log_+}(\M,\tau)$
is then a genuine integrability requirement and does not follow from
faithfulness of $b$ alone.

A separate equality question remains for a general finite von Neumann
algebra. Let $(\M,\tau)$ be finite, let
$a,b\in\M$ be positive invertible elements, fix
$\alpha>0$ with $\alpha\ne1$, and let $0<z_1<z_2$. It is natural to
ask whether
\begin{equation}\label{eq:finite-vn-profile-equality}
\ell_t\bigl(Q_{\alpha,z_2}(a,b)\bigr)
=
\ell_t\bigl(Q_{\alpha,z_1}(a,b)\bigr),
\qquad 0<t<\tau(I),
\end{equation}
implies that $a$ and $b$ commute. The proof of
Corollary~\ref{cor:matrix-z-equality} relies on the strict equality
condition in Araki's matrix norm inequality. We do not prove the
corresponding statement for every finite von Neumann algebra. This open
question does not affect the results below.
\end{remark}

\begin{proposition}
\label{pro:general-local-det}
Fix $\alpha>0$ with $\alpha\ne1$ and $z>0$. Let $(a,b)$ be an
$(\alpha,z)$-admissible pair, and assume that
\[
a^{\alpha/2},\ b^{(1-\alpha)/2}
\in L_{\log_+}(\M,\tau).
\]
Then, for every $t>0$,
\begin{equation}\label{eq:general-local-det}
\Lambda\bigl(t;Q_{\alpha,z}(a,b)\bigr)
\leq
\Lambda(t;a)^\alpha
\Lambda\left(t;b^{(1-\alpha)/2}\right)^2.
\end{equation}
Equivalently,
\begin{equation}\label{eq:general-local-ell}
\ell_t\bigl(Q_{\alpha,z}(a,b)\bigr)
\leq
\alpha\ell_t(a)
+
2\ell_t\left(b^{(1-\alpha)/2}\right).
\end{equation}
Both inequalities are understood in the extended-real sense.
\end{proposition}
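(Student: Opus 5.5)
We reduce Proposition~\ref{pro:general-local-det} to the local determinant inequality for products together with the power identity of Lemma~\ref{lem:ALT-determinant}. Put $r=1/z$, $X=a^{\alpha/2}$ and $Y=b^{(1-\alpha)/2}$. Both lie in $L_{\log_+}(\M,\tau)^+$ by hypothesis. As in the proof of Corollary~\ref{thm:semifinite-z-monotonicity}, we write
\[
Q_{\alpha,z}(a,b)=\left(|X^{r}Y^{r}|^{1/r}\right)^2 .
\]
The power identity \eqref{eq:Lambda-power} together with $\mu_s(Z)=\mu_s(|Z|)$ then gives
\[
\Lambda\bigl(t;Q_{\alpha,z}(a,b)\bigr)=\Lambda\bigl(t;X^rY^r\bigr)^{2/r}.
\]
For $\alpha>1$, the factorisation has to be read inside the corner $e\M e$ with $e=\supp b$, and the extension by zero then leaves the singular values unchanged. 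We check this first.

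Next, we apply the local determinant inequality \cite[Theorem~4.2]{DoddsEtAl}, as quoted in Remark~\ref{rem:product-defect-well-defined}, to the pair $X^r,Y^r$. Both factors belong to $L_{\log_+}(\M,\tau)$, since $\mu_s(X^r)=\mu_s(X)^r$ and $\log_+$ scales by $r$. The inequality gives
\[
\Lambda(t;X^rY^r)\le \Lambda(t;X^r)\Lambda(t;Y^r)=\Lambda(t;X)^r\Lambda(t;Y)^r .
\]
Raising this to the power $2/r>0$ preserves the order on $[0,\infty)$, including the case where a side is zero. This yields
\[
\Lambda\bigl(t;Q_{\alpha,z}(a,b)\bigr)\le\Lambda(t;X)^2\Lambda(t;Y)^2 .
\]
Finally, $\Lambda(t;X)^2=\Lambda(t;a^{\alpha/2})^2=\Lambda(t;a)^\alpha$ by \eqref{eq:Lambda-power} with $q=\alpha/2$, which is \eqref{eq:general-local-det}.

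The logarithmic form \eqref{eq:general-local-ell} follows by taking logarithms, using the convention $\log 0=-\infty$. We also use that $\ell_t(h^q)=q\,\ell_t(h)$ holds in $[-\infty,\infty)$ for $q>0$. No $\infty-\infty$ form can arise, since every term is bounded above by the $L_{\log_+}$ condition.

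The main obstacle is the bookkeeping in the unbounded case: one has to verify that the closed product $X^rY^r$ lies in $S(\M,\tau)$. A second point is the case $\alpha>1$, where $Y$ is a negative power living on $e\M e$. There we must confirm two things: that the corner computation of $\mu_s$ agrees with the one in $\M$ after extension by zero, and that the product inequality from \cite{DoddsEtAl} applies to the closures. The first we would handle via the admissibility condition \eqref{eq:support-condition} and the fact that $S(\M,\tau)$ is a $*$-algebra. For the second, we would use that $\mu_s$ of an operator supported in $e$ is the same whether it is computed in $e\M e$ or in $\M$.
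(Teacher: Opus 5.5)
Your proposal is correct and follows essentially the same route as the paper. You write $Q_{\alpha,z}(a,b)=|X^rY^r|^{2/r}$, apply the submultiplicativity of $\Lambda$ from \cite[Theorem~4.2]{DoddsEtAl} to $X^r,Y^r$, and use the power identity \eqref{eq:Lambda-power} twice. Your extra remarks on the corner $e\M e$ for $\alpha>1$ and on measurability of the closed product are bookkeeping that the paper leaves implicit, and they do not change the argument.
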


\begin{proof}
Set
\[
X:=a^{\alpha/2},
\qquad
Y:=b^{(1-\alpha)/2},
\qquad
r:=\frac{1}{z}.
\]
By the definition of $Q_{\alpha,z}(a,b)$,
$Q_{\alpha,z}(a,b)=\left|X^rY^r\right|^{2/r}$.
The submultiplicativity of the determinant function
\cite[Theorem~4.2]{DoddsEtAl} gives
$\Lambda\left(t;\left|X^rY^r\right|\right)=
\Lambda\left(t;X^rY^r\right)
\leq
\Lambda(t;X^r)\Lambda(t;Y^r)$.
Using the power identity for $\Lambda$, we obtain
\[
\begin{split}
\Lambda\bigl(t;Q_{\alpha,z}(a,b)\bigr)
&=
\Lambda\left(t;\left|X^rY^r\right|\right)^{2/r}\\
&\leq
\Lambda(t;X)^{2}\Lambda(t;Y)^{2}.
\end{split}
\]
Since \(X=a^{\alpha/2}\), another application of the power identity
gives $\Lambda(t;X)^2=\Lambda(t;a)^\alpha$.
Therefore,
$\Lambda\bigl(t;Q_{\alpha,z}(a,b)\bigr)\leq
\Lambda(t;a)^\alpha\Lambda\left(t;b^{(1-\alpha)/2}\right)^2$,
which proves \eqref{eq:general-local-det}. Taking logarithms, with the
extended-real conventions introduced above, gives
\eqref{eq:general-local-ell}.
\end{proof}

\begin{remark}
Proposition~\ref{pro:general-local-det} is a direct consequence of
the submultiplicativity of the determinant function established in
\cite[Theorem~4.2]{DoddsEtAl}. Its significance here is the precise
form of the factor involving $b$.

Indeed, if $0<\alpha<1$, then $(1-\alpha)/2>0$, and the power identity
gives
\[
\Lambda\left(t;b^{(1-\alpha)/2}\right)^2
=
\Lambda(t;b)^{1-\alpha}.
\]
Thus, in this case, \eqref{eq:general-local-det} reduces to
$
\Lambda\bigl(t;Q_{\alpha,z}(a,b)\bigr)
\leq
\Lambda(t;a)^\alpha
\Lambda(t;b)^{1-\alpha}.
$
If $\alpha>1$, however, the exponent $(1-\alpha)/2$ is negative, and
this simplification is not valid in general. Indeed, generalized
singular values do not behave under negative powers in the same way
as they do under positive powers. Therefore, the factor
$\Lambda\left(t;b^{(1-\alpha)/2}\right)^2$
must be retained in the general semifinite estimate.
\end{remark}

\begin{definition}
\label{def:local-renyi-defect}
Fix $\alpha>0$ with $\alpha\ne1$, let $z>0$, and let $(a,b)$ be an
$(\alpha,z)$-admissible pair. Set
\[
X:=a^{\alpha/2},
\qquad
Y:=b^{(1-\alpha)/2},
\qquad
A:=X^{1/z},
\qquad
B:=Y^{1/z}.
\]
Assume that
$X,Y\in L_{\log_+}(\M,\tau)$
and that $(A,B)$ is $t$-determinant-admissible. We define the
\emph{local R\'enyi determinant defect} of $(a,b)$ at $t$ by
\begin{equation}\label{eq:local-renyi-defect}
\mathcal E_t^{\alpha,z}(a,b)
:=
2\ell_t(X)+2\ell_t(Y)
-\ell_t\bigl(Q_{\alpha,z}(a,b)\bigr)
\in[0,\infty].
\end{equation}
If $\ell_t\bigl(Q_{\alpha,z}(a,b)\bigr)=-\infty$,
we set $\mathcal E_t^{\alpha,z}(a,b):=+\infty$.

The local R\'enyi determinant defect is a special case of the local
product defect. Indeed, since
\[
Q_{\alpha,z}(a,b)=|AB|^{2z},
\qquad
\ell_t(A)=\frac{1}{z}\ell_t(X),
\qquad
\ell_t(B)=\frac{1}{z}\ell_t(Y),
\]
the power identity gives
\begin{equation}\label{eq:renyi-defect-as-product-defect}
\mathcal E_t^{\alpha,z}(a,b)
=
2z\left(
\ell_t(A)+\ell_t(B)-\ell_t(|AB|)
\right)
=
\mathfrak D_t^{(1/z)}(A,B).
\end{equation}
Consequently, $\mathcal E_t^{\alpha,z}(a,b)\geq0$.
\end{definition}

The form of the bound in Proposition~\ref{pro:general-local-det}
depends essentially on the range of $\alpha$. If $0<\alpha<1$, then
$b^{(1-\alpha)/2}$ is a positive power of $b$, and the power identity
allows us to write
\[
2\ell_t\left(b^{(1-\alpha)/2}\right)
=
(1-\alpha)\ell_t(b).
\]
This gives the familiar weighted determinant bound stated below.

When $\alpha>1$, however, $b^{(1-\alpha)/2}$ is a negative power.
In general, generalized singular values do not behave under negative
powers in the same way as they do under positive powers. Therefore,
without additional assumptions, the term
$2\ell_t\left(b^{(1-\alpha)/2}\right)$
cannot be replaced by $(1-\alpha)\ell_t(b)$.

\begin{corollary}
\label{cor:local-det-alpha-less-one}
Let $0<\alpha<1$ and $z>0$, and let $(a,b)$ be an
$(\alpha,z)$-admissible pair. Then, for every $t>0$,
\begin{equation}\label{eq:local-det-alpha-less-one}
\ell_t\bigl(Q_{\alpha,z}(a,b)\bigr)
\leq
\alpha\ell_t(a)+(1-\alpha)\ell_t(b).
\end{equation}
Equivalently,
\begin{equation}\label{eq:local-Delta-alpha-less-one}
\Delta_t\bigl(Q_{\alpha,z}(a,b)\bigr)
\leq
\Delta_t(a)^\alpha
\Delta_t(b)^{1-\alpha}.
\end{equation}
\end{corollary}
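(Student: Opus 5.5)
The plan is to derive the corollary directly from Proposition~\ref{pro:general-local-det}, using the power identity \eqref{eq:Lambda-power} of Lemma~\ref{lem:ALT-determinant} to rewrite the factor involving $b$. The first step is to check the extra hypotheses of that proposition, namely $a^{\alpha/2},\,b^{(1-\alpha)/2}\in L_{\log_+}(\M,\tau)$. When $0<\alpha<1$ both exponents $\alpha/2$ and $(1-\alpha)/2$ are positive. Since $\mu_s(h^q)=\mu_s(h)^q$ for $h\geq0$ and $q>0$, we get $\log_+\mu_s(h^q)=q\log_+\mu_s(h)$. The characterization \eqref{eq:Llog-plus-mu} then gives $h^q\in L_{\log_+}(\M,\tau)$ whenever $h\in L_{\log_+}(\M,\tau)^+$. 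Because $a,b\in L_{\log_+}(\M,\tau)^+$ by Definition~\ref{def:admissible-pair}, the hypotheses hold automatically, and no support or negative-power condition enters.

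Next, Proposition~\ref{pro:general-local-det} yields, for every $t>0$,
\[
\Lambda\bigl(t;Q_{\alpha,z}(a,b)\bigr)\leq\Lambda(t;a)^\alpha\,\Lambda\bigl(t;b^{(1-\alpha)/2}\bigr)^2 .
\]
Applying \eqref{eq:Lambda-power} with $h=b$ and $q=(1-\alpha)/2>0$ gives $\Lambda(t;b^{(1-\alpha)/2})^2=\Lambda(t;b)^{1-\alpha}$. Taking logarithms in $[-\infty,\infty)$ then produces \eqref{eq:local-det-alpha-less-one}. Here the convention $\exp(-\infty)=0$ and the fact that both exponents $\alpha,1-\alpha$ are positive ensure that no $\infty-\infty$ form appears. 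All $\ell_t$ terms are $<\infty$ because every operator involved lies in $L_{\log_+}$, so the right-hand side is a well-defined element of $[-\infty,\infty)$.

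For the equivalent form \eqref{eq:local-Delta-alpha-less-one}, divide by $t>0$ and exponentiate; the map $x\mapsto e^{x/t}$ is monotone on $[-\infty,\infty)$. Using $\Delta_t=\exp(\ell_t/t)$, this gives exactly $\Delta_t(Q_{\alpha,z}(a,b))\leq\Delta_t(a)^\alpha\Delta_t(b)^{1-\alpha}$. The reverse implication follows the same way, since the map is strictly increasing.

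The only step needing care is the extended-real bookkeeping. When $\ell_t(a)$ or $\ell_t(b)$ equals $-\infty$, the right side is $-\infty$ and we must confirm that the left side is also $-\infty$. This comes straight from the multiplicative form $\Lambda(t;a)^\alpha\Lambda(t;b)^{1-\alpha}=0$ with positive exponents, so I would argue at the level of $\Lambda$ first and take logarithms last.
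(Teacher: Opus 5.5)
Your proposal is correct and follows the paper's own proof. You check $a^{\alpha/2},b^{(1-\alpha)/2}\in L_{\log_+}(\M,\tau)$ from the positivity of the exponents, apply Proposition~\ref{pro:general-local-det}, use the power identity to rewrite $2\ell_t(b^{(1-\alpha)/2})=(1-\alpha)\ell_t(b)$, and then divide by $t$ and exponentiate. Your added details, namely $\log_+\mu_s(h^q)=q\log_+\mu_s(h)$ for $q>0$ and doing the extended-real bookkeeping at the level of $\Lambda$, make explicit what the paper leaves implicit.
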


\begin{proof}
Since $0<\alpha<1$, both $\alpha/2$ and $(1-\alpha)/2$ are positive.
Hence
\[
a^{\alpha/2},\ b^{(1-\alpha)/2}
\in L_{\log_+}(\M,\tau).
\]
The power identity gives
\[
2\ell_t\left(b^{(1-\alpha)/2}\right)
=
(1-\alpha)\ell_t(b).
\]
Applying Proposition~\ref{pro:general-local-det}, we obtain
\[
\ell_t\bigl(Q_{\alpha,z}(a,b)\bigr)
\leq
\alpha\ell_t(a)+(1-\alpha)\ell_t(b).
\]
Exponentiating after division by $t$ gives
\eqref{eq:local-Delta-alpha-less-one}.
\end{proof}

\begin{corollary}
\label{cor:matrix-local-det-alpha-greater-one}
Let $a,b\in\mathbb{M}_d$ be positive definite, let $\alpha>1$, and
let $z>0$. Then, for every $1\leq k\leq d$,
\begin{equation}\label{eq:matrix-local-det-alpha-greater-one}
\ell_k\bigl(Q_{\alpha,z}(a,b)\bigr)
\leq
\alpha\sum_{j=1}^k\log s_j(a)-(\alpha-1)
\sum_{j=d-k+1}^d\log s_j(b).
\end{equation}
Equality holds in \eqref{eq:matrix-local-det-alpha-greater-one} when
$k=d$. Thus, for $k<d$, the term involving $b$ is determined by the
$k$ smallest singular values of $b$, rather than by the $k$ largest
ones.
\end{corollary}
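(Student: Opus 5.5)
The plan is to apply Proposition~\ref{pro:general-local-det} in the matrix setting and then compute the $b$-term explicitly. With the usual unnormalized trace on $\mathbb{M}_d$, all powers of the positive definite $a$ and $b$ are positive definite. Hence $(a,b)$ is $(\alpha,z)$-admissible and
\[
a^{\alpha/2},\ b^{(1-\alpha)/2}\in L_{\log_+},
\]
by Proposition~\ref{prop:safe-admissible}(i). Proposition~\ref{pro:general-local-det} therefore gives, for each integer $k$,
\[
\ell_k\bigl(Q_{\alpha,z}(a,b)\bigr)\leq\alpha\ell_k(a)+2\ell_k\bigl(b^{(1-\alpha)/2}\bigr).
\]
By Proposition~\ref{prop:matrix-local-det},
\[
\ell_k(a)=\sum_{j=1}^k\log s_j(a).
\]

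The key step is to identify the singular values of the negative power $Y=b^{(1-\alpha)/2}$. The exponent $(1-\alpha)/2$ is negative, so the map $\lambda\mapsto\lambda^{(1-\alpha)/2}$ is decreasing on $(0,\infty)$. Consequently, the $j$-th largest eigenvalue of $Y$ is $s_{d-j+1}(b)^{(1-\alpha)/2}$. This gives
\[
2\ell_k(Y)=2\sum_{j=1}^k\frac{1-\alpha}{2}\log s_{d-j+1}(b)=-(\alpha-1)\sum_{j=d-k+1}^{d}\log s_j(b).
\]
Substituting this into the previous inequality yields \eqref{eq:matrix-local-det-alpha-greater-one}. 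It is also the source of the ``$k$ smallest singular values'' remark.

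For equality at $k=d$, both sides are full log-determinants. By \eqref{eq:det-Q}, which holds for arbitrary positive definite $a,b$ because it uses only multiplicativity of $\det$,
\[
\ell_d\bigl(Q_{\alpha,z}(a,b)\bigr)=\log\det Q_{\alpha,z}(a,b)=\alpha\log\det a+(1-\alpha)\log\det b.
\]
The right-hand side of \eqref{eq:matrix-local-det-alpha-greater-one} at $k=d$ is the same quantity, since
\[
\sum_{j=1}^d\log s_j(b)=\log\det b.
\]

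The only point needing care is the reversal of order under the negative power. This is elementary in finite dimensions via the spectral theorem, so I do not expect a real obstacle. One should simply note that eigenvalues and singular values coincide for positive matrices.
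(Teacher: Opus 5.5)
Your proposal is correct and follows the paper's proof essentially verbatim. You apply Proposition~\ref{pro:general-local-det} at $t=k$, rewrite $2\ell_k\bigl(b^{(1-\alpha)/2}\bigr)$ using the order reversal $s_j\bigl(b^{(1-\alpha)/2}\bigr)=s_{d+1-j}(b)^{(1-\alpha)/2}$, and obtain equality at $k=d$ from multiplicativity of the determinant. The membership $a^{\alpha/2},\,b^{(1-\alpha)/2}\in L_{\log_+}$ is not literally part of Proposition~\ref{prop:safe-admissible}(i), but it is immediate here because every matrix is bounded, so this is only a citation detail.
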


\begin{proof}
Set
\[
q:=\frac{\alpha-1}{2}>0.
\]
Since $b$ is positive definite, the singular values of $b^{-q}$ are
given, in decreasing order, by
\[
s_j(b^{-q})
=
s_{d+1-j}(b)^{-q},
\qquad j=1,\ldots,d.
\]
Therefore,
\[
\begin{split}
2\ell_k\left(b^{(1-\alpha)/2}\right)
&=
2\ell_k(b^{-q})\\
&=
2\sum_{j=1}^k\log s_j(b^{-q})\\
&=
-(\alpha-1)
\sum_{j=d-k+1}^d\log s_j(b).
\end{split}
\]
Moreover,
\[
\ell_k(a)=\sum_{j=1}^k\log s_j(a).
\]
Substituting these identities into
\eqref{eq:general-local-ell} proves
\eqref{eq:matrix-local-det-alpha-greater-one}.

When $k=d$, multiplicativity of the determinant gives
\[
\ell_d\bigl(Q_{\alpha,z}(a,b)\bigr)
=
\alpha\ell_d(a)+(1-\alpha)\ell_d(b),
\]
so equality holds at the endpoint.
\end{proof}
The distinction between the cases $0<\alpha<1$ and $\alpha>1$
cannot be omitted. The following example shows that the bound in
Corollary~\ref{cor:local-det-alpha-less-one} does not extend to
$\alpha>1$, even for commuting faithful density matrices.

\begin{example}
\label{ex:wrong-alpha-greater-one}
In $\mathbb{M}_2$, let
\[
a=
\begin{pmatrix}
1/2&0\\
0&1/2
\end{pmatrix},
\qquad
b=
\begin{pmatrix}
4/5&0\\
0&1/5
\end{pmatrix},
\qquad
\alpha=2.
\]
The matrices $a$ and $b$ commute. Therefore, for every $z>0$,
\[
Q_{2,z}(a,b)
=
a^2b^{-1}
=
\begin{pmatrix}
5/16&0\\
0&5/4
\end{pmatrix}.
\]
It follows that
$\ell_1\bigl(Q_{2,z}(a,b)\bigr)=\log(5/4)$.
On the other hand,
\[
2\ell_1(a)-\ell_1(b)=2\log(1/2)-\log(4/5)=\log(5/16).
\]
Since $\log(5/4)>\log(5/16)$,
the seemingly natural inequality
\[
\ell_t\bigl(Q_{\alpha,z}(a,b)\bigr)
\leq
\alpha\ell_t(a)+(1-\alpha)\ell_t(b)
\]
fails for $\alpha>1$, even when $a$ and $b$ are commuting faithful
density matrices.

By contrast, the corrected bound
\eqref{eq:matrix-local-det-alpha-greater-one} is an equality in this
example.
\end{example}

\subsection{The local determinant gap and exterior powers}

In the matrix case, the geometry behind
Proposition~\ref{pro:general-local-det} can be seen directly.  For positive definite
$a,b\in\mathbb M_d$, put
\[
 X=a^{\alpha/2},\qquad
 Y=b^{(1-\alpha)/2},\qquad
 r=\frac1z.
\]
By Proposition~\ref{prop:matrix-local-det}, the defect in
Definition~\ref{def:local-renyi-defect} specializes, for
$1\leq k\leq d$, to
\begin{equation}\label{eq:def-local-gap}
 \mathcal E_k^{\alpha,z}(a,b)
 :=2\sum_{j=1}^k\log s_j(X)
   +2\sum_{j=1}^k\log s_j(Y)
   -\ell_k\bigl(Q_{\alpha,z}(a,b)\bigr).
\end{equation}
Thus $\mathcal E_k^{\alpha,z}(a,b)\geq0$ by
Proposition~\ref{prop:matrix-local-det}.  Notice that this definition remains
valid on both sides of $\alpha=1$; when $\alpha>1$, the largest singular
values of $Y$ correspond to the smallest eigenvalues of $b$.

\begin{proposition}
\label{prop:exterior-local-gap}
Let $a,b\in\mathbb{M}_d$ be positive definite, let
$\alpha>0$ with $\alpha\ne1$, and let $z>0$. Set
\[
X:=a^{\alpha/2},
\qquad
Y:=b^{(1-\alpha)/2},
\qquad
r:=\frac{1}{z}.
\]
Then, for every $1\leq k\leq d$,
\begin{equation}\label{eq:exterior-local-gap}
\mathcal E_k^{\alpha,z}(a,b)
=
\frac{2}{r}
\log
\frac{
\left\|\bigwedge^k X^r\right\|_\infty
\left\|\bigwedge^k Y^r\right\|_\infty
}{
\left\|
\left(\bigwedge^k X^r\right)
\left(\bigwedge^k Y^r\right)
\right\|_\infty
}.
\end{equation}
Thus, the local R\'enyi determinant defect is precisely the defect in
operator-norm submultiplicativity on the $k$th exterior power.
In particular, $\mathcal E_d^{\alpha,z}(a,b)=0$.

\end{proposition}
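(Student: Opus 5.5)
The plan is to rewrite each of the three terms in the defining formula \eqref{eq:def-local-gap} as the logarithm of an operator norm on $\bigwedge^k\mathbb C^d$. The tool is the classical identity: for any $Z\in\mathbb M_d$, the singular values of $\bigwedge^k Z$ are the products $s_{i_1}(Z)\cdots s_{i_k}(Z)$ with $i_1<\cdots<i_k$. Hence
\[
\Bigl\|\bigwedge^k Z\Bigr\|_\infty=\prod_{j=1}^k s_j(Z),
\qquad\text{equivalently}\qquad
\log\Bigl\|\bigwedge^k Z\Bigr\|_\infty=\ell_k(Z)
\]
by Proposition~\ref{prop:matrix-local-det}. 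This follows from the singular value decomposition $Z=U\Sigma V^*$ and the functoriality $\bigwedge^k(Z_1Z_2)=(\bigwedge^k Z_1)(\bigwedge^k Z_2)$. The latter also shows that $\bigwedge^k U$ is unitary for $U$ unitary and that $\bigwedge^k\Sigma$ is diagonal with the stated entries.

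First I apply this to $Z=X^r$ and $Z=Y^r$. Since $X$ and $Y$ are positive definite, $s_j(X^r)=s_j(X)^r$ and $s_j(Y^r)=s_j(Y)^r$. This gives $\log\|\bigwedge^k X^r\|_\infty=r\,\ell_k(X)$, and likewise for $Y$.

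Next I handle $Q_{\alpha,z}(a,b)$. By the identity used in the proof of Proposition~\ref{pro:general-local-det}, $Q_{\alpha,z}(a,b)=|X^rY^r|^{2/r}$. The power identity \eqref{eq:Lambda-power}, together with $\mu_s(|W|)=\mu_s(W)$, gives
\[
\ell_k\bigl(Q_{\alpha,z}(a,b)\bigr)=\tfrac{2}{r}\,\ell_k(X^rY^r).
\]
Applying the exterior identity to $Z=X^rY^r$ and using functoriality,
\[
\ell_k(X^rY^r)=\log\Bigl\|\bigl(\bigwedge^k X^r\bigr)\bigl(\bigwedge^k Y^r\bigr)\Bigr\|_\infty .
\]
All quantities here are finite, because every singular value involved is positive.

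Substituting into \eqref{eq:def-local-gap}, I write $2\ell_k(X)=\tfrac2r\cdot r\ell_k(X)$ and $2\ell_k(Y)=\tfrac2r\cdot r\ell_k(Y)$. The three logarithms then combine into \eqref{eq:exterior-local-gap}.

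For the final claim, $\bigwedge^d$ acts on a one-dimensional space, so every operator norm there is the absolute value of a determinant. Multiplicativity of $\det$ makes the ratio equal to $1$, hence $\mathcal E_d^{\alpha,z}(a,b)=0$. This is consistent with the endpoint equality in Corollary~\ref{cor:matrix-local-det-alpha-greater-one}.

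The only step needing care is the exterior-power norm identity. Everything else is bookkeeping with the power rule, and the identity itself is standard multilinear algebra, so I expect no real obstacle.
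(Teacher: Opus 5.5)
Your proposal is correct and follows the paper's proof essentially step for step. Both arguments rest on the same ingredients: the exterior-power identity $\|\bigwedge^k T\|_\infty=\prod_{j\le k}s_j(T)$, the representation $Q_{\alpha,z}(a,b)=|X^rY^r|^{2/r}$ combined with the power rule, multiplicativity of $\bigwedge^k$, and one-dimensionality of $\bigwedge^d\mathbb{C}^d$ for the endpoint. The only difference is that you sketch a derivation of the exterior-power norm identity via the singular value decomposition, whereas the paper cites it as standard.
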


\begin{proof}
For every $T\in\mathbb{M}_d$ and every $1\leq k\leq d$, the operator
induced by $T$ on the $k$th exterior power satisfies
\[
\left\|\bigwedge^k T\right\|_\infty
=
\prod_{j=1}^k s_j(T).
\]
Since
\[
Q_{\alpha,z}(a,b)
=
|X^rY^r|^{2/r},
\]
we obtain
\[
\begin{split}
\ell_k\bigl(Q_{\alpha,z}(a,b)\bigr)
&=
\frac{2}{r}
\sum_{j=1}^k\log s_j(X^rY^r)\\
&=
\frac{2}{r}
\log\left\|\bigwedge^k(X^rY^r)\right\|_\infty.
\end{split}
\]
The exterior-power functor is multiplicative, so
\[
\bigwedge^k(X^rY^r)
=
\left(\bigwedge^kX^r\right)
\left(\bigwedge^kY^r\right).
\]
Moreover,
\[
\begin{split}
2\ell_k(X)
&=
2\sum_{j=1}^k\log s_j(X)\\
&=
\frac{2}{r}
\log\left\|\bigwedge^kX^r\right\|_\infty,
\end{split}
\]
and similarly,
\[
2\ell_k(Y)
=
\frac{2}{r}
\log\left\|\bigwedge^kY^r\right\|_\infty.
\]
Substituting these identities into
\eqref{eq:local-renyi-defect} gives
\eqref{eq:exterior-local-gap}.

Finally, when $k=d$, the space $\bigwedge^d\mathbb{C}^d$ is
one-dimensional. Hence the operator norm is multiplicative on this
space, and the quotient in \eqref{eq:exterior-local-gap} is equal to
one. Therefore,
\[
\mathcal E_d^{\alpha,z}(a,b)=0.
\]
\end{proof}
\begin{lemma}
\label{lem:norm-defect-angle}
Let $n\geq2$, and let $A,B\in\mathbb{M}_n$ be positive definite.
Suppose that their largest eigenvalues are simple, and write
\[
a_1>a_2\geq\cdots\geq a_n>0,
\qquad
b_1>b_2\geq\cdots\geq b_n>0.
\]
Let $u$ and $v$ be unit eigenvectors corresponding to $a_1$ and
$b_1$, respectively. Set
\[
p:=\frac{a_2}{a_1},
\qquad
q:=\frac{b_2}{b_1},
\qquad
|\langle u,v\rangle|=\cos\theta,
\quad 0\leq\theta\leq\frac{\pi}{2}.
\]
Define
\begin{align}
T(p,q,\theta)
&:=
(1+p^2q^2)\cos^2\theta
+(p^2+q^2)\sin^2\theta,
\label{eq:def-T-angle}\\
\eta(p,q,\theta)
&:=
\frac{
T(p,q,\theta)
+\sqrt{T(p,q,\theta)^2-4p^2q^2}
}{2}.
\label{eq:def-eta-angle}
\end{align}
Then
\begin{equation}\label{eq:norm-defect-exact}
\log
\frac{\|A\|_\infty\|B\|_\infty}{\|AB\|_\infty}
\geq
-\frac{1}{2}\log\eta(p,q,\theta).
\end{equation}
Moreover, if
\[
C(p,q):=
\frac{(1-p^2)(1-q^2)}{1-p^2q^2},
\]
then
\begin{equation}\label{eq:norm-defect-chain}
\begin{split}
\log
\frac{\|A\|_\infty\|B\|_\infty}{\|AB\|_\infty}
&\geq
-\frac{1}{2}
\log\left(1-C(p,q)\sin^2\theta\right)\\
&\geq
\frac{C(p,q)}{2}\sin^2\theta.
\end{split}
\end{equation}
Furthermore,
\[
\log
\frac{\|A\|_\infty\|B\|_\infty}{\|AB\|_\infty}
=0
\]
if and only if $\theta=0$, or equivalently, if $u$ and $v$ span the
same one-dimensional subspace.

The first bound in \eqref{eq:norm-defect-exact} is sharp for fixed
$p,q,$ and $\theta$.
\end{lemma}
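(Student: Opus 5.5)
The plan is to reduce to a two-dimensional model by operator domination, compute that model exactly, and then estimate the resulting eigenvalue. After rescaling we may assume $a_1=b_1=1$; both sides of \eqref{eq:norm-defect-exact} are invariant under this rescaling. Let $P=uu^*$ and $Q=vv^*$, and put
\[
A_0:=P+p(I-P),\qquad B_0:=Q+q(I-Q).
\]
Since $A$ commutes with $P$ and its eigenvalues on $(I-P)\mathbb C^n$ are at most $p$, we have $A^2\le A_0^2$. Similarly, $B^2\le B_0^2$.

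\textbf{Step 1: domination.} The identity $\|AB\|_\infty^2=\|BA^2B\|_\infty$ together with $A^2\le A_0^2$ gives
\[
\|AB\|_\infty^2\le\|BA_0^2B\|_\infty=\|A_0B\|_\infty^2 .
\]
Applying the same argument to $\|A_0B\|_\infty^2=\|A_0B^2A_0\|_\infty$ with $B^2\le B_0^2$ gives
\[
\|AB\|_\infty^2\le\|A_0B_0\|_\infty^2 .
\]

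\textbf{Step 2: exact computation of the model.} The subspace $W=\operatorname{span}\{u,v\}$ reduces both $P$ and $Q$, and hence both $A_0$ and $B_0$. On $W^\perp$ the product $A_0B_0$ acts as $pq$. If $\theta>0$, choose an orthonormal basis of $W$ with $u$ as the first vector. In this basis $A_0=\operatorname{diag}(1,p)$, and $B_0$ is the rotation of $\operatorname{diag}(1,q)$ by the angle $\theta$. The $2\times2$ product then satisfies $|\det|=pq$, and its squared Frobenius norm equals $T(p,q,\theta)$; this is a direct expansion. Its largest squared singular value is the larger root of $x^2-Tx+p^2q^2$, namely $\eta$. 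Since $\eta\ge pq\ge p^2q^2$, the block on $W^\perp$ does not matter, and $\|A_0B_0\|_\infty^2=\eta$. (For $\theta=0$ one checks directly that $\eta=1$.) Taking $-\tfrac12\log$ yields \eqref{eq:norm-defect-exact}. Sharpness for fixed $p,q,\theta$ follows from taking $A=A_0$ and $B=B_0$, for instance with $n=2$, where every inequality above is an equality.

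\textbf{Step 3: the chain.} Put $s=\sin^2\theta$ and $y=1-Cs$. Since $\eta$ is the larger root of $f(x)=x^2-Tx+p^2q^2$, the bound $\eta\le y$ follows from the two conditions $y\ge T/2$ and $f(y)\ge0$. Substituting $T=1+p^2q^2-(1-p^2)(1-q^2)s$ and using the specific value of $C$, I expect $f(y)$ to factor as a nonnegative multiple of $s(1-s)$; this is the computation that pins down $C$. The condition $y\ge T/2$ should follow from $C\le1$ and $pq<1$.

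The final inequality in \eqref{eq:norm-defect-chain} is the elementary bound $-\tfrac12\log(1-x)\ge x/2$ for $0\le x<1$.

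\textbf{Step 4: equality.} Suppose $\theta>0$. Since $p,q<1$, we have $C>0$, so $\eta\le 1-Cs<1$ and the defect is strictly positive. Suppose instead $\theta=0$. Then $Bu=b_1u$ and $ABu=a_1b_1u$, so $\|AB\|_\infty\ge a_1b_1$, and the defect vanishes.

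I expect the main obstacle to be the algebra in Step 3: verifying that $f(1-Cs)\ge0$ factors cleanly, and checking the side condition $1-Cs\ge T/2$ uniformly in $p$, $q$ and $s$.
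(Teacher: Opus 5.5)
Your overall architecture is the paper's. You use the same comparison operators $A_0=P+p(I-P)$ and $B_0=Q+q(I-Q)$, the same domination $\|AB\|_\infty\le\|A_0B_0\|_\infty$ via $A^2\le A_0^2$ and $B^2\le B_0^2$, the same quadratic $x^2-Tx+p^2q^2$, and the same sharpness example. Your determinant/Frobenius shortcut is a clean way to obtain that quadratic. Deriving the forward equality direction from $\eta\le 1-C\sin^2\theta<1$, instead of the paper's norm-attaining-vector argument, is also fine. The weak point is Step 3, which you left as a forecast, and part of that forecast does not hold up.

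Write $m:=p^2q^2$, $D:=(1-p^2)(1-q^2)$ and $s=\sin^2\theta$, so that $T=1+m-Ds$ and $C=D/(1-m)$. Then $y-T=-m(1+Cs)$, and hence
\[
f(y)=y(y-T)+m=p^2q^2C^2s^2\ge0 .
\]
So $f(y)$ is not a multiple of $s(1-s)$; it is a perfect square. The specific value of $C$ is exactly what cancels the term linear in $s$. That part of your plan survives.

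The side condition is a genuine gap: $y\ge T/2$ does \emph{not} follow from $C\le1$ and $pq<1$. Indeed
\[
2\bigl(y-\tfrac{T}{2}\bigr)=(1-m)-C(1+m)s,
\]
and $C\le1$ alone only gives $y-T/2\ge -p^2q^2$ at $s=1$. What you need is $C(1+m)\le 1-m$, which is equivalent to
\[
(1-m)^2-D(1+m)=(p^2+q^2)(1+p^2q^2)-4p^2q^2\ge0 .
\]
This holds by AM--GM, since $p^2+q^2\ge2pq$ and $1+p^2q^2\ge2pq$. With that inserted, your argument is complete.

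The paper avoids the side condition altogether. By Vieta, $\eta\eta_-=p^2q^2$ and
\[
(1-\eta)(1-\eta_-)=1-T+p^2q^2=(1-p^2)(1-q^2)\sin^2\theta .
\]
Since $\eta=\|A_0B_0\|_\infty^2\le1$, one has $\eta_-\ge p^2q^2$, and therefore $1-\eta\ge C\sin^2\theta$ follows in one line.
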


\begin{proof}
After replacing $A$ by $A/a_1$ and $B$ by $B/b_1$, we may assume
that $a_1=b_1=1$.
Let $P_u$ and $P_v$ denote the rank-one orthogonal projections onto
$\mathbb{C}u$ and $\mathbb{C}v$, respectively, and set
\[
A_0:=pI+(1-p)P_u,
\qquad
B_0:=qI+(1-q)P_v.
\]
The spectral theorem gives $A^2\leq A_0^2$, and $B^2\leq B_0^2$.
Consequently,
\[
\begin{split}
\|AB\|_\infty^2
&=
\|BA^2B\|_\infty\\
&\leq
\|BA_0^2B\|_\infty\\
&=
\|A_0B\|_\infty^2.
\end{split}
\]
The positive matrices $BA_0^2B=(BA_0)(A_0B)$
and $A_0B^2A_0=(A_0B)(BA_0)$
have the same eigenvalues. Hence
$\|A_0B\|_\infty=\|BA_0\|_\infty$.
Using $B^2\leq B_0^2$, we therefore obtain
\[
\begin{split}
\|BA_0\|_\infty^2
&=
\|A_0B^2A_0\|_\infty\\
&\leq
\|A_0B_0^2A_0\|_\infty\\
&=
\|B_0A_0\|_\infty^2
=
\|A_0B_0\|_\infty^2.
\end{split}
\]
Thus $\|AB\|_\infty\leq\|A_0B_0\|_\infty$.
On the subspace $\operatorname{span}\{u,v\}$, a direct
two-dimensional calculation shows that the eigenvalues of
$(A_0B_0)^*(A_0B_0)$
are the roots of
\[
\lambda^2
-T(p,q,\theta)\lambda
+p^2q^2
=0.
\]
The larger root is $\eta(p,q,\theta)$. On the orthogonal complement
of $\operatorname{span}\{u,v\}$, the product $A_0B_0$ is equal to
$pqI$. Hence
$\|A_0B_0\|_\infty^2=\eta(p,q,\theta)$.
It follows that
$\|AB\|_\infty\leq\sqrt{\eta(p,q,\theta)}$,
which proves \eqref{eq:norm-defect-exact} under the normalization
$a_1=b_1=1$. Scaling back gives the general case.

Let $\eta_-$ denote the smaller root of the above quadratic. Then
$\eta\eta_-=p^2q^2$
and
\[
(1-\eta)(1-\eta_-)
=
(1-p^2)(1-q^2)\sin^2\theta.
\]
Since $\eta\leq1$, we have
\[
\eta_-=\frac{p^2q^2}{\eta}\geq p^2q^2.
\]
Therefore,
\[
1-\eta
=
\frac{(1-p^2)(1-q^2)\sin^2\theta}{1-\eta_-}
\geq
\frac{(1-p^2)(1-q^2)}{1-p^2q^2}
\sin^2\theta.
\]
Equivalently,
\[
\eta
\leq
1-C(p,q)\sin^2\theta.
\]
Combining this with \eqref{eq:norm-defect-exact} and using
$-\log(1-s)\geq s$ for $ 0\leq s<1$
proves \eqref{eq:norm-defect-chain}.

It remains to prove the equality statement. Under the normalization
$\|A\|_\infty=\|B\|_\infty=1$, suppose that
$\|AB\|_\infty=1$.
Choose a unit vector $x$ such that $\|ABx\|=1$. Then
\[
1=\|ABx\|
\leq\|Bx\|
\leq1.
\]
Thus $\|Bx\|=1$. Since the largest eigenvalue of $B$ is simple, this
implies that $x\in\mathbb{C}v$. We also have
$\|A(Bx)\|=\|Bx\|$,
so the simplicity of the largest eigenvalue of $A$ implies that
$Bx\in\mathbb{C}u$. Since $Bx\in\mathbb{C}v$, it follows that
$\mathbb{C}u=\mathbb{C}v$.

Conversely, suppose that $\theta=0$. Then the eigenspaces of $A$ and
$B$ corresponding to their largest eigenvalues coincide. Hence there exists a common unit eigenvector $w$ such that
$
Aw=a_1w
\qquad\text{and}\qquad
Bw=b_1w$.
Therefore, $ABw=a_1b_1w$,
and hence $\|AB\|_\infty\geq\|ABw\|=a_1b_1=\|A\|_\infty\|B\|_\infty$.

The reverse inequality follows from the submultiplicativity of the
operator norm. Consequently,
\[
\|AB\|_\infty
=
\|A\|_\infty\|B\|_\infty.
\]
Finally, for fixed $p,q,$ and $\theta$, the matrices
$A=A_0$ and $B=B_0$
attain equality in \eqref{eq:norm-defect-exact}. Hence the first bound
is sharp.
\end{proof}

\begin{corollary}
\label{cor:fixed-k-local-equality}
Under the hypotheses and notation of
Proposition~\ref{prop:exterior-local-gap}, fix $1\leq k\leq d$ and
set
\[
A_k:=\bigwedge^kX^r,
\qquad
B_k:=\bigwedge^kY^r.
\]
Then $\mathcal E_k^{\alpha,z}(a,b)=0$
if and only if there exists a unit vector
$\xi\in\bigwedge^k\mathbb{C}^d$ such that
\begin{equation}\label{eq:wedge-maximizer}
\|B_k\xi\|
=
\|B_k\|_\infty,
\qquad
\|A_kB_k\xi\|
=
\|A_k\|_\infty\,\|B_k\xi\|.
\end{equation}

Suppose, in addition, that $k<d$ and that
$\lambda_k(X)>\lambda_{k+1}(X)$, and
$\lambda_k(Y)>\lambda_{k+1}(Y)$.
Then $\mathcal E_k^{\alpha,z}(a,b)=0$
if and only if the spectral subspaces of $X$ and $Y$ corresponding
to their $k$ largest eigenvalues coincide.
\end{corollary}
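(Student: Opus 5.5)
By Proposition~\ref{prop:exterior-local-gap}, $\mathcal E_k^{\alpha,z}(a,b)=\frac{2}{r}\log\bigl(\|A_k\|_\infty\|B_k\|_\infty/\|A_kB_k\|_\infty\bigr)$. So $\mathcal E_k^{\alpha,z}(a,b)=0$ is the same as $\|A_kB_k\|_\infty=\|A_k\|_\infty\|B_k\|_\infty$. Both parts of the corollary reduce to characterizing equality in operator-norm submultiplicativity for the positive definite matrices $A_k,B_k$ acting on $\bigwedge^k\mathbb C^d$.

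\emph{First claim.} Assume $\|A_kB_k\|_\infty=\|A_k\|_\infty\|B_k\|_\infty$. The space is finite-dimensional, so there is a unit $\xi$ with $\|A_kB_k\xi\|=\|A_kB_k\|_\infty$. The chain
\[
\|A_kB_k\xi\|\le\|A_k\|_\infty\|B_k\xi\|\le\|A_k\|_\infty\|B_k\|_\infty
\]
then has equal ends, so both inequalities are equalities; this is \eqref{eq:wedge-maximizer}. Conversely, if $\xi$ satisfies \eqref{eq:wedge-maximizer}, then $\|A_kB_k\|_\infty\ge\|A_kB_k\xi\|=\|A_k\|_\infty\|B_k\|_\infty$. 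The reverse inequality is ordinary submultiplicativity.

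\emph{Second claim.} Let $P$ and $Q$ be the spectral projections of $X$ and $Y$ onto their top-$k$ eigenspaces. Since $X,Y>0$, the eigenvalues of $X^r$ are $\lambda_j(X)^r$ and its eigenvectors are those of $X$; the same holds for $Y$. Take an orthonormal eigenbasis $e_1,\dots,e_d$ of $X$. The vectors $e_{i_1}\wedge\cdots\wedge e_{i_k}$ form an eigenbasis of $A_k$ with eigenvalues $\prod_{m}\lambda_{i_m}(X)^r$. The largest is $\prod_{j\le k}\lambda_j(X)^r$, attained only by the index set $\{1,\dots,k\}$. Every other index set replaces some $j\le k$ by some $i>k$, which strictly lowers the product because $\lambda_k(X)>\lambda_{k+1}(X)$ and all eigenvalues are positive. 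Hence the top eigenvalue of $A_k$ is simple, with eigenvector the decomposable vector $\omega_P=e_1\wedge\cdots\wedge e_k$, which depends only on $\operatorname{ran}P$ up to a phase. The same argument for $B_k$ gives a simple top eigenvalue with eigenvector $\omega_Q$.

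Now apply the equality clause of Lemma~\ref{lem:norm-defect-angle}, with $n=\binom dk\ge2$ because $1\le k<d$. It says the defect vanishes if and only if $\mathbb C\omega_P=\mathbb C\omega_Q$. Alternatively, argue directly from \eqref{eq:wedge-maximizer}: the first condition forces $\xi\in\mathbb C\omega_Q$, and the second forces $B_k\xi\in\mathbb C\omega_P$. It remains to show that $\mathbb C\omega_P=\mathbb C\omega_Q$ holds exactly when $\operatorname{ran}P=\operatorname{ran}Q$. One direction is immediate. For the other, use the Plücker fact that a nonzero decomposable $k$-vector $\omega$ determines its subspace as $\{x:x\wedge\omega=0\}$. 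Equivalently, $|\langle\omega_P,\omega_Q\rangle|=|\det(\langle e_i,f_j\rangle)|=\prod_j\cos\theta_j$ over the principal angles, which equals $1$ exactly when all angles vanish.

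The main obstacle is making the simplicity argument on $\bigwedge^k$ airtight, namely that strictly decreasing the eigenvalue product really requires only the gap at position $k$. A second point needing care is stating the decomposable-vector identification cleanly; I would cite the Plücker embedding rather than reprove it. Since the spectral subspaces of $X$ and $Y$ coincide with those of $a$ and of $b$, the statement also reads directly in terms of $a$ and $b$.
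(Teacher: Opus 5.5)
Your proof is correct and follows essentially the same route as the paper: you reduce $\mathcal E_k^{\alpha,z}(a,b)=0$ to equality in operator-norm submultiplicativity on $\bigwedge^k\mathbb{C}^d$, obtain the maximizer $\xi$ from the same chain of inequalities, and then use simplicity of the top eigenvalues of $A_k,B_k$ together with the fact that a decomposable $k$-vector determines its subspace. Where the paper only asserts these last two facts, you prove them (the index-set comparison for simplicity, and Pl\"ucker or principal angles for the identification). One caution on your closing aside: for $\alpha>1$ the top-$k$ spectral subspace of $Y=b^{(1-\alpha)/2}$ is the bottom-$k$ spectral subspace of $b$, so restating the result in terms of $b$ reverses the eigenvalue ordering.
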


\begin{proof}
By Proposition~\ref{prop:exterior-local-gap},
$\mathcal E_k^{\alpha,z}(a,b)=0$
if and only if
$\|A_kB_k\|_\infty=\|A_k\|_\infty\|B_k\|_\infty$.

Suppose first that this equality holds. Since the space
$\bigwedge^k\mathbb{C}^d$ is finite-dimensional, there exists a unit
vector $\xi$ such that
$\|A_kB_k\xi\|=\|A_kB_k\|_\infty$.
It follows that
\[
\begin{split}
\|A_k\|_\infty\|B_k\|_\infty
&=
\|A_kB_k\xi\|\\
&\leq
\|A_k\|_\infty\|B_k\xi\|\\
&\leq
\|A_k\|_\infty\|B_k\|_\infty.
\end{split}
\]
Hence equality holds throughout, and therefore
$\|B_k\xi\|=\|B_k\|_\infty$
and
$\|A_kB_k\xi\|=\|A_k\|_\infty\|B_k\xi\|$.
Thus \eqref{eq:wedge-maximizer} holds.

Conversely, if a unit vector $\xi$ satisfies
\eqref{eq:wedge-maximizer}, then
\[
\begin{split}
\|A_kB_k\|_\infty
&\geq
\|A_kB_k\xi\|\\
&=
\|A_k\|_\infty\|B_k\xi\|\\
&=
\|A_k\|_\infty\|B_k\|_\infty.
\end{split}
\]
The reverse inequality follows from the submultiplicativity of the
operator norm. Hence
\[
\|A_kB_k\|_\infty=\|A_k\|_\infty\|B_k\|_\infty,
\]
and therefore $\mathcal E_k^{\alpha,z}(a,b)=0$.

Now assume that $k<d$ and that the stated spectral-gap conditions
hold. Let $E_X$ and $E_Y$ denote the spectral subspaces of $X$ and
$Y$, respectively, corresponding to their $k$ largest eigenvalues.
The largest eigenvalues of $A_k$ and $B_k$ are then simple. Their
corresponding one-dimensional eigenspaces are
\[
\bigwedge^kE_X
\qquad\text{and}\qquad
\bigwedge^kE_Y,
\]
respectively.

Condition \eqref{eq:wedge-maximizer} therefore holds exactly when
\[
\bigwedge^kE_X=\bigwedge^kE_Y.
\]
Two $k$-dimensional subspaces have the same one-dimensional exterior
power precisely when the subspaces themselves are equal. Thus
\[
\bigwedge^kE_X=\bigwedge^kE_Y
\quad\text{ if and only if}\quad
E_X=E_Y,
\]
which proves the second assertion.
\end{proof}

Proposition~\ref{pro:general-local-det} follows from the established
submultiplicativity of $\Lambda$, while the distinction between
$0<\alpha<1$ and $\alpha>1$, illustrated in
Example~\ref{ex:wrong-alpha-greater-one}, identifies the correct form
of the local determinant bound. We now strengthen this qualitative
bound by giving a quantitative remainder that measures the spectral
misalignment of the two factors. Theorem~\ref{thm:principal-angle-remainder},
the first main result of the paper, treats the matrix setting and
expresses the remainder in terms of the principal angles between the
dominant spectral subspaces. A finite von Neumann algebra counterpart
will be established later in Theorem~\ref{thm:spectral-flattening}.

\begin{theorem}
\label{thm:principal-angle-remainder}
Let $a,b\in\mathbb{M}_d$ be positive definite, let
$\alpha>0$ with $\alpha\ne1$, and let $z>0$. Set
\[
X:=a^{\alpha/2},
\qquad
Y:=b^{(1-\alpha)/2},
\qquad
r:=\frac{1}{z}.
\]
Fix $1\leq k<d$, and write the eigenvalues of $X$ and $Y$ in
decreasing order as
\[
x_1\geq\cdots\geq x_d>0,
\qquad
y_1\geq\cdots\geq y_d>0.
\]
Assume that
\begin{equation}\label{eq:boundary-spectral-gaps}
x_k>x_{k+1},
\qquad
y_k>y_{k+1}.
\end{equation}

Let $E_k$ and $F_k$ be the spectral subspaces of $X$ and $Y$
corresponding to their $k$ largest eigenvalues, and let $P_k$ and
$R_k$ be the orthogonal projections onto $E_k$ and $F_k$,
respectively. Let
\[
0\leq\theta_1\leq\cdots\leq\theta_k\leq\frac{\pi}{2}
\]
be the principal angles between $E_k$ and $F_k$. Define
\begin{align}
p_k
&:=
\left(\frac{x_{k+1}}{x_k}\right)^{1/z},
&
q_k
&:=
\left(\frac{y_{k+1}}{y_k}\right)^{1/z},
\label{eq:def-pk-qk}\\
S_k
&:=
1-\prod_{j=1}^k\cos^2\theta_j,
&
C_k
&:=
\frac{(1-p_k^2)(1-q_k^2)}
     {1-p_k^2q_k^2},
\label{eq:def-Sk-Ck}\\
T_k
&:=
(1+p_k^2q_k^2)(1-S_k)
+(p_k^2+q_k^2)S_k,
&
\eta_k
&:=
\frac{T_k+\sqrt{T_k^2-4p_k^2q_k^2}}{2}.
\label{eq:def-Tk-etak}
\end{align}
Then
$0<p_k,q_k<1$, and $0<C_k<1$,
and
\begin{equation}\label{eq:principal-angle-chain}
\begin{split}
\mathcal E_k^{\alpha,z}(a,b)
&\geq
-z\log\eta_k\\
&\geq
-z\log(1-C_kS_k)\\
&\geq
zC_kS_k\\
&\geq
zC_k\|P_k-R_k\|_\infty^2.
\end{split}
\end{equation}
Moreover,
\begin{equation}\label{eq:HS-angle-remainder}
\mathcal E_k^{\alpha,z}(a,b)
\geq
\frac{zC_k}{2k}\|P_k-R_k\|_2^2.
\end{equation}
Finally,
\begin{equation}\label{eq:angle-gap-equality}
\mathcal E_k^{\alpha,z}(a,b)=0
\quad\text{if and only if}\quad
E_k=F_k
\quad\text{if and only if}\quad
P_k=R_k.
\end{equation}
\end{theorem}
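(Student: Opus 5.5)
The plan is to reduce everything to Lemma~\ref{lem:norm-defect-angle}, applied on the $k$th exterior power, using Proposition~\ref{prop:exterior-local-gap}.

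\emph{Step 1: the exterior-power data.} Put $A_k:=\bigwedge^kX^r$ and $B_k:=\bigwedge^kY^r$. By Proposition~\ref{prop:exterior-local-gap},
\[
\mathcal E_k^{\alpha,z}(a,b)=2z\log\frac{\|A_k\|_\infty\|B_k\|_\infty}{\|A_kB_k\|_\infty}.
\]
The eigenvalues of $A_k$ are the products $\prod_{j\in J}x_j^r$ over $k$-subsets $J$. The largest is $\prod_{j\le k}x_j^r$. The gap $x_k>x_{k+1}$ makes it simple, and the second largest is obtained by replacing $x_k$ with $x_{k+1}$. So the ratio of second to first eigenvalue is exactly $p_k$, and likewise $q_k$ for $B_k$; hence $0<p_k,q_k<1$. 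Then $0<C_k<1$ follows from $(1-p^2)(1-q^2)<1-p^2q^2$, which is equivalent to $p^2+q^2>2p^2q^2$.

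The top eigenvectors are unit decomposable vectors $u=e_1\wedge\cdots\wedge e_k$ and $v=f_1\wedge\cdots\wedge f_k$, with $\{e_i\}$ and $\{f_i\}$ orthonormal bases of $E_k$ and $F_k$. Their inner product is $\det(\langle e_i,f_j\rangle)$. Choosing principal bases, this equals $\prod_j\cos\theta_j$. Therefore $\cos^2\theta=\prod_j\cos^2\theta_j$, that is, $\sin^2\theta=S_k$.

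\emph{Step 2: the chain of bounds.} Apply Lemma~\ref{lem:norm-defect-angle} to $A_k,B_k$ in dimension $n=\binom dk\ge2$, using $T(p_k,q_k,\theta)=T_k$. Multiplying \eqref{eq:norm-defect-exact} and \eqref{eq:norm-defect-chain} by $2z$ gives the first three inequalities of \eqref{eq:principal-angle-chain}.

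For the last inequality, use the standard identity $\|P_k-R_k\|_\infty=\sin\theta_k$, valid for equal-rank projections. Then
\[
S_k=1-\prod_j\cos^2\theta_j\ge 1-\cos^2\theta_k=\sin^2\theta_k .
\]
For \eqref{eq:HS-angle-remainder}, use $\|P_k-R_k\|_2^2=2\sum_j\sin^2\theta_j\le 2k\sin^2\theta_k$. Combined with $S_k\ge\sin^2\theta_k$, this gives $S_k\ge\|P_k-R_k\|_2^2/(2k)$.

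\emph{Step 3: equality cases.} The equality statement \eqref{eq:angle-gap-equality} follows from Corollary~\ref{cor:fixed-k-local-equality}. Alternatively, it follows from the equality part of Lemma~\ref{lem:norm-defect-angle}: $\theta=0$ iff $S_k=0$, which holds iff all $\theta_j=0$, i.e. iff $E_k=F_k$.

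\emph{Main obstacle.} I expect the main obstacle to be Step 1's identification of the exterior-power data. This requires verifying the second-eigenvalue ratio carefully when there are ties among $x_1,\dots,x_k$ or below $x_{k+1}$, since only the boundary gap matters. It also requires justifying that the overlap of decomposable vectors equals the product of the cosines of the principal angles, via the singular value decomposition of the $k\times k$ matrix $(\langle e_i,f_j\rangle)$.
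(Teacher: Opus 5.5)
Your proposal is correct and follows the paper's proof essentially step by step. The shared steps are: the reduction to $\bigwedge^k X^r,\bigwedge^k Y^r$ via Proposition~\ref{prop:exterior-local-gap}; the top-eigenvalue ratios $p_k,q_k$; $S_k$ as the squared sine between the decomposable top eigenvectors, obtained via the Gram determinant; Lemma~\ref{lem:norm-defect-angle} for the chain; and the identities $\|P_k-R_k\|_\infty=\sin\theta_k$ and $\|P_k-R_k\|_2^2=2\sum_j\sin^2\theta_j$. The tie issue you flag is harmless: for any $k$-subset $J\neq\{1,\dots,k\}$ with elements $j_1<\cdots<j_k$, one has $x_{j_i}\le x_i$ and $x_{j_k}\le x_{k+1}<x_k$, so the top eigenvalue of $\bigwedge^k X^r$ is simple and the second is $(x_1\cdots x_{k-1}x_{k+1})^r$ regardless of other multiplicities.
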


\begin{proof}
Set
$A_k:=\bigwedge^kX^r$, and
$B_k:=\bigwedge^kY^r$.
The largest eigenvalue of $A_k$ is
$(x_1\cdots x_k)^r$,
while its second largest eigenvalue is
$(x_1\cdots x_{k-1}x_{k+1})^r$.
Hence the ratio of its second largest eigenvalue to its largest
eigenvalue is
\[
\frac{(x_1\cdots x_{k-1}x_{k+1})^r}
     {(x_1\cdots x_k)^r}
=
\left(\frac{x_{k+1}}{x_k}\right)^r
=
p_k.
\]
Similarly, the corresponding ratio for $B_k$ is
\[
q_k=
\left(\frac{y_{k+1}}{y_k}\right)^r.
\]
The spectral-gap assumptions imply that
$0<p_k,q_k<1$. They also imply $0<C_k<1$.

Choose orthonormal bases
$
e_1,\ldots,e_k
\quad\text{and}\quad
f_1,\ldots,f_k
$
of $E_k$ and $F_k$, respectively. The unit eigenvectors corresponding
to the largest eigenvalues of $A_k$ and $B_k$ are
\[
\xi_X:=e_1\wedge\cdots\wedge e_k,
\qquad
\xi_Y:=f_1\wedge\cdots\wedge f_k.
\]
By the Gram determinant identity, $\langle\xi_X,\xi_Y\rangle=
\det[\langle e_i,f_j\rangle]_{i,j=1}^k$.
The singular values of the matrix
$
[\langle e_i,f_j\rangle]_{i,j=1}^k
$
are $\cos\theta_1,\ldots,\cos\theta_k$. Therefore,
\[
|\langle\xi_X,\xi_Y\rangle|
=
\prod_{j=1}^k\cos\theta_j.
\]
It follows that the squared sine of the angle between
$\xi_X$ and $\xi_Y$ is
\[
1-|\langle\xi_X,\xi_Y\rangle|^2
=
1-\prod_{j=1}^k\cos^2\theta_j
=
S_k.
\]

By Proposition~\ref{prop:exterior-local-gap},
\[
\mathcal E_k^{\alpha,z}(a,b)
=
\frac{2}{r}
\log
\frac{\|A_k\|_\infty\|B_k\|_\infty}
     {\|A_kB_k\|_\infty}.
\]
Applying Lemma~\ref{lem:norm-defect-angle} to $A_k$ and $B_k$ gives
\[
\log
\frac{\|A_k\|_\infty\|B_k\|_\infty}
     {\|A_kB_k\|_\infty}
\geq
-\frac12\log\eta_k.
\]
Since \(1/r=z\), we obtain
$\mathcal E_k^{\alpha,z}(a,b)\geq
-z\log\eta_k$.
The remaining estimates from
Lemma~\ref{lem:norm-defect-angle} give
\[
-z\log\eta_k
\geq
-z\log(1-C_kS_k)
\geq
zC_kS_k.
\]

For two $k$-dimensional subspaces, the operator-norm distance between
their orthogonal projections satisfies
$\|P_k-R_k\|_\infty=\sin\theta_k$,
while
\[
\|P_k-R_k\|_2^2
=
2\sum_{j=1}^k\sin^2\theta_j.
\]
Moreover,
\[
\begin{split}
S_k
&=
1-\prod_{j=1}^k(1-\sin^2\theta_j)\\
&\geq
\max_{1\leq j\leq k}\sin^2\theta_j\\
&=
\sin^2\theta_k
=
\|P_k-R_k\|_\infty^2.
\end{split}
\]
This proves the last inequality in
\eqref{eq:principal-angle-chain}.

Since the maximum of nonnegative numbers is at least their average,
\[
S_k
\geq
\max_{1\leq j\leq k}\sin^2\theta_j
\geq
\frac1k\sum_{j=1}^k\sin^2\theta_j
=
\frac{1}{2k}\|P_k-R_k\|_2^2.
\]
Combining this with
$\mathcal E_k^{\alpha,z}(a,b)\geq zC_kS_k$
gives \eqref{eq:HS-angle-remainder}.

Finally, since $C_k>0$, the estimate
$\mathcal E_k^{\alpha,z}(a,b)
\geq
zC_k\|P_k-R_k\|_\infty^2
$
shows that
$\mathcal E_k^{\alpha,z}(a,b)=0$ implies $P_k=R_k$.
This is equivalent to $E_k=F_k$.

Conversely, if $E_k=F_k$, then the largest-eigenvalue eigenspaces of
$A_k$ and $B_k$ coincide. Their common unit eigenvector therefore
attains equality in
$\|A_kB_k\|_\infty\leq
\|A_k\|_\infty\|B_k\|_\infty$.
Proposition~\ref{prop:exterior-local-gap} then gives
$\mathcal E_k^{\alpha,z}(a,b)=0$.
This proves \eqref{eq:angle-gap-equality}.
\end{proof}
\begin{remark}[Interpretation, sharpness, and boundary cases]
Assume that the eigenvalues of $a$ and $b$ are arranged in decreasing
order. Since $X=a^{\alpha/2}$, we have
\[
p_k
=
\left(
\frac{\lambda_{k+1}(a)}{\lambda_k(a)}
\right)^{\alpha/(2z)}.
\]
The expression for $q_k$ depends on the range of $\alpha$. If
$0<\alpha<1$, the positive power
$Y=b^{(1-\alpha)/2}$ preserves the order of the eigenvalues, and hence
\[
q_k
=
\left(
\frac{\lambda_{k+1}(b)}{\lambda_k(b)}
\right)^{(1-\alpha)/(2z)}.
\]
If $\alpha>1$, then
$Y=b^{-(\alpha-1)/2}$,
and the negative power reverses the order of the eigenvalues. Therefore,
\[
q_k
=
\left(
\frac{\lambda_{d+1-k}(b)}
     {\lambda_{d-k}(b)}
\right)^{(\alpha-1)/(2z)}.
\]
Thus, when $\alpha>1$, the remainder is controlled by a spectral gap
at the lower end of the spectrum of $b$.

The bound $-z\log\eta_k$ is obtained from the sharp
two-dimensional norm estimate in
Lemma~\ref{lem:norm-defect-angle}. It is attained in the local
determinant problem when $k=1$. For $k>1$, the exterior-power
operators have additional algebraic structure, so we do not claim
that the bound is always attained. As the exterior angle tends to
zero, the simpler term $zC_kS_k$ has the same second-order coefficient
as $-z\log\eta_k$. If either spectral gap in
\eqref{eq:boundary-spectral-gaps} closes, then the corresponding
quantity $p_k$ or $q_k$ becomes one, so that $C_k=0$. In that case,
the dominant $k$-dimensional spectral subspace may also fail to be
unique. Finally, when $k=d$, the space
$\bigwedge^d\mathbb{C}^d$ is one-dimensional, and hence
$\mathcal E_d^{\alpha,z}(a,b)=0$
identically.
\end{remark}

The following two-dimensional example shows that the first bound in
Theorem~\ref{thm:principal-angle-remainder} can be attained. It also
illustrates the loss that occurs when this sharp bound is replaced by
the two simpler estimates in \eqref{eq:principal-angle-chain}.

\begin{example}[A sharp two-dimensional test]
\label{ex:sharp-angle-test}
Take $d=2$, $k=1$, $\alpha=1/2$, $z=1$, and
\[
 X=\begin{pmatrix}1&0\\0&1/2\end{pmatrix},\qquad
 Y=U_{\pi/3}\begin{pmatrix}1&0\\0&1/2\end{pmatrix}U_{\pi/3}^*,
\]
where $U_{\pi/3}$ is the planar rotation through $\pi/3$.  Equivalently, one
may take $a=X^4$ and $b=Y^4$.  Here
\[
 p_1=q_1=\frac12,\qquad S_1=\frac34,\qquad C_1=\frac35,
\]
and
\[
 \eta_1=\frac{41+\sqrt{657}}{128}.
\]
The first estimate is exact:
\[
 \mathcal E_1^{1/2,1}(a,b)=-\log\eta_1\approx0.6527,
\]
whereas the next two lower bounds are $-\log(0.55)\approx0.5978$ and
$0.45$, respectively.
\end{example}

\begin{remark}[Why relative spectral geometry is essential]
The local determinant defect depends not only on whether $a$ and $b$
commute, but also on how their eigenvalues are aligned. When
$0<\alpha<1$, equality at every intermediate level is naturally
associated with a common eigenbasis in which the eigenvalues of $a$
and $b$ occur in the same decreasing order. When $\alpha>1$, the
negative power in
\[
Y=b^{(1-\alpha)/2}
\]
reverses the spectral order. Thus, in this case, the largest
eigenvalues of $a$ must align with the smallest eigenvalues of $b$.
Consequently, two commuting matrices may still have a positive local
determinant defect if their spectral orderings are not properly
aligned. Notice also that
\[
\mathcal E_d^{\alpha,z}(a,b)=0
\]
for all positive definite $a$ and $b$, since
$\bigwedge^d\mathbb{C}^d$ is one-dimensional. It follows that a
remainder depending only on the commutator norm $\|[a,b]\|$ cannot
give a sharp characterization of the local defect.

Spectral variance alone is also insufficient. For an operator $x$
with $\ell_t(x)\in\mathbb{R}$, consider
\begin{equation}\label{eq:local-variance}
V_{t,\log}(x)
:=
\frac{1}{t}\int_0^t
\left(
\log\mu_s(x)-\frac{1}{t}\ell_t(x)
\right)^2\,ds,
\end{equation}
whenever the integral is finite. Let $0<\alpha<1$ and take
$a=b$ with $a$ non-scalar. Then
$Q_{\alpha,z}(a,a)=a$,
and equality holds in the local determinant bound for every $t>0$.
Hence
$\mathcal E_t^{\alpha,z}(a,a)=0$.
However, whenever $s\mapsto\log\mu_s(a)$ is not constant on $(0,t)$,
we have
$V_{t,\log}(a)>0$.
Therefore, no universal positive lower bound for the local determinant
defect can depend only on
$V_{t,\log}(Q_{\alpha,z}(a,b))$.

These observations show that a meaningful quantitative remainder must
record the relative spectral geometry of
$a^{\alpha/2}$ and $b^{(1-\alpha)/2}$. In the matrix setting,
Theorem~\ref{thm:principal-angle-remainder} does this through the
principal angles between their dominant spectral subspaces.
\end{remark}
\subsection{A semifinite two-projection remainder}
\label{subsec:two-projection-remainder}

The exterior-power argument above does not have a direct analogue in a
diffuse semifinite von Neumann algebra.  For the basic two-level model, we can
instead use an exact formula.  It is expressed through the spectral
distribution of a pair of projections.

Let $(\M,\tau)$ be a semifinite von Neumann algebra, let $e,f\in\M$ be
finite-trace projections, and assume
\begin{equation}\label{eq:balanced-projections}
 0<t<\infty,
 \qquad \tau(e)=\tau(f)=t.
\end{equation}
The normalized trace on the corner $e\M e$ is
\[
 \tau_e(x):=\frac{\tau(x)}{t},
 \qquad x\in e\M e.
\]
We use the tracial norm
\[
 \|x\|_{2,\tau}:=\tau(x^*x)^{1/2}.
\]
We denote by $\nu_{e,f}$ the spectral distribution of $efe$ with respect to
$\tau_e$; thus
\begin{equation}\label{eq:angle-distribution}
 \tau_e(g(efe))
 =\int_{[0,1]}g(x)\,d\nu_{e,f}(x)
\end{equation}
for every bounded Borel function $g$.  For matrices, the support points of
this measure are the numbers $\cos^2\theta_j$ determined by the principal
angles.  Hence $\nu_{e,f}$ is the natural continuous angle distribution.
Notice in particular that
\begin{equation}\label{eq:angle-endpoint-masses}
 t\nu_{e,f}(\{0\})=\tau(e\wedge f^\perp),
 \qquad
 t\nu_{e,f}(\{1\})=\tau(e\wedge f).
\end{equation}
Indeed, the kernel and the $1$-eigenspace of the positive contraction
$efe$ in $e\M e$ are, respectively, $e\wedge f^\perp$ and $e\wedge f$.

We shall use the following tracial form of the standard
two-projection decomposition of Halmos
\cite{HalmosTwoSubspaces}. We include the details needed later for
the angle measure.

\begin{lemma}[Tracial form of the two-projection decomposition]
\label{lem:tracial-two-projection-decomposition}
Let $e,f\in\M$ be projections satisfying
$\tau(e)=\tau(f)=t<\infty$,
and set
\[
e_{11}:=e\wedge f,
\qquad
e_{10}:=e\wedge f^\perp,
\qquad
e_{01}:=e^\perp\wedge f,
\qquad
e_{00}:=e^\perp\wedge f^\perp.
\]
After removing these four reducing subspaces, the generic part of
the von Neumann algebra generated by $e$ and $f$ is
trace-preservingly isomorphic to $\mathbb{M}_2(\mathcal A)$ for an
abelian von Neumann algebra $\mathcal A$. Under this identification,
\begin{equation}\label{eq:generic-two-projection-model}
e=
\begin{pmatrix}
1&0\\
0&0
\end{pmatrix},
\qquad
f=
\begin{pmatrix}
c&\sqrt{c(1-c)}\\
\sqrt{c(1-c)}&1-c
\end{pmatrix},
\end{equation}
where $c\in\mathcal A$ is a positive contraction satisfying
$\chi_{\{0\}}(c)=\chi_{\{1\}}(c)=0$.
The restriction of $\tau$ to the generic part has the form
$\varphi\otimes\operatorname{Tr}_2$,
where $\varphi$ is a finite normal trace on $\mathcal A$. Moreover,
\begin{equation}\label{eq:balanced-mismatch-traces}
\tau(e_{10})=\tau(e_{01}).
\end{equation}
If $\nu_{e,f}$ denotes the normalized spectral distribution of
$efe$ in the corner $e\M e$, then, for every bounded Borel function
$u$ on $[0,1]$,
\begin{equation}\label{eq:angle-measure-decomposition}
t\int_{[0,1]}u(x)\,d\nu_{e,f}(x)
=
\tau(e_{11})u(1)
+\tau(e_{10})u(0)
+\varphi(u(c)).
\end{equation}
\end{lemma}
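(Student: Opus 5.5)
The plan is to invoke Halmos' two-subspace theorem in its von Neumann algebra form, then read off the trace statements from the resulting block decomposition.

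\emph{Step 1: the decomposition.} Let $\mathcal N=\{e,f\}''\subseteq\M$. The four projections $e_{11},e_{10},e_{01},e_{00}$ lie in $\mathcal N$ and are central in $\mathcal N$. Put $g:=I-(e_{11}+e_{10}+e_{01}+e_{00})$. On the generic part $g\mathcal N g$, the projections $ege$ and $g-ege$ are equivalent in $g\mathcal N g$: the partial isometry in the polar decomposition of $(g-e)fe$ implements this equivalence. Halmos' theorem then gives a spatial identification $g\mathcal N g\cong\mathbb M_2(\mathcal A)$ with $\mathcal A$ abelian. In this picture $e$ and $f$ take the form \eqref{eq:generic-two-projection-model}, with $c=efe|_{ge}$ transported to $\mathcal A$. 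The generic condition removes the kernel and the $1$-eigenspace of $c$, which gives $\chi_{\{0\}}(c)=\chi_{\{1\}}(c)=0$.

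\emph{Step 2: the trace on the generic part.} Restrict $\tau$ to $g\mathcal N g$. It is finite there, since $g\le e\vee f$ and $\tau(e\vee f)\le 2t$. Define $\varphi(x):=\tau(x\otimes e_{11}^{(2)})$, where $e_{11}^{(2)}$ denotes the $(1,1)$ matrix unit. The main point is that $\tau$ is a trace, so it takes the same value on equivalent projections. Hence $\tau(x\otimes e_{22}^{(2)})=\varphi(x)$, because the matrix units $x\otimes e_{11}^{(2)}$ and $x\otimes e_{22}^{(2)}$ are related by the unitary-like partial isometry $1\otimes e_{21}^{(2)}$. Traciality also kills the off-diagonal terms. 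Together these give $\tau|_{g\mathcal N g}=\varphi\otimes\Tr_2$. Normality and finiteness of $\varphi$ are inherited from $\tau$.

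\emph{Step 3: balancing the mismatch traces.} Decompose $e=e_{11}+e_{10}+ge$, where $ge$ corresponds to $1\otimes e_{11}^{(2)}$. Similarly decompose $f=e_{11}+e_{01}+gf$. The trace of $gf$ is $\varphi(c)+\varphi(1-c)=\varphi(1)$, which equals $\tau(ge)$. Since $\tau(e)=\tau(f)=t$ and all terms are finite, subtracting gives $\tau(e_{10})=\tau(e_{01})$, which is \eqref{eq:balanced-mismatch-traces}.

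\emph{Step 4: the angle measure.} Here $efe=e_{11}+0\cdot e_{10}+(c\otimes e_{11}^{(2)})$, with the three pieces living on orthogonal summands of $e$. Bounded Borel functional calculus therefore acts summand-wise: $u(efe)=u(1)e_{11}+u(0)e_{10}+u(c)\otimes e_{11}^{(2)}$ inside $e\M e$. Taking $\tau$ and using $t\,\tau_e=\tau$ on $e\M e$ yields \eqref{eq:angle-measure-decomposition}.

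The main obstacle is Step 1 in the tracial setting. We must check that the Halmos identification is trace-compatible, i.e.\ that the off-diagonal partial isometry built from the polar decomposition lies in $\mathcal N$ and induces a genuine matrix-unit system. The finite-trace hypothesis keeps everything in a finite algebra, which removes any issue of infinite traces on the generic part. If needed, I would avoid the full structure theorem and directly define the matrix units from the polar decomposition of $(g-e)fe$.
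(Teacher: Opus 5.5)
Your proposal is correct and follows essentially the paper's route: Halmos' decomposition on the finite-trace generic part, traciality to identify $\tau$ there with $\varphi\otimes\Tr_2$, subtraction of $\tau(e)=\tau(e_{11})+\tau(e_{10})+\varphi(1)$ from $\tau(f)=\tau(e_{11})+\tau(e_{01})+\varphi(1)$, and summand-wise functional calculus for $efe$ in the $e$-corner; your matrix-unit argument in Step~2 supplies a verification that the paper only asserts. One small slip: the implementing partial isometry should be the polar part of $ge^\perp fe$ (that is, $(g-ge)fe$), because $(g-e)fe=ge^\perp fe-e_{11}$ carries an extra non-generic term.
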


\begin{proof}
Let $g:=e\vee f$. Then
$\tau(g)\leq\tau(e)+\tau(f)=2t$,
so $g\M g$ is a finite corner. Apply the standard two-projection
decomposition \cite{HalmosTwoSubspaces} in this corner. On the
generic part, the polar component of $fe$ identifies the two
diagonal corners. With respect to this identification, $efe$ is
represented by $c$, and the projection relations give
\eqref{eq:generic-two-projection-model}.

The two diagonal corners carry the same finite normal trace
$\varphi$. Hence the generic contributions to $\tau(e)$ and
$\tau(f)$ are both equal to $\varphi(1)$. On the remaining reducing
summands,
\[
\tau(e)=\tau(e_{11})+\tau(e_{10})+\varphi(1)
\quad \text{and} \quad
\tau(f)=\tau(e_{11})+\tau(e_{01})+\varphi(1).
\]
Since $\tau(e)=\tau(f)$, it follows that
$\tau(e_{10})=\tau(e_{01})$.

Finally, on the $e$-corner, the operator $efe$ is equal to $1$ on
$e_{11}$, to $0$ on $e_{10}$, and to $c$ on the generic part.
Therefore,
$\tau\bigl(u(efe)\bigr)=
\tau(e_{11})u(1)
+\tau(e_{10})u(0)
+\varphi(u(c))$.
By the definition of $\nu_{e,f}$,
\[
\tau\bigl(u(efe)\bigr)
=
t\int_{[0,1]}u(x)\,d\nu_{e,f}(x),
\]
which proves \eqref{eq:angle-measure-decomposition}.
\end{proof}
We now use the preceding decomposition to introduce the two-level
operators needed for the finite-algebra remainder estimate. For
$0<p,q<1$, set
\begin{equation}\label{eq:two-level-AB}
A:=pI+(1-p)e,
\qquad
B:=qI+(1-q)f.
\end{equation}
For $x\in[0,1]$, define
\begin{align}
T_{p,q}(x)
&:=
(1+p^2q^2)x+(p^2+q^2)(1-x),
\label{eq:two-projection-T}\\
\eta_{p,q}(x)
&:=
\frac{
T_{p,q}(x)
+\sqrt{T_{p,q}(x)^2-4p^2q^2}
}{2}.
\label{eq:two-projection-eta}
\end{align}
Since
$
pI\leq A\leq I$, and $qI\leq B\leq I$,
both $A$ and $B$ are positive invertible elements of $\M$. In
particular, $(A,B)$ is $t$-determinant-admissible, and its local
product defect is well defined by
Definition~\ref{def:local-product-defect}.

We now apply the tracial two-projection decomposition to obtain an
exact formula for the local product defect of two-level operators.

\begin{theorem}[Exact two-projection formula]
\label{thm:exact-two-projection-formula}
Let $(\M,\tau)$ be a semifinite von Neumann algebra, and let
$e,f\in\M$ be projections satisfying
$\tau(e)=\tau(f)=t$, and $0<t<\infty$.
Let $0<p,q<1$, and set
$A:=pI+(1-p)e$, and $B:=qI+(1-q)f$.
Let $\nu_{e,f}$ denote the normalized spectral distribution of $efe$
in the corner $e\M e$, and let $\eta_{p,q}$ be defined by
\eqref{eq:two-projection-eta}. Then, for every $r>0$,
\begin{equation}\label{eq:exact-two-projection-integral}
\mathfrak D_t^{(r)}(A,B)
=
-\frac{t}{r}
\int_{[0,1]}
\log\eta_{p,q}(x)\,d\nu_{e,f}(x).
\end{equation}
In particular, if
\[
C(p,q):=
\frac{(1-p^2)(1-q^2)}{1-p^2q^2},
\]
then
\begin{equation}\label{eq:semifinite-two-projection-remainder}
\mathfrak D_t^{(r)}(A,B)
\geq
\frac{C(p,q)}{2r}\|e-f\|_{2,\tau}^2.
\end{equation}
Moreover,
\begin{equation}\label{eq:semifinite-two-projection-equality}
\mathfrak D_t^{(r)}(A,B)=0
\quad\text{ if and only if}\quad
e=f.
\end{equation}
\end{theorem}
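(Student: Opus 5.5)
The plan is to compute the three local logarithmic integrals in $\mathfrak D_t^{(r)}(A,B)$ directly from the spectral structure of $A$, $B$ and $|AB|$. For $A=pI+(1-p)e$, the spectral distribution is the value $1$ on $e$ (trace $t$) and the value $p<1$ on $I-e$. Hence $\mu_s(A)=1$ for $0<s<t$, so $\ell_t(A)=0$, and likewise $\ell_t(B)=0$. Therefore
\[
\mathfrak D_t^{(r)}(A,B)=-\tfrac{2}{r}\,\ell_t(|AB|),
\]
and \eqref{eq:exact-two-projection-integral} reduces to the identity
\[
\ell_t(|AB|)=\tfrac{t}{2}\int_{[0,1]}\log\eta_{p,q}\,d\nu_{e,f}.
\]

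To prove this identity I use Lemma~\ref{lem:tracial-two-projection-decomposition}, together with the summand $I-(e\vee f)$. Every summand reduces $A$ and $B$, so $|AB|^2=BA^2B$ decomposes accordingly, with the following values of $|AB|$:
\begin{itemize}
\item $1$ on $e_{11}$;
\item $q$ on $e_{10}$;
\item $p$ on $e_{01}$;
\item $pq$ on $e_{00}$ and on $I-e\vee f$.
\end{itemize}
On the generic part $\mathbb M_2(\mathcal A)$, the same two-dimensional computation as in the proof of Lemma~\ref{lem:norm-defect-angle}, with $\cos^2\theta$ replaced by $c$, shows that $|AB|^2$ has the two "eigenvalue functions" $\eta_{p,q}(c)$ and $\eta_-(c)=p^2q^2/\eta_{p,q}(c)$, each of $\varphi$-weight $\varphi(1)$.

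Next I order these values. Since $T_{p,q}$ is increasing in $x$, the larger root $\eta_{p,q}(x)$ increases from $\eta_{p,q}(0)=\max(p^2,q^2)$ to $\eta_{p,q}(1)=1$. The smaller root decreases from $\min(p^2,q^2)$ to $p^2q^2$. Hence every value of $\sqrt{\eta_{p,q}(c)}$ is at least $\max(p,q)$, and every other value is at most $\max(p,q)$.

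Because $\mu_s(|AB|)$ is the decreasing rearrangement of this spectral distribution, $\ell_t(|AB|)$ is the integral of $\log$ over the top trace-$t$ portion. Using $t=\tau(e_{11})+\varphi(1)+\tau(e_{10})$ and $\tau(e_{10})=\tau(e_{01})$ from \eqref{eq:balanced-mismatch-traces}, this portion consists of three pieces:
\begin{itemize}
\item the value $1$ on $e_{11}$;
\item the values $\sqrt{\eta_{p,q}(c)}$ on the generic part;
\item mass $\tau(e_{10})$ at the level $\max(p,q)=\sqrt{\eta_{p,q}(0)}$, taken from $e_{10}$ or $e_{01}$.
\end{itemize}
Thus
\[
\ell_t(|AB|)=\tau(e_{11})\cdot 0+\tau(e_{10})\tfrac12\log\eta_{p,q}(0)+\varphi\bigl(\tfrac12\log\eta_{p,q}(c)\bigr).
\]
By \eqref{eq:angle-measure-decomposition} with $u=\tfrac12\log\eta_{p,q}$, this is exactly $\tfrac t2\int\log\eta_{p,q}\,d\nu_{e,f}$. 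This rearrangement step, in particular justifying carefully the choice of the mass at $\max(p,q)$ and the rearrangement in the diffuse case, is where I expect the main care to be needed.

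For \eqref{eq:semifinite-two-projection-remainder}, apply pointwise the estimate from the proof of Lemma~\ref{lem:norm-defect-angle} with $\sin^2\theta=1-x$:
\[
-\log\eta_{p,q}(x)\ge -\log\bigl(1-C(p,q)(1-x)\bigr)\ge C(p,q)(1-x).
\]
Integrating against $\nu_{e,f}$ gives
\[
t\int(1-x)\,d\nu_{e,f}=\tau(e-efe)=\tau(ef^\perp e).
\]
Since $\tau(e)=\tau(f)$, we have $\|e-f\|_{2,\tau}^2=\tau(e)+\tau(f)-2\tau(efe)=2\tau(ef^\perp e)$. Therefore
\[
\mathfrak D_t^{(r)}(A,B)\ge \tfrac{C(p,q)}{r}\,\tau(ef^\perp e)=\tfrac{C(p,q)}{2r}\,\|e-f\|_{2,\tau}^2.
\]

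For \eqref{eq:semifinite-two-projection-equality}: if $e=f$ then $\nu_{e,f}=\delta_1$ and $\eta_{p,q}(1)=1$, so the defect vanishes. Conversely, a vanishing defect forces $\|e-f\|_{2,\tau}=0$, because $C(p,q)>0$ for $0<p,q<1$. Faithfulness of $\tau$ then gives $e=f$.
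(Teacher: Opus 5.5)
Your proposal is correct and follows essentially the same route as the paper. You show $\ell_t(A)=\ell_t(B)=0$, use the tracial two-projection decomposition with fiberwise singular values $\sqrt{\eta_{p,q}(c)}$ and $pq/\sqrt{\eta_{p,q}(c)}$, and select the top trace-$t$ portion via branch separation at $\max\{p,q\}$ and $\tau(e_{10})=\tau(e_{01})$. You then integrate the scalar bound $-\log\eta_{p,q}(x)\ge C(p,q)(1-x)$ against $\nu_{e,f}$ and use $\|e-f\|_{2,\tau}^2=2\tau(e-efe)$. One cosmetic point: $e_{00}=e^\perp\wedge f^\perp$ is the same projection as $I-e\vee f$, so you list one summand twice; this is harmless, since you give it the single value $pq$.
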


\begin{proof}
Since $A=pI+(1-p)e$,
the operator $A$ is equal to $1$ on $e$ and to $p$ on $I-e$.
Because $\tau(e)=t$ and $0<p<1$, the largest $t$ units of the
singular-value distribution of $A$ are all equal to $1$. Therefore,
$\ell_t(A)=0$.
The same argument applied to
$B=qI+(1-q)f$ gives $\ell_t(B)=0$.
Thus
\begin{equation}\label{eq:ell-A-B-zero}
\ell_t(A)=\ell_t(B)=0.
\end{equation}
It remains to identify the largest $t$ units of the singular-value
distribution of $AB$. Set
$g:=e\vee f$. Since $e$ and $f$ have finite trace,
\[
\tau(g)
\leq
\tau(e)+\tau(f)
=
2t<\infty.
\]
Hence all the nontrivial relative geometry of $e$ and $f$ is
contained in the finite corner $g\M g$. Moreover, $e$ and $f$
commute with $g$, and therefore so do $A$ and $B$. On the
complementary corner $I-g$, we have
\[
e=f=0,
\qquad
A=pI,
\qquad
B=qI.
\]
Consequently,
\begin{equation}\label{eq:constant-complementary-branch}
(I-g)|AB|(I-g)=pq(I-g).
\end{equation}

We now apply
Lemma~\ref{lem:tracial-two-projection-decomposition}. On the generic
summand, the von Neumann algebra generated by $e$ and $f$ is
represented as $\mathbb M_2(\mathcal A)$, where $\mathcal A$ is
abelian, and
\[
e=
\begin{pmatrix}
1&0\\
0&0
\end{pmatrix},
\qquad
f=
\begin{pmatrix}
c&\sqrt{c(1-c)}\\
\sqrt{c(1-c)}&1-c
\end{pmatrix}.
\]
The restricted trace is $\varphi\otimes\operatorname{Tr}_2$.
By spectral calculus in the abelian algebra $\mathcal A$, it is
enough to examine the scalar fibers $0<x<1$, where
\[
e_x=
\begin{pmatrix}
1&0\\
0&0
\end{pmatrix},
\qquad
f_x=
\begin{pmatrix}
x&\sqrt{x(1-x)}\\
\sqrt{x(1-x)}&1-x
\end{pmatrix}.
\]
For each $x\in(0,1)$,
\[
A_x=
\begin{pmatrix}
1&0\\
0&p
\end{pmatrix},
\qquad
B_x=qI+(1-q)f_x.
\]
A direct calculation gives
\[
\det\left(\lambda I-|A_xB_x|^2\right)
=
\lambda^2-T_{p,q}(x)\lambda+p^2q^2.
\]
Therefore, the squared singular values of $A_xB_x$ are
\[
\eta_{p,q}(x)
\qquad\text{and}\qquad
\frac{p^2q^2}{\eta_{p,q}(x)}.
\]
Thus the two singular values are
\[
\sqrt{\eta_{p,q}(x)}
\qquad\text{and}\qquad
\frac{pq}{\sqrt{\eta_{p,q}(x)}}.
\]

We next show that these two branches remain separated. Since
$T_{p,q}'(x)=(1-p^2)(1-q^2)>0$,
the function $T_{p,q}$ is strictly increasing. The larger root of
$\lambda^2-T\lambda+p^2q^2=0$
is increasing as a function of $T$. At $x=0$, the two roots are
$p^2$ and $q^2$, and hence
\[
\eta_{p,q}(0)=\max\{p^2,q^2\}.
\]
It follows that, for every $x\in[0,1]$,
\[
\sqrt{\eta_{p,q}(x)}
\geq
\max\{p,q\}.
\]
Since the product of the two singular values is $pq$, we also have
\[
\frac{pq}{\sqrt{\eta_{p,q}(x)}}
\leq
\min\{p,q\}.
\]
Therefore,
\begin{equation}\label{eq:branch-separation}
\frac{pq}{\sqrt{\eta_{p,q}(x)}}
\leq
\min\{p,q\}
\leq
\max\{p,q\}
\leq
\sqrt{\eta_{p,q}(x)}.
\end{equation}

We now consider the four intersection summands. On $e_{11}=e\wedge f$,
we have
$A=B=I$, so the corresponding singular value of $AB$ is $1$. Notice that
$\eta_{p,q}(1)=1$.
On $e_{10}=e\wedge f^\perp$, the singular value is $q$, whereas on
$e_{01}=e^\perp\wedge f$, it is $p$. By
\eqref{eq:balanced-mismatch-traces},
$\tau(e_{10})=\tau(e_{01})$.
Hence one may select the larger of the two values \(p,q\), with total
trace $\tau(e_{10})$. This agrees with
\[
\sqrt{\eta_{p,q}(0)}=\max\{p,q\}.
\]
If $p=q$, either mismatch summand may be selected, and the resulting
integral is unchanged.

On the generic part, we select the larger singular-value branch
\[
\sqrt{\eta_{p,q}(c)}.
\]
The total trace carried by $e_{11}$, one mismatch summand, and one
copy of the generic part is
\[
\tau(e_{11})+\tau(e_{10})+\varphi(1)
=
\tau(e)
=
t.
\]
By \eqref{eq:branch-separation}, every selected value is at least
$\max\{p,q\}$, while every value on the unselected generic branch is
at most $\min\{p,q\}$. The remaining mismatch value is also
$\min\{p,q\}$, and the value on $I-g$ is
\[
pq\leq\min\{p,q\}.
\]
Thus the selected contributions form precisely the largest $t$ units
of the singular-value distribution of $AB$. This remains true even
when $\tau(I-g)=\infty$.

By the definition of the normalized angle measure $\nu_{e,f}$,
\[
t\int_{[0,1]}u(x)\,d\nu_{e,f}(x)
=
\tau(e_{11})u(1)
+\tau(e_{10})u(0)
+\varphi(u(c))
\]
for every bounded Borel function $u$. Taking
\[
u(x):=\frac12\log\eta_{p,q}(x)
\]
gives
\begin{equation}\label{eq:ell-AB-angle-measure}
\ell_t(|AB|)
=
\frac{t}{2}
\int_{[0,1]}
\log\eta_{p,q}(x)\,d\nu_{e,f}(x).
\end{equation}
Here \(u\) is bounded because \(0<p,q<1\) and
\[
\eta_{p,q}(x)\geq\max\{p^2,q^2\}>0.
\]

Combining \eqref{eq:ell-A-B-zero},
\eqref{eq:ell-AB-angle-measure}, and
Definition~\ref{def:local-product-defect}, we obtain
\[
\begin{split}
\mathfrak D_t^{(r)}(A,B)
&=
\frac{2}{r}
\left(
\ell_t(A)+\ell_t(B)-\ell_t(|AB|)
\right)\\
&=
-\frac{2}{r}\ell_t(|AB|)\\
&=
-\frac{t}{r}
\int_{[0,1]}
\log\eta_{p,q}(x)\,d\nu_{e,f}(x).
\end{split}
\]
This proves \eqref{eq:exact-two-projection-integral}.

We next prove the quantitative remainder. The scalar estimate from
Lemma~\ref{lem:norm-defect-angle}, with
$x=\cos^2\theta$, gives
\begin{equation}\label{eq:eta-linear-angle-bound}
-\log\eta_{p,q}(x)
\geq
C(p,q)(1-x),
\qquad 0\leq x\leq1.
\end{equation}
Integrating this inequality with respect to $\nu_{e,f}$ and using
\eqref{eq:exact-two-projection-integral}, we obtain
\[
\mathfrak D_t^{(r)}(A,B)
\geq
\frac{tC(p,q)}{r}
\int_{[0,1]}(1-x)\,d\nu_{e,f}(x).
\]
Functional calculus for $efe$ in the $e$-corner gives
\[
t\int_{[0,1]}(1-x)\,d\nu_{e,f}(x)
=
\tau(e-efe).
\]
Furthermore,
\[
\begin{split}
\|e-f\|_{2,\tau}^2
&=
\tau\bigl((e-f)^2\bigr)\\
&=
\tau(e)+\tau(f)-\tau(ef)-\tau(fe)\\
&=
2t-2\tau(efe)\\
&=
2\tau(e-efe),
\end{split}
\]
where we used $\tau(e)=\tau(f)=t$ and the tracial property. Therefore,
\[
\mathfrak D_t^{(r)}(A,B)
\geq
\frac{C(p,q)}{r}\tau(e-efe)
=
\frac{C(p,q)}{2r}\|e-f\|_{2,\tau}^2,
\]
which proves \eqref{eq:semifinite-two-projection-remainder}.

Finally, suppose that $\mathfrak D_t^{(r)}(A,B)=0$.
Since \(C(p,q)>0\), the preceding estimate implies
\[
\|e-f\|_{2,\tau}=0.
\]
Faithfulness of $\tau$ then gives $e=f$.
Conversely, if $e=f$, then $A$ and $B$ commute and have the same
two-level spectral decomposition. In this case,
$\ell_t(|AB|)=\ell_t(A)+\ell_t(B)=0$,
and hence
\[
\mathfrak D_t^{(r)}(A,B)=0.
\]
This proves \eqref{eq:semifinite-two-projection-equality}.
\end{proof}

The preceding exact formula concerns two-level operators.  We now extend this
result to general positive operators with separated upper spectral parts.  The
main idea is to replace each operator by a suitable two-level operator while
the local product defect does not increase.

\begin{theorem}
\label{thm:spectral-flattening}
Let $A,B\in\M^+$ be bounded and injective, and let $e$ and $f$ be
spectral projections of $A$ and $B$, respectively, such that
$0<t:=\tau(e)=\tau(f)<\infty$.
Assume that there exist constants
$a_+>a_->0$, and $b_+>b_->0$, such that
\begin{align}
Ae&\geq a_+e,
&
A(I-e)&\leq a_-(I-e),
\label{eq:A-spectral-gap}\\
Bf&\geq b_+f,
&
B(I-f)&\leq b_-(I-f).
\label{eq:B-spectral-gap}
\end{align}
Set
\begin{equation}\label{eq:flattening-pq-models}
p:=\frac{a_-}{a_+},
\qquad
q:=\frac{b_-}{b_+},
\qquad
A_0:=pI+(1-p)e,
\qquad
B_0:=qI+(1-q)f.
\end{equation}
Then $(A,B)$ and $(A_0,B_0)$ are
$t$-determinant-admissible, and, for every $r>0$,
\begin{equation}\label{eq:flattening-comparison}
\mathfrak D_t^{(r)}(A,B)
\geq
\mathfrak D_t^{(r)}(A_0,B_0).
\end{equation}
Consequently,
\begin{equation}\label{eq:general-finite-angle-integral}
\begin{split}
\mathfrak D_t^{(r)}(A,B)
&\geq
-\frac{t}{r}
\int_{[0,1]}
\log\eta_{p,q}(x)\,d\nu_{e,f}(x)\\
&\geq
\frac{C(p,q)}{2r}\|e-f\|_{2,\tau}^2.
\end{split}
\end{equation}
In particular,
$\mathfrak D_t^{(r)}(A,B)=0$ implies that $e=f$.
All inequalities are understood in $[0,\infty]$ according to
Definition~\ref{def:local-product-defect}.
\end{theorem}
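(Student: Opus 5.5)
The plan is to factor each operator as a scalar, times its two-level model, times a commuting ``residual'' whose top $t$ units of spectrum sit exactly on $e$ (resp.\ $f$). Submultiplicativity of $\Lambda$ then compares $(A,B)$ with $(A_0,B_0)$.

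\textbf{Step 1: residuals.} Since $e$ is a spectral projection of $A$, the operators $A$, $e$ and $A_0$ commute, and $A_0$ is invertible with $pI\le A_0\le I$. Define
\[
G_A:=a_+^{-1}A_0^{-1}A,\qquad G_B:=b_+^{-1}B_0^{-1}B,
\]
so that $A=a_+A_0G_A=a_+G_AA_0$ and $B=b_+B_0G_B$. By \eqref{eq:A-spectral-gap}, $G_Ae=a_+^{-1}Ae\ge e$. On $I-e$ we have $A_0=p$, so $G_A(I-e)=a_-^{-1}A(I-e)\le I-e$. Moreover $G_A\ge0$ is bounded and commutes with $e$. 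The same holds for $G_B$ relative to $f$.

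\textbf{Step 2: local determinants of the factors.} For a positive operator $h$ commuting with a projection $e$ of trace $t$, with $he\ge ce$ and $h(I-e)\le c(I-e)$, the largest $t$ units of $\mu(h)$ are carried by $e$. Hence $\ell_t(h)=\tau(e\log h\,e)$. I would check this via the distribution function: $\tau(\chi_{(c,\infty)}(h))\le t\le\tau(\chi_{[c,\infty)}(h))$, so $\mu_s(h)$ on $(0,t)$ is the decreasing rearrangement of $h|_e$.

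Applying this gives
\begin{itemize}
\item $\ell_t(A_0)=\ell_t(B_0)=0$, as already used in Theorem~\ref{thm:exact-two-projection-formula};
\item $\ell_t(A)=t\log a_+ +\tau(e\log G_A)$ and $\ell_t(G_A)=\tau(e\log G_A)$;
\item the analogous identities for $B$, $G_B$ and $f$.
\end{itemize}
All these quantities are real. Indeed, $\log G_A$ is bounded on $e$ because $e\le G_Ae\le \|A\|a_+^{-1}p^{-1}e$, and $\tau(e)<\infty$. Together with boundedness of $A,B$, this gives $t$-determinant-admissibility of $(A,B)$ and $(A_0,B_0)$. Injectivity of $A,B$ is not really needed beyond this point.

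\textbf{Step 3: comparison.} Write $AB=a_+b_+\,G_A(A_0B_0)G_B$. Applying the submultiplicativity \cite[Theorem~4.2]{DoddsEtAl} twice, in the extended-real sense, gives
\[
\ell_t(|AB|)\le t\log a_++t\log b_++\ell_t(G_A)+\ell_t(|A_0B_0|)+\ell_t(G_B).
\]
Here $\ell_t(cx)=t\log c+\ell_t(x)$ for scalars $c>0$. Substituting Step~2, this becomes
\[
\ell_t(A)+\ell_t(B)-\ell_t(|AB|)\ge -\ell_t(|A_0B_0|)=\ell_t(A_0)+\ell_t(B_0)-\ell_t(|A_0B_0|).
\]
Multiplying by $2/r$ gives \eqref{eq:flattening-comparison}. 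If $\ell_t(|AB|)=-\infty$, the left side is $+\infty$ and there is nothing to prove.

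\textbf{Step 4: conclusion.} Since $0<p,q<1$ and $\tau(e)=\tau(f)=t$, Theorem~\ref{thm:exact-two-projection-formula} applies to $(A_0,B_0)$. It gives the integral formula and the bound $\frac{C(p,q)}{2r}\|e-f\|_{2,\tau}^2$, which together yield \eqref{eq:general-finite-angle-integral}. If $\mathfrak D_t^{(r)}(A,B)=0$, this chain forces $\|e-f\|_{2,\tau}=0$, hence $e=f$ by faithfulness of $\tau$.

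\textbf{Main obstacle.} I expect the hardest point to be Step~2: justifying rigorously that $\ell_t(G_A)$ equals the trace of $\log G_A$ over $e$. The difficulty is that $G_A$ may have spectrum equal to $1$ on both $e$ and $I-e$ (ties), or may fail to be injective on $I-e$. The plan is to handle this through the distribution-function characterization of $\mu_s$. Only the values on $(0,t)$ matter, and the inequalities $G_A\ge e$ on $e$ and $G_A\le I-e$ on $I-e$ pin down $\mu_s(G_A)$ there regardless of ties.
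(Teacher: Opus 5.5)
Your proposal is correct and follows essentially the same argument as the paper. In both, you factor off a commuting residual ($G_A=a_+^{-1}A_0^{-1}A$, which is the paper's $e^{H}$ with $H=\log A-\log A_0$ after normalizing $a_+=1$) whose top $t$ units of spectrum sit on $e$, so that its local determinant equals $\ell_t(A)-\ell_t(A_0)$, and then apply submultiplicativity of $\Lambda$. The differences are cosmetic: you keep the scalars $a_+,b_+$, you avoid logarithms (so injectivity is indeed not needed), and you flatten both factors in one step instead of the paper's two successive steps along the path $A_s=A_0e^{sH}$, of which the paper only uses $s=0$, $h=1$.
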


\begin{proof}
Since multiplication of either argument by a positive scalar does not change
the local product defect, we may replace $A$ and $B$ by $A/a_+$ and
$B/b_+$, respectively.  Thus, without loss of generality, we assume that
$a_+=b_+=1$. Then $p=a_-$, $q=b_-$,
and the spectral-gap assumptions become
\begin{align}
 Ae&\geq e,
 &
 A(I-e)&\leq p(I-e),
 \label{eq:normalized-A-gap}\\
 Bf&\geq f,
 &
 B(I-f)&\leq q(I-f).
 \label{eq:normalized-B-gap}
\end{align}

Because $e$ and $f$ are spectral projections of $A$ and $B$, respectively,
$e$ commutes with $A$ and $f$ commutes with $B$.  Moreover, the restrictions
of $A$ and $B$ to $e$ and $f$ are bounded below by $1$.  Since
$\tau(e)=\tau(f)=t<\infty$, it follows that $\ell_t(A),\ell_t(B)\in\mathbb R$.
The same conclusion holds for $A_0$ and $B_0$.  Hence all four pairs occurring
below are $t$-determinant-admissible whenever their product terms have finite
local logarithmic functionals; otherwise, the corresponding defects are
interpreted as $+\infty$ according to
Definition~\ref{def:local-product-defect}.

We first replace $A$ by its two-level approximation while keeping $B$ fixed.
To this end, define
\begin{equation}\label{eq:flattening-H-path}
 H:=\log A-\log A_0,
 \qquad
 A_s:=A_0e^{sH},
 \qquad 0\leq s\leq1.
\end{equation}
Since $A_0$ is a function of $e$, it commutes with $A$, and hence also with
$H$.  Functional calculus on the two reducing subspaces gives
$He=(\log A)e\geq0$
and $H(I-e)=(\log A-\log p)(I-e)\leq0$.
The operator $H$ may be unbounded below on $(I-e)\mathcal H$, because $A$ is
only assumed to be injective.  Nevertheless, it is bounded above:
\[
He\leq (\log\|A\|)e,
\qquad
H(I-e)\leq0.
\]
Consequently, $e^{sH}$ is a bounded positive injective operator for every
$0\leq s\leq1$.  Therefore each $A_s$ is bounded, positive, and injective.
Furthermore,
\[
A_0=A_{s}\big|_{s=0},
\qquad
A=A_{s}\big|_{s=1}.
\]
Since $A_0$ and $H$ commute, functional calculus gives $A_se=A^se$
and
\[
A_s(I-e)=p^{\,1-s}A^s(I-e).
\]
It follows from \eqref{eq:normalized-A-gap} that
$A_se\geq e$, $A_s(I-e)\leq p(I-e)$.
Thus the spectral values of $A_s$ on $e\mathcal H$ are at least $1$, whereas
those on $(I-e)\mathcal H$ are at most $p<1$.  Hence $e$ supports precisely
the largest $t$ units of the spectral distribution of $A_s$.  Therefore,
whenever $s,s+h\in[0,1]$,
\[
 \ell_t(A_{s+h})-\ell_t(A_s)
 =\tau\bigl(e(\log A_{s+h}-\log A_s)\bigr) \notag
 =h\tau(eH).
 \label{eq:ell-As-increment}
\]
Similarly, $H$ is nonnegative on $e\mathcal H$ and nonpositive on
$(I-e)\mathcal H$.  Since $\tau(e)=t$, the largest $t$ units of the spectral
distribution of $H$ are supported on $e$.  Thus
\begin{equation}\label{eq:ell-exp-H}
 \ell_t(e^{hH})=h\tau(eH),
 \qquad h\geq0.
\end{equation}
For $s,s+h\in[0,1]$, commutativity of $A_0$ and $H$ gives
$A_{s+h}B=e^{hH}A_sB$.
By local determinant submultiplicativity,
\begin{align}
 \ell_t\bigl(|A_{s+h}B|\bigr)
 &\leq
 \ell_t(e^{hH})+\ell_t\bigl(|A_sB|\bigr) \notag\\
 &=h\tau(eH)+\ell_t\bigl(|A_sB|\bigr).
 \label{eq:first-path-product-bound}
\end{align}
Combining \eqref{eq:ell-As-increment} and
\eqref{eq:first-path-product-bound}, we obtain
\[
\ell_t(A_{s+h})+\ell_t(B)
 -\ell_t\bigl(|A_{s+h}B|\bigr)
\geq
\ell_t(A_s)+\ell_t(B)
 -\ell_t\bigl(|A_sB|\bigr).
\]
Multiplication by $2/r$ shows that
\[
\mathfrak D_t^{(r)}(A_{s+h},B)
\geq
\mathfrak D_t^{(r)}(A_s,B).
\]
Taking $s=0$ and $h=1$ yields
\begin{equation}\label{eq:first-flattening-step}
 \mathfrak D_t^{(r)}(A,B)
 \geq
 \mathfrak D_t^{(r)}(A_0,B).
\end{equation}
For completeness, this comparison remains valid in the extended-real case.
Indeed, if
\[
\ell_t(|A_sB|)=-\infty,
\]
then \eqref{eq:first-path-product-bound} forces
\[
\ell_t(|A_{s+h}B|)=-\infty,
\]
so both corresponding defects are $+\infty$.  If only the latter logarithmic
functional is $-\infty$, then the desired inequality is immediate.  In all
other cases, the quantities are real and the preceding rearrangement is
valid.

We next replace $B$ by its two-level approximation while keeping $A_0$ fixed.
To this end, define
\[
K:=\log B-\log B_0,
\qquad
B_s:=B_0e^{sK},
\qquad 0\leq s\leq1.
\]
As before, $B_0$ commutes with $B$ and $K$, and
\[
Kf=(\log B)f\geq0,
\qquad
K(I-f)=(\log B-\log q)(I-f)\leq0.
\]
Although $K$ may be unbounded below, it is bounded above because
\[
Kf\leq(\log\|B\|)f,
\qquad
K(I-f)\leq0.
\]
Thus every $B_s$ is bounded, positive, and injective.  Moreover,
$B_sf=B^sf\geq f$ and
\[
B_s(I-f)=q^{\,1-s}B^s(I-f)\leq q(I-f).
\]
Consequently, $f$ supports the largest $t$ units of the spectral
distribution of every $B_s$, and
\begin{equation}\label{eq:ell-Bs-increment}
 \ell_t(B_{s+h})-\ell_t(B_s)=h\tau(fK).
\end{equation}
Likewise,
\begin{equation}\label{eq:ell-exp-K}
 \ell_t(e^{hK})=h\tau(fK).
\end{equation}

Since $A_0B_{s+h}=A_0B_se^{hK}$,
local determinant submultiplicativity and
\eqref{eq:ell-exp-K} give
\begin{align}
 \ell_t\bigl(|A_0B_{s+h}|\bigr)
 &\leq
 \ell_t\bigl(|A_0B_s|\bigr)+\ell_t(e^{hK}) \notag\\
 &=
 \ell_t\bigl(|A_0B_s|\bigr)+h\tau(fK).
 \label{eq:second-path-product-bound}
\end{align}
Combining \eqref{eq:ell-Bs-increment} and
\eqref{eq:second-path-product-bound} shows that
$
\mathfrak D_t^{(r)}(A_0,B_{s+h})
\geq
\mathfrak D_t^{(r)}(A_0,B_s).
$
The extended-real cases are handled exactly as in the first flattening step.
Taking $s=0$ and $h=1$, we obtain
\begin{equation}\label{eq:second-flattening-step}
 \mathfrak D_t^{(r)}(A_0,B)
 \geq
 \mathfrak D_t^{(r)}(A_0,B_0).
\end{equation}
Combining \eqref{eq:first-flattening-step} and
\eqref{eq:second-flattening-step} proves
$
\mathfrak D_t^{(r)}(A,B)
\geq
\mathfrak D_t^{(r)}(A_0,B_0).
$
Theorem~\ref{thm:exact-two-projection-formula} now gives
\[
\mathfrak D_t^{(r)}(A_0,B_0)
=
-\frac{t}{r}
\int_{[0,1]}
\log\eta_{p,q}(x)\,d\nu_{e,f}(x)
\geq
\frac{C(p,q)}{2r}\|e-f\|_{2,\tau}^2.
\]
This proves \eqref{eq:general-finite-angle-integral}.

Finally, if $\mathfrak D_t^{(r)}(A,B)=0$,
then the preceding inequalities imply
\[
0
\geq
\frac{C(p,q)}{2r}\|e-f\|_{2,\tau}^2.
\]
Since $0<p,q<1$, we have $C(p,q)>0$.  Therefore
$\|e-f\|_{2,\tau}=0$.
The faithfulness of $\tau$ yields $e=f$, completing the proof.
\end{proof}

\begin{corollary}
\label{cor:semifinite-renyi-angle-remainder}

Let $a,b\in\M^+$ be boundedly invertible, let
$\alpha>0$, $\alpha\ne1$, and let $z>0$.  Set
\[
X:=a^{\alpha/2},
\qquad
Y:=b^{(1-\alpha)/2},
\qquad
A:=X^{1/z},
\qquad
B:=Y^{1/z}.
\]
Suppose that $A$ and $B$ satisfy the spectral-gap assumptions of
Theorem~\ref{thm:spectral-flattening}.  Let $e$ and $f$ be the corresponding
upper spectral projections, with
$0<t:=\tau(e)=\tau(f)<\infty$,
and let $p,q\in(0,1)$ be the associated boundary ratios.  Then
\begin{equation}\label{eq:semifinite-renyi-angle-chain}
\mathcal E_t^{\alpha,z}(a,b)
\geq
-zt\int_{[0,1]}
\log\eta_{p,q}(x)\,d\nu_{e,f}(x)
\geq
\frac{zC(p,q)}{2}\|e-f\|_{2,\tau}^2.
\end{equation}
Consequently,
\begin{equation}\label{eq:semifinite-renyi-inverse-angle}
\|e-f\|_{2,\tau}^2
\leq
\frac{2}{zC(p,q)}
\mathcal E_t^{\alpha,z}(a,b).
\end{equation}
In particular,
$\mathcal E_t^{\alpha,z}(a,b)=0$ implies $e=f$.
\end{corollary}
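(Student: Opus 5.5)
The plan is to reduce the corollary to Theorem~\ref{thm:spectral-flattening} through the identity \eqref{eq:renyi-defect-as-product-defect}, namely $\mathcal E_t^{\alpha,z}(a,b)=\mathfrak D_t^{(1/z)}(A,B)$. First check the hypotheses of that theorem. Since $a,b$ are boundedly invertible, $X=a^{\alpha/2}$ and $Y=b^{(1-\alpha)/2}$ are bounded and boundedly invertible, whatever the sign of $1-\alpha$. Hence $A=X^{1/z}$ and $B=Y^{1/z}$ lie in $\M^+$ and are injective. By Proposition~\ref{prop:safe-admissible}(i), the pair $(a,b)$ is locally log-integrable $(\alpha,z)$-admissible. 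Also $X,Y\in\M\subset L_{\log_+}(\M,\tau)$, so Definition~\ref{def:local-renyi-defect} applies once $(A,B)$ is $t$-determinant-admissible.

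Admissibility comes from the argument in Theorem~\ref{thm:spectral-flattening}. There $\ell_t(A),\ell_t(B)\in\mathbb R$, because on $e$ (resp.\ $f$) the operator $A$ (resp.\ $B$) is bounded below by $a_+>0$ (resp.\ $b_+$) and bounded above by its norm, and $\tau(e)=\tau(f)=t$. In the present bounded-invertible setting this is even simpler, since $\mu_s(A)\in[\|A^{-1}\|^{-1},\|A\|]$.

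Next, apply Theorem~\ref{thm:spectral-flattening} with $r=1/z$. This gives
\[
\mathfrak D_t^{(1/z)}(A,B)\geq -zt\int_{[0,1]}\log\eta_{p,q}(x)\,d\nu_{e,f}(x)\geq \frac{zC(p,q)}{2}\|e-f\|_{2,\tau}^2,
\]
because $t/r=zt$ and $C/(2r)=zC/2$. Combined with \eqref{eq:renyi-defect-as-product-defect}, this is exactly \eqref{eq:semifinite-renyi-angle-chain}.

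Inequality \eqref{eq:semifinite-renyi-inverse-angle} then follows by dividing by $zC(p,q)/2$, which is positive since $0<p,q<1$ gives $C(p,q)\in(0,1)$. The implication $\mathcal E_t^{\alpha,z}(a,b)=0\Rightarrow e=f$ follows from faithfulness of $\tau$, exactly as at the end of Theorem~\ref{thm:spectral-flattening}. If $\ell_t(Q_{\alpha,z}(a,b))=-\infty$, both defects are $+\infty$ and the inequalities hold trivially in $[0,\infty]$. In fact this cannot happen here, since $Q_{\alpha,z}(a,b)$ is boundedly invertible.

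The only step needing care is identifying $\ell_t(Q_{\alpha,z}(a,b))$ with $2z\,\ell_t(|AB|)$. This uses $Q_{\alpha,z}(a,b)=|AB|^{2z}$, which is $C_{\alpha,z}=|X^{1/z}Y^{1/z}|^2=(AB)^*(AB)$ read with the factors in the right order. One must confirm that $C_{\alpha,z}(a,b)=Y^{1/z}X^{2/z}Y^{1/z}=(X^{1/z}Y^{1/z})^*(X^{1/z}Y^{1/z})$, so that $C=|AB|^2$. Then the power identity \eqref{eq:Lambda-power} gives $\ell_t(|AB|^{2z})=2z\,\ell_t(|AB|)$. This is already recorded in Definition~\ref{def:local-renyi-defect}, so I expect no genuine obstacle; the corollary is a direct specialization.
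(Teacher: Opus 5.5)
Your proposal is correct and follows essentially the same route as the paper. It identifies $\mathcal E_t^{\alpha,z}(a,b)=\mathfrak D_t^{(1/z)}(A,B)$ via $Q_{\alpha,z}(a,b)=|AB|^{2z}$ and the power identity, applies Theorem~\ref{thm:spectral-flattening} with $r=1/z$, and rearranges using $C(p,q)>0$.
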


\begin{proof}
Since $a$ and $b$ are boundedly invertible, the operators $X$, $Y$, $A$,
and $B$ are also boundedly invertible.  Hence $(A,B)$ is
$t$-determinant-admissible, and
$\mathcal E_t^{\alpha,z}(a,b)$ is well defined.

By the definitions of $A$ and $B$,
$Q_{\alpha,z}(a,b)=|AB|^{2z}$.
The power identity therefore gives
\begin{align*}
\mathcal E_t^{\alpha,z}(a,b)
&=
2\ell_t(X)+2\ell_t(Y)
-\ell_t\bigl(Q_{\alpha,z}(a,b)\bigr)\\
&=
2z\bigl\{
\ell_t(A)+\ell_t(B)-\ell_t(|AB|)
\bigr\}\\
&=
\mathfrak D_t^{(1/z)}(A,B).
\end{align*}
Applying Theorem~\ref{thm:spectral-flattening} with $r=1/z$ yields
\[
\mathcal E_t^{\alpha,z}(a,b)
\geq
-zt\int_{[0,1]}
\log\eta_{p,q}(x)\,d\nu_{e,f}(x)
\geq
\frac{zC(p,q)}{2}\|e-f\|_{2,\tau}^2.
\]
This proves \eqref{eq:semifinite-renyi-angle-chain}.  Since
$C(p,q)>0$, rearranging its last inequality gives
\eqref{eq:semifinite-renyi-inverse-angle}.  The final assertion follows
immediately.
\end{proof}

The last implication is generally one-sided.  The equality $e=f$ means only
that the selected upper spectral subspaces of $A$ and $B$ coincide.  It does
not require the restrictions of $A$ and $B$ to these subspaces to have the
same spectral structure.  Therefore, $e=f$ need not imply
$\mathcal E_t^{\alpha,z}(a,b)=0$.

\section{Thermal states and low-energy subspace stability}
\label{sec:thermal-application}

We now apply the preceding results to thermal states in a semifinite von
Neumann algebra.  The Hamiltonians are allowed to be unbounded, provided that
their Gibbs operators are trace class.  We require only that the relevant
low-energy spectral projections have finite trace; the corresponding
low-energy subspaces need not be finite-dimensional.

Let $H$ and $K$ be self-adjoint operators affiliated with $\M$ and bounded
from below.  Suppose that, for some $\beta,\gamma>0$,
$e^{-\beta H},e^{-\gamma K}\in L^1(\M,\tau)$,
and define
\[
\rho_{\beta,H}
:=
\frac{e^{-\beta H}}{\tau(e^{-\beta H})},
\qquad
\sigma_{\gamma,K}
:=
\frac{e^{-\gamma K}}{\tau(e^{-\gamma K})}.
\]

Let $e$ and $f$ be low-energy spectral projections of $H$ and $K$ such that
$0<t:=\tau(e)=\tau(f)<\infty$.
Assume that, for some $h_-<h_+$ and $\kappa_-<\kappa_+$,
\begin{align}
eHe&\leq h_-e,
&
(I-e)H(I-e)&\geq h_+(I-e),
\label{eq:finite-thermal-H-gap}\\
fKf&\leq \kappa_-f,
&
(I-f)K(I-f)&\geq \kappa_+(I-f).
\label{eq:finite-thermal-K-gap}
\end{align}
These inequalities are understood in the quadratic-form sense on the
corresponding reducing subspaces.
For $0<\alpha<1$ and $z>0$, set
\begin{align}
p^{\rm th}
&:=
\exp\left[
-\frac{\alpha\beta}{2z}(h_+-h_-)
\right],
&
q^{\rm th}
&:=
\exp\left[
-\frac{(1-\alpha)\gamma}{2z}
(\kappa_+-\kappa_-)
\right],
\label{eq:finite-thermal-pq}\\
C^{\rm th}
&:=
\frac{
\bigl(1-(p^{\rm th})^2\bigr)
\bigl(1-(q^{\rm th})^2\bigr)}
{1-(p^{\rm th}q^{\rm th})^2}.
\label{eq:finite-thermal-C}
\end{align}

\begin{theorem}
\label{thm:semifinite-thermal-stability}

Under the assumptions above,
\begin{equation}\label{eq:finite-thermal-angle-integral}
\mathcal E_t^{\alpha,z}
\bigl(\rho_{\beta,H},\sigma_{\gamma,K}\bigr)
\geq
-zt\int_{[0,1]}
\log\eta_{p^{\rm th},q^{\rm th}}(x)
\,d\nu_{e,f}(x)
\geq
\frac{zC^{\rm th}}2\|e-f\|_{2,\tau}^2.
\end{equation}
Consequently,
\begin{equation}\label{eq:finite-thermal-inverse}
\|e-f\|_{2,\tau}^2
\leq
\frac{
2\mathcal E_t^{\alpha,z}
\bigl(\rho_{\beta,H},\sigma_{\gamma,K}\bigr)}
{zC^{\rm th}}.
\end{equation}
\end{theorem}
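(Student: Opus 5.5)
The plan is to reduce the statement to Theorem~\ref{thm:spectral-flattening} via the identity \eqref{eq:renyi-defect-as-product-defect}, as in Corollary~\ref{cor:semifinite-renyi-angle-remainder}. The one change is that $\rho_{\beta,H}$ and $\sigma_{\gamma,K}$ need not be boundedly invertible. Put $a:=\rho_{\beta,H}$, $b:=\sigma_{\gamma,K}$, and
\[
X=a^{\alpha/2},\qquad Y=b^{(1-\alpha)/2},\qquad A=X^{1/z},\qquad B=Y^{1/z}.
\]
Because $0<\alpha<1$, all exponents are positive. Hence
\[
A=c_1e^{-\frac{\alpha\beta}{2z}H},\qquad B=c_2e^{-\frac{(1-\alpha)\gamma}{2z}K},
\]
with positive normalizing constants $c_1,c_2$. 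Since $H$ and $K$ are bounded below, $A$ and $B$ lie in $\M^+$, and they are injective because the exponential of a self-adjoint operator has trivial kernel. Since the defect is invariant under positive scalings, we may drop $c_1,c_2$.

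Next we check the hypotheses of Theorem~\ref{thm:spectral-flattening}. The projection $e$ is a spectral projection of $H$, hence of $A$. From \eqref{eq:finite-thermal-H-gap}, functional calculus on the reducing subspaces gives
\[
Ae\ge e^{-\frac{\alpha\beta}{2z}h_-}e,\qquad A(I-e)\le e^{-\frac{\alpha\beta}{2z}h_+}(I-e).
\]
So we take $a_\pm$ to be these constants, and then $a_-/a_+=p^{\rm th}$. The same argument for $B$ with $f$ and $\kappa_\pm$ gives $q^{\rm th}$. Both ratios lie in $(0,1)$, so $C^{\rm th}>0$.

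Admissibility also needs checking. Admissibility of $(a,b)$ holds because $Q_{\alpha,z}(a,b)=|AB|^{2z}$ is bounded, hence lies in $L_{\log_+}$. Theorem~\ref{thm:spectral-flattening} itself gives $\ell_t(A),\ell_t(B)\in\mathbb R$: the largest $t$ units of $\mu(A)$ are supported on $e$ and are bounded below by $a_+$ and above by $\|A\|$. By the power identity, $X,Y\in L_{\log_+}$. Therefore $\mathcal E_t^{\alpha,z}(a,b)$ is well defined in $[0,\infty]$ and equals $\mathfrak D_t^{(1/z)}(A,B)$.

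Applying Theorem~\ref{thm:spectral-flattening} with $r=1/z$ then gives \eqref{eq:finite-thermal-angle-integral} directly. Since $C^{\rm th}>0$, rearranging the last inequality gives \eqref{eq:finite-thermal-inverse}. If the defect is $+\infty$, the inequalities hold trivially.

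The main obstacle is the unboundedness of $H$ and $K$. One point to confirm is that the quadratic-form gap hypotheses on reducing subspaces really give the operator inequalities for $A$ and $B$. Since $e$ is a spectral projection of $H$, $H$ is reduced by $e$, and monotone functional calculus of the decreasing function $x\mapsto e^{-cx}$ on each reducing part should settle this. The other point is that $H$ bounded below guarantees $\|A\|<\infty$, which the flattening argument needs through $\log\|A\|$. Trace-class Gibbs operators are used only to define the normalized states.
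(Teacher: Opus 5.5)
Your proposal is correct and follows the paper's proof essentially step for step. You identify $\mathcal E_t^{\alpha,z}(\rho_{\beta,H},\sigma_{\gamma,K})$ with $\mathfrak D_t^{(1/z)}(A,B)$ for $A=\rho_{\beta,H}^{\alpha/(2z)}$ and $B=\sigma_{\gamma,K}^{(1-\alpha)/(2z)}$, verify boundedness, injectivity and the gap hypotheses through the decreasing exponential so that $a_-/a_+=p^{\rm th}$ and $b_-/b_+=q^{\rm th}$, and then apply Theorem~\ref{thm:spectral-flattening} with $r=1/z$; the only cosmetic difference is that you discard the partition-function factors by scale invariance of the defect, whereas the paper keeps them and notes that they cancel in the ratios.
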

\begin{proof}
For simplicity, write
$a:=\rho_{\beta,H},~ b:=\sigma_{\gamma,K}$,
and set
$A:=a^{\alpha/(2z)},~ B:=b^{(1-\alpha)/(2z)}$.
Since $H$ and $K$ are bounded from below, their Gibbs operators are bounded.
Hence $a$, $b$, $A$, and $B$ are bounded positive operators.  They are also
injective because the functions
\[
\lambda\longmapsto e^{-\beta\lambda},
\qquad
\lambda\longmapsto e^{-\gamma\lambda}
\]
are strictly positive on $\mathbb R$.
Put
$Z_H:=\tau(e^{-\beta H}),~Z_K:=\tau(e^{-\gamma K})$.
Then
\[
A
=
Z_H^{-\alpha/(2z)}
e^{-\alpha\beta H/(2z)}.
\]
Because the exponential function appearing here is decreasing,
\eqref{eq:finite-thermal-H-gap} gives
\[
Ae
\geq
Z_H^{-\alpha/(2z)}
e^{-\alpha\beta h_-/(2z)}e
\]
and
\[
A(I-e)
\leq
Z_H^{-\alpha/(2z)}
e^{-\alpha\beta h_+/(2z)}(I-e).
\]
Thus \(A\) satisfies the spectral-gap assumptions of
Theorem~\ref{thm:spectral-flattening}, and the ratio of these two bounds is
\[
\frac{
Z_H^{-\alpha/(2z)}e^{-\alpha\beta h_+/(2z)}
}{
Z_H^{-\alpha/(2z)}e^{-\alpha\beta h_-/(2z)}
}
=
\exp\left(
-\frac{\alpha\beta}{2z}(h_+-h_-)
\right)
=
p^{\rm th}.
\]
The partition function \(Z_H\) cancels from the ratio. Similarly,
\[
B=Z_K^{-(1-\alpha)/(2z)}
e^{-(1-\alpha)\gamma K/(2z)}.
\]
Using \eqref{eq:finite-thermal-K-gap}, we obtain
\[
Bf
\geq
Z_K^{-(1-\alpha)/(2z)}
e^{-(1-\alpha)\gamma\kappa_-/(2z)}f
\]
and
\[
B(I-f)
\leq
Z_K^{-(1-\alpha)/(2z)}
e^{-(1-\alpha)\gamma\kappa_+/(2z)}(I-f).
\]
The corresponding ratio is
\[
\exp\left(
-\frac{(1-\alpha)\gamma}{2z}
(\kappa_+-\kappa_-)
\right)
=
q^{\rm th}.
\]
Since \(e\) and \(f\) are low-energy spectral projections of \(H\) and \(K\),
they are upper spectral projections of \(A\) and \(B\), respectively.
Furthermore, the preceding lower bounds on \(e\) and \(f\), together with
\(\tau(e)=\tau(f)=t<\infty\), show that
$\ell_t(A),\ell_t(B)\in\mathbb R$.
Thus \((A,B)\) is \(t\)-determinant-admissible.  Moreover, all the operators
appearing in the definition of
\(\mathcal E_t^{\alpha,z}(a,b)\) belong to \(L_{\log_+}(\M,\tau)\), so the
local R\'enyi determinant defect is well defined.
By construction,
$Q_{\alpha,z}(a,b)=|AB|^{2z}$.
Therefore, the power identity gives
\begin{align*}
\mathcal E_t^{\alpha,z}(a,b)
&=
2\ell_t\bigl(a^{\alpha/2}\bigr)
+
2\ell_t\bigl(b^{(1-\alpha)/2}\bigr)
-
\ell_t\bigl(Q_{\alpha,z}(a,b)\bigr)\\
&=
2z\left\{
\ell_t(A)+\ell_t(B)-\ell_t(|AB|)
\right\}\\
&=
\mathfrak D_t^{(1/z)}(A,B).
\end{align*}
Applying Theorem~\ref{thm:spectral-flattening} with \(r=1/z\) now yields
\[
\mathcal E_t^{\alpha,z}(a,b)
\geq
-zt\int_{[0,1]}
\log\eta_{p^{\rm th},q^{\rm th}}(x)
\,d\nu_{e,f}(x)
\geq
\frac{zC^{\rm th}}{2}\|e-f\|_{2,\tau}^2.
\]
This proves \eqref{eq:finite-thermal-angle-integral}.  Since
$0<p^{\rm th},q^{\rm th}<1$,
we have \(C^{\rm th}>0\), and rearranging the last inequality gives
\eqref{eq:finite-thermal-inverse}.  The final assertion follows immediately.
\end{proof}

The following example illustrates the thermal stability theorem for
unbounded Hamiltonians with infinitely many energy levels.  It also shows
that the local R\'enyi determinant defect detects a rotation of the
ground-state subspace, even when the two Hamiltonians differ only by a
finite-rank perturbation.

\begin{example}
\label{ex:semifinite-oscillator}
Let
$\M=\mathcal B\bigl(\ell^2(\mathbb N_0)\bigr)$
with its usual semifinite trace $\Tr$, and let
$(\delta_n)_{n\geq0}$ be the standard orthonormal basis.  Fix $\Delta>0$ and
consider the number Hamiltonian
\[
H\delta_n=n\Delta\,\delta_n,
\qquad n\geq0.
\]
Then $H$ is an unbounded self-adjoint operator affiliated with $\M$ and
bounded from below.

For $0\leq\theta\leq\pi/2$, let $U_\theta$ fix $\delta_n$ for $n\geq2$ and
act on $\operatorname{span}\{\delta_0,\delta_1\}$ by
\[
U_\theta\delta_0
=
\cos\theta\,\delta_0+\sin\theta\,\delta_1,
\qquad
U_\theta\delta_1
=
-\sin\theta\,\delta_0+\cos\theta\,\delta_1.
\]
Set $K:=U_\theta H U_\theta^*$.
Both $H$ and $K$ have the energy levels
$0,\Delta,2\Delta,\ldots$.
Moreover, for every $\beta,\gamma>0$,
\[
\Tr(e^{-\beta H})
=
\frac{1}{1-e^{-\beta\Delta}}
<\infty,
\qquad
\Tr(e^{-\gamma K})
=
\frac{1}{1-e^{-\gamma\Delta}}
<\infty.
\]
Thus their Gibbs operators are trace class, although the Hamiltonians
themselves are unbounded.

Let $e$ be the projection onto $\mathbb C\delta_0$ and set
$f:=U_\theta eU_\theta^*$.
Then $e$ and $f$ are the ground-state projections of $H$ and $K$,
respectively, and
$\Tr(e)=\Tr(f)=1$. Since
$\langle\delta_0,U_\theta\delta_0\rangle=\cos\theta$,
we also have
\[
efe=\cos^2\theta\,e,
\qquad
\|e-f\|_{2,\Tr}^2=2\sin^2\theta.
\]
The gap assumptions of
Theorem~\ref{thm:semifinite-thermal-stability} hold with
$h_-=\kappa_-=0,~
h_+=\kappa_+=\Delta$.
Set
\[
p
:=
\exp\left(
-\frac{\alpha\beta\Delta}{2z}
\right),
\qquad
q
:=
\exp\left(
-\frac{(1-\alpha)\gamma\Delta}{2z}
\right),
\qquad
C:=C(p,q).
\]
The angle distribution $\nu_{e,f}$ is the point mass at
$\cos^2\theta$.  Therefore,
Theorem~\ref{thm:semifinite-thermal-stability} gives
\begin{equation}\label{eq:oscillator-explicit-bound}
\mathcal E_1^{\alpha,z}
\bigl(\rho_{\beta,H},\sigma_{\gamma,K}\bigr)
\geq
-z\log\eta_{p,q}(\cos^2\theta)
\geq
zC\sin^2\theta.
\end{equation}

Thus the local R\'enyi determinant defect quantitatively detects the rotation
of the ground-state subspace in a system with infinitely many energy levels.
Moreover, $K-H$ has finite rank because $U_\theta$ acts nontrivially only on
$\operatorname{span}\{\delta_0,\delta_1\}$.  Hence the two unbounded
Hamiltonians differ only on their two lowest energy levels.
\end{example}

\subsection{Certification of effective low-energy models}

We now use the preceding subspace estimate to control errors in physical
observables.  Let $H$ be a reference Hamiltonian and let $K$ be an effective,
truncated, or numerically computed approximation of $H$.  A useful bound on
$H-K$ may not be available.  Nevertheless, thermal data can be used to test
the low-energy predictions of $K$.  The resulting estimate applies to every
bounded observable.

\begin{corollary}
\label{cor:effective-Hamiltonian-certificate}

Under the assumptions of
Theorem~\ref{thm:semifinite-thermal-stability}, suppose that
\[
\mathcal E_t^{\alpha,z}
\bigl(\rho_{\beta,H},\sigma_{\gamma,K}\bigr)
\leq\varepsilon<\infty.
\]
Define the normalized low-energy sector states by
\[
\widehat\rho_e:=\frac{e}{t},
\qquad
\widehat\rho_f:=\frac{f}{t}.
\]
Then
\begin{equation}\label{eq:effective-state-certificate}
\|\widehat\rho_e-\widehat\rho_f\|_{1,\tau}
\leq
2\min\left\{
1,
\left(\frac{\varepsilon}{tzC^{\rm th}}\right)^{1/2}
\right\}.
\end{equation}
Consequently, every bounded operator $O\in\M$ satisfies
\begin{equation}\label{eq:effective-observable-certificate}
\left|
\frac{\tau(eOe)}{t}
-
\frac{\tau(fOf)}{t}
\right|
\leq
2\|O\|
\min\left\{
1,
\left(\frac{\varepsilon}{tzC^{\rm th}}\right)^{1/2}
\right\}.
\end{equation}
Thus a small local thermal defect controls, uniformly over all bounded
observables, the difference between the corresponding low-energy sector
averages.
\end{corollary}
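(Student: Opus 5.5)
The plan is to reduce everything to the Hilbert--Schmidt bound \eqref{eq:finite-thermal-inverse} from Theorem~\ref{thm:semifinite-thermal-stability}, and then convert the tracial $L^2$ distance of the projections into a trace-norm distance. First, by \eqref{eq:finite-thermal-inverse} and the hypothesis on the defect,
\[
\|e-f\|_{2,\tau}^2\leq \frac{2\varepsilon}{zC^{\rm th}}.
\]

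The key step is to estimate $\|e-f\|_{1,\tau}$ by $\|e-f\|_{2,\tau}$ without losing more than a factor depending on $t$. Since $e$ and $f$ are finite-trace projections, $e-f$ lives in the corner $g\M g$ with $g=e\vee f$ and $\tau(g)\leq 2t$. Cauchy--Schwarz in $g\M g$ would give $\|e-f\|_1\leq\sqrt{2t}\,\|e-f\|_2$. That is off by a factor $\sqrt2$ from what we need, so I will use the finer structure from Lemma~\ref{lem:tracial-two-projection-decomposition} instead.

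On the generic part, in the fibre with $e$ and $f$ given by \eqref{eq:generic-two-projection-model} at $c=x=\cos^2\theta$, the $2\times2$ matrix $e_x-f_x$ has eigenvalues $\pm\sin\theta$. Its trace norm is therefore $2\sin\theta=2\sqrt{1-x}$, while its Hilbert--Schmidt norm squared is $2(1-x)$. On $e_{10}$ and $e_{01}$, which have equal trace by \eqref{eq:balanced-mismatch-traces}, $e-f$ equals $\pm1$, matching the case $x=0$. On $e_{11}$ and $e_{00}$ it vanishes. Using \eqref{eq:angle-measure-decomposition}, we obtain
\[
\|e-f\|_{1,\tau}=2t\int\sqrt{1-x}\,d\nu_{e,f}(x),\qquad \|e-f\|_{2,\tau}^2=2t\int(1-x)\,d\nu_{e,f}(x).
\]
Jensen's inequality for the probability measure $\nu_{e,f}$ then gives
\[
\|e-f\|_{1,\tau}\leq 2t\Bigl(\int(1-x)\,d\nu_{e,f}\Bigr)^{1/2}=\sqrt{2t}\,\|e-f\|_{2,\tau}.
\]
Dividing by $t$,
\[
\|\widehat\rho_e-\widehat\rho_f\|_{1,\tau}\leq \sqrt{2/t}\,\|e-f\|_{2,\tau}\leq 2\Bigl(\frac{\varepsilon}{tzC^{\rm th}}\Bigr)^{1/2}.
\]
The trivial bound $\|\widehat\rho_e-\widehat\rho_f\|_1\leq \|\widehat\rho_e\|_1+\|\widehat\rho_f\|_1=2$ supplies the minimum with $1$. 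This proves \eqref{eq:effective-state-certificate}.

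The main point needing care is the trace-norm computation through the decomposition, including the measure-theoretic passage from fibres to $\varphi$. The fibrewise trace norm is a continuous function of $c$, and $|e-f|$ on the generic part equals $\sqrt{1-c}\otimes I_2$, since $(e-f)^2$ has this form there. So $\tau(|e-f|)=\tau(e_{10})+\tau(e_{01})+2\varphi(\sqrt{1-c})=2t\int\sqrt{1-x}\,d\nu_{e,f}$, and this step is justified by functional calculus rather than a fibrewise argument.

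Finally, for \eqref{eq:effective-observable-certificate}, I will write $\tau(eOe)/t-\tau(fOf)/t=\tau\bigl(O(\widehat\rho_e-\widehat\rho_f)\bigr)$ using traciality, since $\tau(eOe)=\tau(Oe)$ for finite-trace $e$. Then the tracial Hölder inequality $|\tau(Ox)|\leq\|O\|\,\|x\|_{1,\tau}$ and \eqref{eq:effective-state-certificate} finish the proof.
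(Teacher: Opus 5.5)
Your proof is correct, but the reason you give for the detour through Lemma~\ref{lem:tracial-two-projection-decomposition} is a miscalculation. The plain Cauchy--Schwarz bound $\|e-f\|_{1,\tau}\le\tau(e\vee f)^{1/2}\|e-f\|_{2,\tau}\le\sqrt{2t}\,\|e-f\|_{2,\tau}$ is \emph{not} off by a factor $\sqrt2$. Combined with $\|e-f\|_{2,\tau}^2\le 2\varepsilon/(zC^{\rm th})$, it gives $\|e-f\|_{1,\tau}\le 2\bigl(t\varepsilon/(zC^{\rm th})\bigr)^{1/2}$, and dividing by $t$ yields exactly \eqref{eq:effective-state-certificate}. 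That one-line step is the paper's entire argument.

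Your fibrewise computation is right. The $2\times2$ block $e_x-f_x$ has eigenvalues $\pm\sqrt{1-x}$, and $(e-f)^2=(1-c)\otimes I_2$ on the generic part. The mismatch summands supply the atom at $0$, using $\tau(e_{10})=\tau(e_{01})$. Together these give the exact identities $\|e-f\|_{1,\tau}=2t\int\sqrt{1-x}\,d\nu_{e,f}$ and $\|e-f\|_{2,\tau}^2=2t\int(1-x)\,d\nu_{e,f}$. However, Jensen for the probability measure $\nu_{e,f}$ then reproduces the same constant. You in fact arrive at $\|e-f\|_{1,\tau}\le\sqrt{2t}\,\|e-f\|_{2,\tau}$ yourself, which contradicts your earlier remark.

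So for the corollary as stated, your route buys nothing beyond the Cauchy--Schwarz step, at the cost of invoking the full two-projection structure. Its only real advantage is the exact angle-integral expression for the trace distance. That expression could be paired directly with the sharper bound $-zt\int\log\eta_{p^{\rm th},q^{\rm th}}\,d\nu_{e,f}$ in Theorem~\ref{thm:semifinite-thermal-stability} if one wanted a refined certificate.

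The observable estimate is the same as in the paper: traciality gives $\tau(eOe)=\tau(Oe)$ for finite-trace $e$, and then you apply H\"older's inequality.
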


\begin{proof}
By \eqref{eq:finite-thermal-inverse},
\begin{equation}\label{eq:effective-projection-certificate}
\|e-f\|_{2,\tau}
\leq
\left(\frac{2\varepsilon}{zC^{\rm th}}\right)^{1/2}.
\end{equation}
Let $g:=e\vee f$. Since $e$ and $f$ have trace $t$,
$\tau(g)\leq\tau(e)+\tau(f)=2t$.
Moreover, $e-f$ is supported on the finite corner $g\M g$.  The
noncommutative Cauchy--Schwarz inequality therefore gives
\[
\|e-f\|_{1,\tau}
\leq
\tau(g)^{1/2}\|e-f\|_{2,\tau}
\leq
(2t)^{1/2}\|e-f\|_{2,\tau}.
\]
Using \eqref{eq:effective-projection-certificate}, we obtain
\begin{align*}
\|\widehat\rho_e-\widehat\rho_f\|_{1,\tau}
&=
\frac{1}{t}\|e-f\|_{1,\tau}\\
&\leq
\left(\frac{2}{t}\right)^{1/2}
\|e-f\|_{2,\tau}\\
&\leq
2\left(
\frac{\varepsilon}{tzC^{\rm th}}
\right)^{1/2}.
\end{align*}
On the other hand, $\widehat\rho_e$ and $\widehat\rho_f$ are states, so
\[
\|\widehat\rho_e-\widehat\rho_f\|_{1,\tau}
\leq
\|\widehat\rho_e\|_{1,\tau}
+
\|\widehat\rho_f\|_{1,\tau}
=2.
\]
Combining these two estimates proves
\eqref{eq:effective-state-certificate}.

Finally, since $e$ and $f$ are finite-trace projections, traciality gives
\begin{align*}
\frac{\tau(eOe)}{t}
-
\frac{\tau(fOf)}{t}
&=
\frac{\tau(Oe)-\tau(Of)}{t}\\
&=
\tau\bigl(
O(\widehat\rho_e-\widehat\rho_f)
\bigr).
\end{align*}
By noncommutative H\"older's inequality,
\[
\left|
\tau\bigl(
O(\widehat\rho_e-\widehat\rho_f)
\bigr)
\right|
\leq
\|O\|\,
\|\widehat\rho_e-\widehat\rho_f\|_{1,\tau}.
\]
Substituting \eqref{eq:effective-state-certificate} proves
\eqref{eq:effective-observable-certificate}.
\end{proof}

The preceding estimate also gives a simple defect threshold for model
validation.  Indeed, let $0<\delta\leq2$.  If
\begin{equation}\label{eq:defect-acceptance-threshold}
\mathcal E_t^{\alpha,z}
\bigl(\rho_{\beta,H},\sigma_{\gamma,K}\bigr)
\leq
\frac{tzC^{\rm th}}{4}\,\delta^2,
\end{equation}
then every contraction $O\in\M$ satisfies
\begin{equation}\label{eq:uniform-observable-tolerance}
\left|
\frac{\tau(eOe)}{t}
-
\frac{\tau(fOf)}{t}
\right|
\leq\delta.
\end{equation}
Thus \eqref{eq:defect-acceptance-threshold} provides a sufficient acceptance
criterion for an effective Hamiltonian at a prescribed uniform tolerance
$\delta$ on normalized low-energy observables.

\begin{remark}
\label{rem:effective-matrix-implementation}

Suppose that $\M=\mathbb M_d$, and arrange the eigenvalues of $H$ and $K$ in
increasing order:
\[
h_1\leq\cdots\leq h_d,
\qquad
\kappa_1\leq\cdots\leq\kappa_d.
\]
Fix $1\leq k<d$ and assume that
$h_k<h_{k+1},~\kappa_k<\kappa_{k+1}$.
Let $P_k^H$ and $P_k^K$ denote the spectral projections onto the eigenspaces
corresponding to the $k$ lowest eigenvalues of $H$ and $K$, respectively.
Then
$t=k,~e=P_k^H$, and $f=P_k^K$. Set
\[
p_k^{\rm th}
:=
\exp\left[
-\frac{\alpha\beta}{2z}(h_{k+1}-h_k)
\right],
\qquad
q_k^{\rm th}
:=
\exp\left[
-\frac{(1-\alpha)\gamma}{2z}
(\kappa_{k+1}-\kappa_k)
\right],
\]
and
\[
C_k^{\rm th}
:=
\frac{
\bigl(1-(p_k^{\rm th})^2\bigr)
\bigl(1-(q_k^{\rm th})^2\bigr)}
{1-(p_k^{\rm th}q_k^{\rm th})^2}.
\]
If
$
\mathcal E_k^{\alpha,z}
\bigl(\rho_{\beta,H},\sigma_{\gamma,K}\bigr)
\leq\varepsilon,
$
then every $O\in\mathbb M_d$ satisfies
\[
\left|
\frac{1}{k}\Tr(P_k^HOP_k^H)
-
\frac{1}{k}\Tr(P_k^KOP_k^K)
\right|
\leq
2\|O\|
\min\left\{
1,
\left(
\frac{\varepsilon}{kzC_k^{\rm th}}
\right)^{1/2}
\right\}.
\]

Thus the certificate can be computed from the Gibbs operators, the associated
local R\'enyi product, and the two boundary energy gaps.  Once computed, it
provides a simultaneous bound for all normalized low-energy observable
averages.  It may remain informative when a Davis--Kahan estimate is
ineffective because $\|H-K\|$ is too large or no useful bound for this
quantity is available.

This is a computational certificate.  It requires the relevant local
singular-value data, which may be obtained by exact diagonalization or by a
controlled numerical approximation.  We do not claim that the defect can be
measured directly on a quantum device without additional state-estimation
methods.
\end{remark}

We now consider the matrix setting.  In this case, if
$\theta_1,\ldots,\theta_k$ are the principal angles between two
$k$-dimensional subspaces, then their angle distribution is
\[
\nu_{e,f}
=
\frac{1}{k}\sum_{j=1}^k\delta_{\cos^2\theta_j}.
\]
Thus the integral appearing in the semifinite estimate becomes a finite
average over the principal angles.

Let $H,K\in\mathbb M_d^{\rm sa}$ be Hamiltonians and, for
$\beta,\gamma>0$, define their Gibbs states by
\begin{equation}\label{eq:Gibbs-states}
\rho_{\beta,H}
:=
\frac{e^{-\beta H}}{Z_H},
\qquad
\sigma_{\gamma,K}
:=
\frac{e^{-\gamma K}}{Z_K},
\qquad
Z_H:=\Tr(e^{-\beta H}),
\quad
Z_K:=\Tr(e^{-\gamma K}).
\end{equation}
Arrange the energy levels in increasing order:
$
h_1\leq\cdots\leq h_d,~
\kappa_1\leq\cdots\leq\kappa_d$.
Since the exponential function is decreasing, the dominant spectral
subspaces of the Gibbs states are precisely the low-energy spectral
subspaces of their Hamiltonians.  Therefore,
Theorem~\ref{thm:principal-angle-remainder} gives a quantitative relation
between the local R\'enyi determinant defect and the relative geometry of
the corresponding low-energy subspaces.

Fix $1\leq k<d$ and assume that
$h_k<h_{k+1},~
\kappa_k<\kappa_{k+1}$.
Let $\mathcal L_k(H)$ and $\mathcal L_k(K)$ be the spectral subspaces
corresponding to the $k$ lowest energy levels of $H$ and $K$, and denote their
orthogonal projections by $P_k^H$ and $P_k^K$.  Let
$0\leq\theta_1\leq\cdots\leq\theta_k\leq\frac{\pi}{2}$
be the principal angles between these subspaces.  For $0<\alpha<1$ and
$z>0$, set
\begin{align}
p_k^{\rm th}
&:=
\exp\left[
-\frac{\alpha\beta}{2z}(h_{k+1}-h_k)
\right],
&
q_k^{\rm th}
&:=
\exp\left[
-\frac{(1-\alpha)\gamma}{2z}
(\kappa_{k+1}-\kappa_k)
\right],
\label{eq:thermal-pk-qk}\\
C_k^{\rm th}
&:=
\frac{
\bigl(1-(p_k^{\rm th})^2\bigr)
\bigl(1-(q_k^{\rm th})^2\bigr)}
{1-(p_k^{\rm th}q_k^{\rm th})^2},
&
S_k^{\rm th}
&:=
1-\prod_{j=1}^k\cos^2\theta_j.
\label{eq:thermal-Ck-Sk}
\end{align}

\begin{corollary}
\label{cor:thermal-low-energy}

Under the assumptions and notation above,
\begin{equation}\label{eq:thermal-summed-angle-remainder}
\mathcal E_k^{\alpha,z}
\bigl(\rho_{\beta,H},\sigma_{\gamma,K}\bigr)
\geq-z\sum_{j=1}^k
\log\eta_{p_k^{\rm th},q_k^{\rm th}}
\bigl(\cos^2\theta_j\bigr)\geq
\frac{zC_k^{\rm th}}{2}
\|P_k^H-P_k^K\|_2^2.
\end{equation}
Moreover,
\begin{equation}\label{eq:thermal-remainder}
\mathcal E_k^{\alpha,z}
\bigl(\rho_{\beta,H},\sigma_{\gamma,K}\bigr)
\geq -z\log\bigl(1-C_k^{\rm th}S_k^{\rm th}\bigr)
\geq zC_k^{\rm th}S_k^{\rm th}\geq
zC_k^{\rm th}\|P_k^H-P_k^K\|^2.
\end{equation}
Consequently,
\begin{align}
\|P_k^H-P_k^K\|^2
&\leq
\frac{
1-\exp\left\{
-\mathcal E_k^{\alpha,z}
(\rho_{\beta,H},\sigma_{\gamma,K})/z
\right\}}
{C_k^{\rm th}},
\label{eq:thermal-inverse-stability}\\
\|P_k^H-P_k^K\|_2^2
&\leq
\frac{
2\mathcal E_k^{\alpha,z}
(\rho_{\beta,H},\sigma_{\gamma,K})}
{zC_k^{\rm th}}.
\label{eq:thermal-HS-inverse-stability}
\end{align}
\end{corollary}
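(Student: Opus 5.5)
The plan is to obtain \eqref{eq:thermal-summed-angle-remainder} as the matrix specialization of Theorem~\ref{thm:semifinite-thermal-stability}. Then \eqref{eq:thermal-remainder} will follow from Theorem~\ref{thm:principal-angle-remainder}, and the two inverse bounds are rearrangements.

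\textbf{Step 1: the semifinite bound.} Work in $\M=\mathbb M_d$ with the unnormalized trace, and take $e=P_k^H$, $f=P_k^K$, so that $t=k$.

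The gap hypotheses \eqref{eq:finite-thermal-H-gap}--\eqref{eq:finite-thermal-K-gap} hold with
\[
h_-=h_k,\qquad h_+=h_{k+1},\qquad \kappa_-=\kappa_k,\qquad \kappa_+=\kappa_{k+1}.
\]
Hence $p^{\rm th}=p_k^{\rm th}$, $q^{\rm th}=q_k^{\rm th}$ and $C^{\rm th}=C_k^{\rm th}$.

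Since $t=k$, the local defect $\mathcal E_t^{\alpha,z}$ coincides with \eqref{eq:def-local-gap} by Proposition~\ref{prop:matrix-local-det}.

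The angle distribution is $\nu_{e,f}=\frac1k\sum_j\delta_{\cos^2\theta_j}$. This follows because the eigenvalues of $efe$ restricted to $e$ are exactly the $\cos^2\theta_j$. Therefore
\[
zt\int\log\eta\,d\nu_{e,f}=z\sum_{j=1}^k\log\eta_{p_k^{\rm th},q_k^{\rm th}}(\cos^2\theta_j).
\]
Substituting into \eqref{eq:finite-thermal-angle-integral} gives both inequalities of \eqref{eq:thermal-summed-angle-remainder}.

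\textbf{Step 2: the principal-angle chain.} Apply Theorem~\ref{thm:principal-angle-remainder} with
\[
a=\rho_{\beta,H},\qquad b=\sigma_{\gamma,K},\qquad X=a^{\alpha/2},\qquad Y=b^{(1-\alpha)/2}.
\]
Since $0<\alpha<1$, both maps are decreasing functions of the energy. So the top-$k$ spectral subspaces of $X$ and $Y$ are $\mathcal L_k(H)$ and $\mathcal L_k(K)$, and the eigenvalue gaps $x_k>x_{k+1}$, $y_k>y_{k+1}$ hold.

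Compute $p_k$ from \eqref{eq:def-pk-qk}:
\[
p_k=\Bigl(\frac{x_{k+1}}{x_k}\Bigr)^{1/z}=\exp\Bigl[-\frac{\alpha\beta}{2z}(h_{k+1}-h_k)\Bigr]=p_k^{\rm th},
\]
because $Z_H$ cancels in the ratio. Likewise $q_k=q_k^{\rm th}$, so $C_k=C_k^{\rm th}$ and $S_k=S_k^{\rm th}$.

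The last three links of \eqref{eq:principal-angle-chain} then give \eqref{eq:thermal-remainder}. The first link is not needed.

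\textbf{Step 3: inverse bounds.} From $\mathcal E\ge -z\log(1-C_k^{\rm th}S_k^{\rm th})$, exponentiating gives
\[
C_k^{\rm th}S_k^{\rm th}\le 1-e^{-\mathcal E/z}.
\]
Combining with $S_k^{\rm th}\ge\|P_k^H-P_k^K\|^2$ and dividing by $C_k^{\rm th}>0$ yields \eqref{eq:thermal-inverse-stability}. Rearranging the last inequality of \eqref{eq:thermal-summed-angle-remainder} gives \eqref{eq:thermal-HS-inverse-stability}.

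\textbf{Expected obstacle.} The main point to check is the second inequality of \eqref{eq:thermal-summed-angle-remainder}. If quoting Theorem~\ref{thm:semifinite-thermal-stability} directly is in doubt, I would verify it by hand. Apply \eqref{eq:eta-linear-angle-bound} termwise:
\[
-z\sum_j\log\eta(\cos^2\theta_j)\ge zC\sum_j\sin^2\theta_j=\frac{zC}{2}\|P_k^H-P_k^K\|_2^2.
\]
This uses the identity $\|P-R\|_2^2=2\sum_j\sin^2\theta_j$ already recorded in the proof of Theorem~\ref{thm:principal-angle-remainder}.

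The remaining care is bookkeeping: checking that the matrix defect \eqref{eq:def-local-gap} and the semifinite defect agree at $t=k$, and that finite-dimensional positive definite Gibbs states satisfy the admissibility hypotheses. The latter is covered by Proposition~\ref{prop:safe-admissible}(i).
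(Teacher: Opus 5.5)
Your proposal is correct and follows the paper's proof essentially step for step. The first display comes from Theorem~\ref{thm:semifinite-thermal-stability} with $e=P_k^H$, $f=P_k^K$, $t=k$ and the atomic angle measure; the paper derives its second inequality by exactly your termwise scalar estimate. The second display comes from Theorem~\ref{thm:principal-angle-remainder} after matching $p_k,q_k$ with the thermal parameters, and the inverse bounds are rearrangements. One small wording fix: you say the first link is not needed, but $\mathcal E_k^{\alpha,z}\geq -z\log\eta_k$ in \eqref{eq:principal-angle-chain} must be composed with the second link to obtain $\mathcal E_k^{\alpha,z}\geq -z\log(1-C_k^{\rm th}S_k^{\rm th})$; only the intermediate quantity $-z\log\eta_k$ is dropped from the display.
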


\begin{proof}
Set
$a:=\rho_{\beta,H},~
b:=\sigma_{\gamma,K},~
X:=a^{\alpha/2}$, and 
$Y:=b^{(1-\alpha)/2}$.
Because $0<\alpha<1$, the decreasing eigenvalues of $X$ and $Y$ correspond
to the increasing energy levels of $H$ and $K$, respectively.  More
precisely,
$x_j=Z_H^{-\alpha/2}e^{-\alpha\beta h_j/2}$, and $y_j=Z_K^{-(1-\alpha)/2}e^{-(1-\alpha)\gamma\kappa_j/2}$.
Hence the partition functions cancel from the consecutive eigenvalue ratios,
and
\[
\left(\frac{x_{k+1}}{x_k}\right)^{1/z}
=
p_k^{\rm th},
\qquad
\left(\frac{y_{k+1}}{y_k}\right)^{1/z}
=
q_k^{\rm th}.
\]
The eigenvalues of
$P_k^HP_k^KP_k^H \text{on }\mathcal L_k(H)$
are $\cos^2\theta_1,\ldots,\cos^2\theta_k$.  Therefore, its spectral
distribution with respect to the normalized trace on
$P_k^H\mathbb M_dP_k^H$ is
\[
\nu_{P_k^H,P_k^K}
=
\frac1k\sum_{j=1}^k
\delta_{\cos^2\theta_j}.
\]
Applying Theorem~\ref{thm:semifinite-thermal-stability} with $t=k$ gives
\[
\mathcal E_k^{\alpha,z}
\bigl(\rho_{\beta,H},\sigma_{\gamma,K}\bigr)
\geq
-z\sum_{j=1}^k
\log\eta_{p_k^{\rm th},q_k^{\rm th}}
\bigl(\cos^2\theta_j\bigr).
\]
The scalar estimate
$
-\log\eta_{p,q}(x)\geq C(p,q)(1-x)
$
then yields
\[
-z\sum_{j=1}^k
\log\eta_{p_k^{\rm th},q_k^{\rm th}}
\bigl(\cos^2\theta_j\bigr)
\geq
zC_k^{\rm th}\sum_{j=1}^k\sin^2\theta_j.
\]
Since
$\|P_k^H-P_k^K\|_2^2=2\sum_{j=1}^k\sin^2\theta_j$,
this proves \eqref{eq:thermal-summed-angle-remainder}.
Applying Theorem~\ref{thm:principal-angle-remainder} to $X$ and $Y$ gives
\[
\mathcal E_k^{\alpha,z}
\bigl(\rho_{\beta,H},\sigma_{\gamma,K}\bigr)
\geq
-z\log\bigl(1-C_k^{\rm th}S_k^{\rm th}\bigr)
\geq
zC_k^{\rm th}S_k^{\rm th}.
\]
Furthermore,
\[
S_k^{\rm th}
=
1-\prod_{j=1}^k(1-\sin^2\theta_j)
\geq
\sin^2\theta_k
=
\|P_k^H-P_k^K\|^2.
\]
This proves \eqref{eq:thermal-remainder}.

The first inequality in \eqref{eq:thermal-remainder} implies
$C_k^{\rm th}S_k^{\rm th}
\leq1-\exp\left\{-\mathcal E_k^{\alpha,z}
(\rho_{\beta,H},\sigma_{\gamma,K})/z
\right\}$.
Together with
$1-e^{-u}\leq u,~ u\geq0$,
this proves \eqref{eq:thermal-inverse-stability}.  Finally,
\eqref{eq:thermal-HS-inverse-stability} follows directly from
\eqref{eq:thermal-summed-angle-remainder}.
\end{proof}

\subsection{A three-level model beyond operator-norm perturbation}
\label{subsec:three-level-model}

We give a three-level example in which the two Hamiltonians can be far apart
in operator norm, while the thermal defect still gives an accurate estimate
of their low-energy subspace distance.

Let
\[
0<\Delta<\Omega,
\qquad
L\geq0,
\qquad
0\leq\theta<\frac{\pi}{2},
\]
and define
\begin{equation}\label{eq:three-level-HK}
H=
\begin{pmatrix}
0&0&0\\
0&\Delta&0\\
0&0&\Omega
\end{pmatrix},
\qquad
K=
U_\theta
\begin{pmatrix}
0&0&0\\
0&\Delta&0\\
0&0&\Omega+L
\end{pmatrix}
U_\theta^*,
\end{equation}
where
\begin{equation}\label{eq:three-level-rotation}
U_\theta=
\begin{pmatrix}
1&0&0\\
0&\cos\theta&-\sin\theta\\
0&\sin\theta&\cos\theta
\end{pmatrix}.
\end{equation}
The parameter $L$ moves the highest energy level of $K$, while
$U_\theta$ rotates its two-dimensional low-energy subspace.

The two low-energy subspaces are
\[
\mathcal L_2(H)
=
\operatorname{span}\{e_1,e_2\}
\]
and
\[
\mathcal L_2(K)
=
\operatorname{span}
\left\{
e_1,\,
\cos\theta\,e_2+\sin\theta\,e_3
\right\}.
\]
Their principal angles are $0$ and $\theta$. Hence
\begin{equation}\label{eq:three-level-projection-distance}
\|P_2^H-P_2^K\|
=
\sin\theta,
\qquad
\|P_2^H-P_2^K\|_2^2
=
2\sin^2\theta.
\end{equation}

On the other hand, the largest eigenvalues of $H$ and $K$ are
$\Omega$ and $\Omega+L$, respectively. Therefore, the eigenvalue
perturbation inequality gives
\begin{equation}\label{eq:three-level-norm-lower-bound}
\|H-K\|
\geq
\bigl|\lambda_3(H)-\lambda_3(K)\bigr|
=
L.
\end{equation}
Thus $\|H-K\|$ can be arbitrarily large, although the low-energy subspace
distance in \eqref{eq:three-level-projection-distance} is fixed.

For $\beta,\gamma>0$, the Gibbs states are
\begin{equation}\label{eq:three-level-Gibbs-states}
\rho_{\beta,H}
=
\frac{1}{Z_H}
\operatorname{diag}
\left(
1,e^{-\beta\Delta},e^{-\beta\Omega}
\right)
\end{equation}
and
\begin{equation}\label{eq:three-level-Gibbs-state-K}
\sigma_{\gamma,K}
=
\frac{1}{Z_K}
U_\theta
\operatorname{diag}
\left(
1,e^{-\gamma\Delta},
e^{-\gamma(\Omega+L)}
\right)
U_\theta^*,
\end{equation}
where
\[
Z_H
=
1+e^{-\beta\Delta}+e^{-\beta\Omega},
\qquad
Z_K
=
1+e^{-\gamma\Delta}+e^{-\gamma(\Omega+L)}.
\]

For $0<\alpha<1$ and $z>0$, set
\begin{equation}\label{eq:three-level-ab}
a:=
\frac{\alpha\beta}{2z},
\qquad
b:=
\frac{(1-\alpha)\gamma}{2z},
\end{equation}
and
\begin{equation}\label{eq:three-level-pq}
p:=
e^{-a(\Omega-\Delta)},
\qquad
q:=
e^{-b(\Omega+L-\Delta)}.
\end{equation}
These are precisely the thermal boundary-gap parameters
$p_2^{\rm th}$ and $q_2^{\rm th}$. In particular,
\begin{equation}\label{eq:three-level-C}
C_2^{\rm th}
=
\frac{(1-p^2)(1-q^2)}
     {1-p^2q^2}.
\end{equation}

The next proposition gives the local defect exactly.

\begin{proposition}[Exact defect for the three-level model]
\label{prop:three-level-exact-defect}

For the Hamiltonians in \eqref{eq:three-level-HK},
\begin{equation}\label{eq:three-level-exact-defect}
\mathcal E_2^{\alpha,z}
\bigl(\rho_{\beta,H},\sigma_{\gamma,K}\bigr)
=
-z\log\eta_{p,q}(\cos^2\theta),
\end{equation}
where $\eta_{p,q}$ is defined in
\eqref{eq:two-projection-eta}. Equivalently, if
\begin{equation}\label{eq:three-level-T}
T_\theta(p,q)
:=
(1+p^2q^2)\cos^2\theta
+
(p^2+q^2)\sin^2\theta,
\end{equation}
then
\begin{equation}\label{eq:three-level-eta}
\eta_{p,q}(\cos^2\theta)
=
\frac{
T_\theta(p,q)
+
\sqrt{T_\theta(p,q)^2-4p^2q^2}
}{2}.
\end{equation}
Consequently,
\begin{equation}\label{eq:three-level-inverse-bound}
\sin^2\theta
\leq
\frac{
1-\exp\left\{
-\mathcal E_2^{\alpha,z}
(\rho_{\beta,H},\sigma_{\gamma,K})/z
\right\}}
{C_2^{\rm th}}.
\end{equation}
\end{proposition}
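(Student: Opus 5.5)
We compute the defect directly, without the flattening inequality. Set $a:=\rho_{\beta,H}$, $b:=\sigma_{\gamma,K}$, $X:=a^{\alpha/2}$, $Y:=b^{(1-\alpha)/2}$, $A:=X^{1/z}$, $B:=Y^{1/z}$. By \eqref{eq:renyi-defect-as-product-defect},
\[
\mathcal E_2^{\alpha,z}(a,b)=2z\bigl(\ell_2(A)+\ell_2(B)-\ell_2(|AB|)\bigr).
\]
Rescaling $A$ and $B$ by positive scalars leaves the defect unchanged, so we may remove the partition functions. After the natural normalization,
\[
A=\operatorname{diag}\bigl(1,e^{-a\Delta},e^{-a\Omega}\bigr),
\qquad
B=U_\theta\operatorname{diag}\bigl(1,e^{-b\Delta},e^{-b(\Omega+L)}\bigr)U_\theta^*,
\]
with $a,b$ as in \eqref{eq:three-level-ab}. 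This is an abuse of the letters $a,b$, so in the proof we will write $a_0,b_0$ for the rates.

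Then $\ell_2(A)=-a_0\Delta$ and $\ell_2(B)=-b_0\Delta$. Both $A$ and $B$ fix $e_1$ with eigenvalue $1$, so $AB$ is block diagonal: it acts as $1$ on $\mathbb C e_1$ and as a $2\times2$ block on $\operatorname{span}\{e_2,e_3\}$. Factor out $e^{-a_0\Delta}$ and $e^{-b_0\Delta}$ from that block. It becomes $A'B'$ with
\[
A'=\operatorname{diag}(1,p),
\qquad
B'=R_\theta\operatorname{diag}(1,q)R_\theta^*,
\]
where $R_\theta$ is the planar rotation. This is exactly the fiber computation in the proof of Theorem~\ref{thm:exact-two-projection-formula} with $x=\cos^2\theta$. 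So the squared singular values of $A'B'$ are $\eta_{p,q}(\cos^2\theta)$ and $p^2q^2/\eta_{p,q}(\cos^2\theta)$. One can also read this off from the characteristic polynomial $\lambda^2-T_\theta\lambda+p^2q^2$ of Lemma~\ref{lem:norm-defect-angle}.

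It follows that the singular values of $AB$ are $1$, $e^{-(a_0+b_0)\Delta}\sqrt{\eta}$, and $e^{-(a_0+b_0)\Delta}pq/\sqrt{\eta}$, where $\eta:=\eta_{p,q}(\cos^2\theta)$. The main point is to check that $1$ and $e^{-(a_0+b_0)\Delta}\sqrt{\eta}$ are the two largest. Since $\eta\le 1$ and $\Delta>0$, the second is below $1$. By the branch separation \eqref{eq:branch-separation}, $\sqrt\eta\ge pq/\sqrt\eta$. Hence
\[
\ell_2(|AB|)=-(a_0+b_0)\Delta+\tfrac12\log\eta .
\]
Substituting, the $\Delta$ terms cancel and $\mathcal E_2^{\alpha,z}=2z\cdot(-\tfrac12\log\eta)=-z\log\eta$. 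This gives \eqref{eq:three-level-exact-defect}, and \eqref{eq:three-level-eta} is just the definition \eqref{eq:two-projection-eta} evaluated at $x=\cos^2\theta$.

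For \eqref{eq:three-level-inverse-bound}, Lemma~\ref{lem:norm-defect-angle} gives $\eta\le 1-C(p,q)\sin^2\theta$, and $C(p,q)=C_2^{\rm th}$ by \eqref{eq:three-level-C}. Since $\eta=\exp(-\mathcal E_2^{\alpha,z}/z)$, rearranging gives
\[
C_2^{\rm th}\sin^2\theta\le 1-\exp\bigl(-\mathcal E_2^{\alpha,z}/z\bigr),
\]
and dividing by $C_2^{\rm th}>0$ gives the claim. Here $C_2^{\rm th}>0$ because $0<p,q<1$.

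The main obstacle is orientation bookkeeping. The paper's $U_\theta$ rotates $e_2$ toward $e_3$, while the two-level fiber in Theorem~\ref{thm:exact-two-projection-formula} uses $f_x$ with entries $\sqrt{x(1-x)}$. We need to confirm that conjugating $\operatorname{diag}(1,q)$ by $R_\theta$ gives $qI+(1-q)f_x$ with $x=\cos^2\theta$, up to the sign of the off-diagonal entry. The sign does not change the singular values: it is removed by conjugating with $\operatorname{diag}(1,-1)$, which commutes with $A'$.
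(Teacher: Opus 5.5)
Your proof is correct. It reaches the same $2\times2$ matrix $\operatorname{diag}(1,p)R_\theta\operatorname{diag}(1,q)R_\theta^*$ as the paper, but by a different route.

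The paper works on $\bigwedge^2\mathbb C^3$:
\begin{itemize}
\item it invokes Proposition~\ref{prop:exterior-local-gap} to write the defect as $2z\log\bigl(\|A_2\|\,\|B_2\|/\|A_2B_2\|\bigr)$;
\item it observes that $\bigwedge^2U_\theta$ rotates $\operatorname{span}\{e_1\wedge e_2,e_1\wedge e_3\}$ and fixes $e_2\wedge e_3$;
\item it checks that the leftover scalar block $rs$ does not affect the operator norm.
\end{itemize}

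You stay in $\mathbb C^3$ instead. Since $e_1$ is a common top eigenvector of both normalized factors, you split $AB=1\oplus M$, with $M$ the $2\times2$ block on $\operatorname{span}\{e_2,e_3\}$. You then list all three singular values and identify the two largest using $\eta\le 1$ and the branch separation \eqref{eq:branch-separation}. After that, the $\Delta$-terms in $\ell_2(A)$, $\ell_2(B)$ and $\ell_2(|AB|)$ cancel explicitly.

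For the inverse bound, the paper cites the general estimate \eqref{eq:thermal-inverse-stability} together with \eqref{eq:three-level-projection-distance}. You instead substitute the exact identity $\eta_{p,q}(\cos^2\theta)=e^{-\mathcal E_2^{\alpha,z}/z}$ directly into the scalar inequality $\eta_{p,q}(\cos^2\theta)\le 1-C(p,q)\sin^2\theta$ from Lemma~\ref{lem:norm-defect-angle}.

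What each route buys:
\begin{itemize}
\item Yours is more elementary and self-contained: no exterior algebra, and no appeal to Corollary~\ref{cor:thermal-low-energy}.
\item The paper's places the example inside the exterior-power framework of Proposition~\ref{prop:exterior-local-gap} and Theorem~\ref{thm:principal-angle-remainder}. In effect it shows that the first bound of that theorem is attained for this $k=2$ example.
\end{itemize}

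The orientation issue you flag at the end does not actually arise. For $0\le\theta<\pi/2$, the matrix $R_\theta\operatorname{diag}(1,0)R_\theta^*$ has off-diagonal entry $\cos\theta\sin\theta=\sqrt{x(1-x)}$ with $x=\cos^2\theta$. Hence $B'=qI+(1-q)f_x$ exactly, with no sign correction needed.
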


\begin{proof}
Use the orthonormal basis
\[
f_1=e_1\wedge e_2,
\qquad
f_2=e_1\wedge e_3,
\qquad
f_3=e_2\wedge e_3
\]
of $\bigwedge^2\mathbb C^3$. The scalar factors coming from $Z_H$ and
$Z_K$ cancel in the quotient in
\eqref{eq:exterior-local-gap}. After normalization by the largest
eigenvalues, the two exterior-power operators are
\[
A_2
=
\operatorname{diag}(1,p,r),
\qquad
r:=e^{-a\Omega},
\]
and
\[
B_2
=
(\bigwedge\nolimits^2U_\theta)
\operatorname{diag}(1,q,s)
(\bigwedge\nolimits^2U_\theta)^*,
\qquad
s:=e^{-b(\Omega+L)}.
\]

The operator $\bigwedge^2U_\theta$ rotates
$\operatorname{span}\{f_1,f_2\}$ through the angle $\theta$ and fixes
$f_3$. Hence $A_2B_2$ is the direct sum of a two-dimensional block and
the scalar $rs$. The two-dimensional block is
\[
\begin{pmatrix}
1&0\\
0&p
\end{pmatrix}
R_\theta
\begin{pmatrix}
1&0\\
0&q
\end{pmatrix}
R_\theta^*,
\]
where
\[
R_\theta=
\begin{pmatrix}
\cos\theta&-\sin\theta\\
\sin\theta&\cos\theta
\end{pmatrix}.
\]

Since
\[
r=pe^{-a\Delta}<p,
\qquad
s=qe^{-b\Delta}<q,
\]
the scalar block does not determine the operator norm. The squared norm of
the two-dimensional block is the larger root of
\[
\lambda^2
-
T_\theta(p,q)\lambda
+
p^2q^2
=
0.
\]
Therefore,
\[
\|A_2B_2\|^2
=
\eta_{p,q}(\cos^2\theta).
\]
Formula \eqref{eq:three-level-exact-defect} now follows from
\eqref{eq:exterior-local-gap}. Finally,
\eqref{eq:three-level-inverse-bound} follows from
\eqref{eq:thermal-inverse-stability} and
\eqref{eq:three-level-projection-distance}.
\end{proof}

The behavior for large $L$ is especially simple. Since
\[
q=e^{-b(\Omega+L-\Delta)}\longrightarrow0
\qquad
(L\to\infty),
\]
we obtain
\begin{equation}\label{eq:three-level-large-L-defect}
\mathcal E_2^{\alpha,z}
\bigl(\rho_{\beta,H},\sigma_{\gamma,K}\bigr)
\longrightarrow
-z\log
\left(
\cos^2\theta+p^2\sin^2\theta
\right).
\end{equation}
Also,
\begin{equation}\label{eq:three-level-large-L-C}
C_2^{\rm th}
\longrightarrow
1-p^2.
\end{equation}
It follows that
\begin{equation}\label{eq:three-level-asymptotic-exactness}
\frac{
1-\exp\left\{
-\mathcal E_2^{\alpha,z}
(\rho_{\beta,H},\sigma_{\gamma,K})/z
\right\}}
{C_2^{\rm th}}
\longrightarrow
\sin^2\theta
=
\|P_2^H-P_2^K\|^2.
\end{equation}

Thus the thermal estimate becomes exact as $L\to\infty$, while
\[
\|H-K\|\geq L\longrightarrow\infty.
\]
The example therefore separates the low-energy geometry from a large change
in an irrelevant high-energy level. A norm-based perturbation estimate using
the fixed boundary gap $\Omega-\Delta$ of $H$ becomes ineffective, whereas
the thermal estimate remains informative.

For a numerical illustration, take
\[
\alpha=\frac12,
\qquad
z=1,
\qquad
\beta=\gamma=1,
\qquad
\Delta=1,
\qquad
\Omega=5,
\qquad
\theta=0.2.
\]
Then
\[
\|P_2^H-P_2^K\|^2
=
\sin^2(0.2)
\approx0.0394695.
\]
The defect and its associated inverse bound are shown below.

\begin{table}[htbp]
\centering
\begin{tabular}{c|c|c|c}
$L$
&
$\mathcal E_2^{1/2,1}$
&
$C_2^{\rm th}$
&
$\displaystyle
\frac{1-e^{-\mathcal E_2^{1/2,1}}}{C_2^{\rm th}}$
\\
\hline
$0$  & $0.030539$ & $0.761594$ & $0.039492$\\
$5$  & $0.034386$ & $0.856347$ & $0.039472$\\
$20$ & $0.034724$ & $0.864660$ & $0.039470$\\
$50$ & $0.034724$ & $0.864665$ & $0.0394695$
\end{tabular}
\caption{The exact defect and the thermal upper bound for the squared
low-energy subspace distance. The exact value is
$\sin^2(0.2)\approx0.0394695$.}
\label{tab:three-level-model}
\end{table}

\subsection{Application to the Hubbard dimer}
\label{subsec:Hubbard-dimer}

We now apply the thermal estimate to an interacting fermionic model. The
example compares the low-energy sector of the half-filled Hubbard dimer with
the corresponding effective Heisenberg model. This is a simple setting in
which the error of an effective low-energy Hamiltonian can be computed
exactly.

Consider the two-site Hubbard Hamiltonian with two fermions:
\begin{equation}\label{eq:Hubbard-dimer-H}
H_{\rm Hub}
=
-t\sum_{\sigma=\uparrow,\downarrow}
\left(
c_{1\sigma}^*c_{2\sigma}
+
c_{2\sigma}^*c_{1\sigma}
\right)
+
U\sum_{j=1}^2 n_{j\uparrow}n_{j\downarrow},
\end{equation}
where $t>0$ is the hopping parameter and $U>0$ is the on-site repulsion.
The strong-coupling relation between the half-filled Hubbard model and an
effective spin Hamiltonian is standard; see, for example,
\cite{MacDonaldGirvinYoshioka}.

The two-particle Hilbert space has dimension six. The three triplet states
are
\[
|T_+\rangle=|\uparrow,\uparrow\rangle,
\qquad
|T_-\rangle=|\downarrow,\downarrow\rangle,
\]
and
\[
|T_0\rangle
=
\frac{1}{\sqrt2}
\left(
|\uparrow,\downarrow\rangle
+
|\downarrow,\uparrow\rangle
\right).
\]
They have energy zero. The singly occupied singlet state is
\begin{equation}\label{eq:Hubbard-singlet}
|S\rangle
=
\frac{1}{\sqrt2}
\left(
|\uparrow,\downarrow\rangle
-
|\downarrow,\uparrow\rangle
\right).
\end{equation}
We also introduce the doubly occupied states
\begin{align}
|D_+\rangle
&=
\frac{1}{\sqrt2}
\left(
|\uparrow\downarrow,0\rangle
+
|0,\uparrow\downarrow\rangle
\right),
\label{eq:Hubbard-D-plus}\\
|D_-\rangle
&=
\frac{1}{\sqrt2}
\left(
|\uparrow\downarrow,0\rangle
-
|0,\uparrow\downarrow\rangle
\right).
\label{eq:Hubbard-D-minus}
\end{align}
Up to an irrelevant choice of phases, the restriction of
$H_{\rm Hub}$ to
$\operatorname{span}\{|S\rangle,|D_+\rangle\}$ is
\begin{equation}\label{eq:Hubbard-singlet-block}
\begin{pmatrix}
0&-2t\\
-2t&U
\end{pmatrix},
\end{equation}
while
\[
H_{\rm Hub}|D_-\rangle=U|D_-\rangle.
\]
The two eigenvalues of \eqref{eq:Hubbard-singlet-block} are
\begin{equation}\label{eq:Hubbard-Epm}
E_\pm
=
\frac{
U\pm\sqrt{U^2+16t^2}
}{2}.
\end{equation}
Consequently, the ordered energy levels of $H_{\rm Hub}$ are
\begin{equation}\label{eq:Hubbard-spectrum}
E_-,
\quad
0,
\quad
0,
\quad
0,
\quad
U,
\quad
E_+.
\end{equation}

The normalized low-energy singlet can be written as
\begin{equation}\label{eq:Hubbard-low-singlet}
|\psi_-\rangle
=
\cos\vartheta\,|S\rangle
+
\sin\vartheta\,|D_+\rangle,
\end{equation}
where
\begin{equation}\label{eq:Hubbard-mixing-angle}
\tan(2\vartheta)
=
\frac{4t}{U},
\qquad
0<\vartheta<\frac{\pi}{4}.
\end{equation}
Thus the four-dimensional low-energy subspace of the Hubbard dimer is
\begin{equation}\label{eq:Hubbard-low-energy-subspace}
\mathcal L_4(H_{\rm Hub})
=
\operatorname{span}
\left\{
|T_+\rangle,
|T_0\rangle,
|T_-\rangle,
|\psi_-\rangle
\right\}.
\end{equation}

We next place the effective Heisenberg Hamiltonian on the same six-dimensional
Hilbert space. Let $P_{\rm sp}$ be the orthogonal projection onto the singly
occupied spin subspace
\begin{equation}\label{eq:Hubbard-spin-subspace}
\mathcal H_{\rm sp}
=
\operatorname{span}
\left\{
|T_+\rangle,
|T_0\rangle,
|T_-\rangle,
|S\rangle
\right\}.
\end{equation}
Set
\[
Q_{\rm d}:=I-P_{\rm sp},
\qquad
J:=\frac{4t^2}{U},
\]
and, for $\Lambda>0$, define
\begin{equation}\label{eq:Hubbard-effective-Hamiltonian}
K_{\rm eff}
=
J P_{\rm sp}
\left(
\mathbf S_1\cdot\mathbf S_2-\frac14 I
\right)
P_{\rm sp}
+
\Lambda Q_{\rm d}.
\end{equation}
The term $\Lambda Q_{\rm d}$ places the doubly occupied states above the
spin sector and allows the two Hamiltonians to be compared on the same
Hilbert space. It does not change the effective spin Hamiltonian on
$\mathcal H_{\rm sp}$.

The singlet state $|S\rangle$ has energy $-J$, the three triplet states have
energy zero, and the two doubly occupied states have energy $\Lambda$.
Therefore,
\begin{equation}\label{eq:Hubbard-effective-spectrum}
\operatorname{spec}(K_{\rm eff})
=
\{-J,0,0,0,\Lambda,\Lambda\},
\end{equation}
and
\begin{equation}\label{eq:Hubbard-effective-low-subspace}
\mathcal L_4(K_{\rm eff})
=
\mathcal H_{\rm sp}.
\end{equation}

The next proposition identifies the exact low-energy subspace error.

\begin{proposition}[Low-energy error of the effective spin model]
\label{prop:Hubbard-low-energy-error}

Let $P_4^{\rm Hub}$ be the spectral projection of $H_{\rm Hub}$ onto its
four lowest energy levels. Then
\begin{equation}\label{eq:Hubbard-projection-distance}
\|P_4^{\rm Hub}-P_{\rm sp}\|
=
\sin\vartheta,
\qquad
\|P_4^{\rm Hub}-P_{\rm sp}\|_2^2
=
2\sin^2\vartheta,
\end{equation}
where
\begin{equation}\label{eq:Hubbard-exact-angle}
\sin^2\vartheta
=
\frac12
\left(
1-\frac{U}{\sqrt{U^2+16t^2}}
\right).
\end{equation}
In particular, as $U/t\to\infty$,
\begin{equation}\label{eq:Hubbard-angle-asymptotic}
\sin^2\vartheta
=
\frac{4t^2}{U^2}
+
O\left(\frac{t^4}{U^4}\right).
\end{equation}
\end{proposition}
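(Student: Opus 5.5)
The plan is to compute the two projections explicitly and read off the principal angles. First, $P_4^{\rm Hub}$ is the projection onto $\operatorname{span}\{|T_+\rangle,|T_0\rangle,|T_-\rangle,|\psi_-\rangle\}$. This holds because the ordered spectrum \eqref{eq:Hubbard-spectrum} has a strict gap between the fourth level $0$ and the fifth level $U$ (and $E_-<0$). Next, $P_{\rm sp}$ projects onto $\operatorname{span}\{|T_\pm\rangle,|T_0\rangle,|S\rangle\}$. The three triplet vectors lie in both subspaces, so they give three principal angles equal to $0$. On the orthogonal complement of the triplet span, the two remaining lines are $\mathbb C|\psi_-\rangle$ and $\mathbb C|S\rangle$. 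Since $\langle S,\psi_-\rangle=\cos\vartheta$ by \eqref{eq:Hubbard-low-singlet}, the fourth principal angle is $\vartheta$.

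Concretely, write $P_4^{\rm Hub}-P_{\rm sp}=|\psi_-\rangle\langle\psi_-|-|S\rangle\langle S|$. This is a rank-two self-adjoint operator supported on $\operatorname{span}\{|S\rangle,|D_+\rangle\}$. In that basis it equals
\[
\begin{pmatrix}\cos^2\vartheta-1&\cos\vartheta\sin\vartheta\\ \cos\vartheta\sin\vartheta&\sin^2\vartheta\end{pmatrix},
\]
which has trace $0$ and determinant $-\sin^2\vartheta$. Its eigenvalues are therefore $\pm\sin\vartheta$. This gives $\|P_4^{\rm Hub}-P_{\rm sp}\|=\sin\vartheta$ and $\|P_4^{\rm Hub}-P_{\rm sp}\|_2^2=2\sin^2\vartheta$, in agreement with the principal-angle formulas already used in the proof of Theorem~\ref{thm:principal-angle-remainder}.

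It remains to justify the mixing angle and compute $\sin^2\vartheta$. The vector $(\cos\vartheta,\sin\vartheta)$ is an eigenvector of the block \eqref{eq:Hubbard-singlet-block} for $E_-$. The first row gives $-2t\sin\vartheta=E_-\cos\vartheta$, so $\tan\vartheta=-E_-/(2t)>0$. The standard diagonalization of a real symmetric $2\times2$ matrix then yields $\tan 2\vartheta=4t/U$, as in \eqref{eq:Hubbard-mixing-angle}. Because $U>0$, we have $2\vartheta\in(0,\pi/2)$, and hence $\cos 2\vartheta=U/\sqrt{U^2+16t^2}$. Then $\sin^2\vartheta=(1-\cos2\vartheta)/2$ gives \eqref{eq:Hubbard-exact-angle}.

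For the asymptotics, set $x=16t^2/U^2$ and expand $(1+x)^{-1/2}=1-x/2+O(x^2)$. This gives $\sin^2\vartheta=x/4+O(x^2)=4t^2/U^2+O(t^4/U^4)$.

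The main point to check carefully is the sign and phase convention. The block \eqref{eq:Hubbard-singlet-block} is stated only up to phases, so we must verify that the eigenvector for $E_-$ can be taken with both components nonnegative, which justifies $0<\vartheta<\pi/4$. Since all the quantities involved depend only on $\cos^2\vartheta$, any phase change leaves the result unaffected.
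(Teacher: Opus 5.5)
Your proposal is correct and follows essentially the same route as the paper: the common triplet space gives principal angles $0,0,0,\vartheta$, and then $\cos 2\vartheta = U/\sqrt{U^2+16t^2}$ is expanded. Two steps the paper takes as given, you verify directly. Your explicit $2\times 2$ computation of $|\psi_-\rangle\langle\psi_-|-|S\rangle\langle S|$, with eigenvalues $\pm\sin\vartheta$, replaces its appeal to the standard projection identities. Your derivation of $\tan 2\vartheta = 4t/U$ and $0<\vartheta<\pi/4$ from the eigenvector equation replaces its use of \eqref{eq:Hubbard-mixing-angle}.
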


\begin{proof}
The subspaces
$\mathcal L_4(H_{\rm Hub})$ and $\mathcal H_{\rm sp}$ contain the same
three-dimensional triplet space. Their only nonzero principal angle is the
angle between $|\psi_-\rangle$ and $|S\rangle$. By
\eqref{eq:Hubbard-low-singlet},
\[
|\langle S,\psi_-\rangle|
=
\cos\vartheta.
\]
Hence the four principal angles are
\[
0,\quad0,\quad0,\quad\vartheta.
\]
The standard projection identities now give
\[
\|P_4^{\rm Hub}-P_{\rm sp}\|
=
\sin\vartheta
\]
and
\[
\|P_4^{\rm Hub}-P_{\rm sp}\|_2^2
=
2\sin^2\vartheta.
\]

It follows from \eqref{eq:Hubbard-mixing-angle} that
\[
\cos(2\vartheta)
=
\frac{U}{\sqrt{U^2+16t^2}}.
\]
Therefore,
\[
\sin^2\vartheta
=
\frac{1-\cos(2\vartheta)}{2}
=
\frac12
\left(
1-\frac{U}{\sqrt{U^2+16t^2}}
\right),
\]
which proves \eqref{eq:Hubbard-exact-angle}. Expanding the last expression
for $t/U\to0$ gives \eqref{eq:Hubbard-angle-asymptotic}.
\end{proof}

We now apply the thermal certificate. Define
\begin{equation}\label{eq:Hubbard-Gibbs-states}
\rho_{\beta}^{\rm Hub}
:=
\frac{e^{-\beta H_{\rm Hub}}}
{\Tr(e^{-\beta H_{\rm Hub}})},
\qquad
\sigma_{\gamma}^{\rm eff}
:=
\frac{e^{-\gamma K_{\rm eff}}}
{\Tr(e^{-\gamma K_{\rm eff}})}.
\end{equation}
By \eqref{eq:Hubbard-spectrum}, the boundary gap above the four-dimensional
low-energy sector of $H_{\rm Hub}$ is
\[
h_5-h_4=U.
\]
Similarly, by \eqref{eq:Hubbard-effective-spectrum},
\[
\kappa_5-\kappa_4=\Lambda.
\]
It follows that
\begin{equation}\label{eq:Hubbard-thermal-pq}
p_4^{\rm th}
=
\exp\left(
-\frac{\alpha\beta U}{2z}
\right),
\qquad
q_4^{\rm th}
=
\exp\left(
-\frac{(1-\alpha)\gamma\Lambda}{2z}
\right),
\end{equation}
and
\begin{equation}\label{eq:Hubbard-thermal-C}
C_4^{\rm th}
=
\frac{
\left(1-e^{-\alpha\beta U/z}\right)
\left(1-e^{-(1-\alpha)\gamma\Lambda/z}\right)}
{
1-e^{-[\alpha\beta U+(1-\alpha)\gamma\Lambda]/z}
}.
\end{equation}

\begin{corollary}[Thermal certificate for the Hubbard dimer]
\label{cor:Hubbard-thermal-certificate}

For $0<\alpha<1$ and $z>0$,
\begin{equation}\label{eq:Hubbard-defect-lower-bound}
\mathcal E_4^{\alpha,z}
\bigl(
\rho_{\beta}^{\rm Hub},
\sigma_{\gamma}^{\rm eff}
\bigr)
\geq
-z\log
\left(
1-C_4^{\rm th}\sin^2\vartheta
\right)
\geq
zC_4^{\rm th}\sin^2\vartheta.
\end{equation}
Equivalently,
\begin{equation}\label{eq:Hubbard-inverse-certificate}
\frac12
\left(
1-\frac{U}{\sqrt{U^2+16t^2}}
\right)
\leq
\frac{
1-\exp\left\{
-\mathcal E_4^{\alpha,z}
(\rho_{\beta}^{\rm Hub},\sigma_{\gamma}^{\rm eff})/z
\right\}}
{C_4^{\rm th}}.
\end{equation}
\end{corollary}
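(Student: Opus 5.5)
This is a direct specialization of Corollary~\ref{cor:thermal-low-energy}, inequality \eqref{eq:thermal-remainder}, to $H=H_{\rm Hub}$, $K=K_{\rm eff}$, $d=6$ and $k=4$. I would carry it out in three steps.

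\textbf{Step 1: the hypotheses.} By \eqref{eq:Hubbard-spectrum}, the ordered energy levels of $H_{\rm Hub}$ are $h_1=E_-<0$, $h_2=h_3=h_4=0$, $h_5=U$, $h_6=E_+$. Since $U>0$, we have $h_4<h_5$. By \eqref{eq:Hubbard-effective-spectrum}, $\kappa_4=0<\Lambda=\kappa_5$. So the strict boundary-gap assumptions hold with $h_5-h_4=U$ and $\kappa_5-\kappa_4=\Lambda$. Hence the thermal parameters \eqref{eq:thermal-pk-qk} coincide with \eqref{eq:Hubbard-thermal-pq}. A direct simplification of \eqref{eq:thermal-Ck-Sk} then gives \eqref{eq:Hubbard-thermal-C}, using $(p_4^{\rm th})^2=e^{-\alpha\beta U/z}$ and the analogous identity for $q_4^{\rm th}$.

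\textbf{Step 2: the subspaces and angles.} The relevant subspaces are $\mathcal L_4(H_{\rm Hub})$ from \eqref{eq:Hubbard-low-energy-subspace} and $\mathcal L_4(K_{\rm eff})=\mathcal H_{\rm sp}$. The proof of Proposition~\ref{prop:Hubbard-low-energy-error} shows that the principal angles are $0,0,0,\vartheta$. Therefore
\[
S_4^{\rm th}=1-\prod_{j=1}^4\cos^2\theta_j=1-\cos^2\vartheta=\sin^2\vartheta .
\]
Substituting into the first two inequalities of \eqref{eq:thermal-remainder} gives \eqref{eq:Hubbard-defect-lower-bound} directly.

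\textbf{Step 3: the equivalent form.} Rearrange the first inequality of \eqref{eq:Hubbard-defect-lower-bound}: divide by $-z$ and exponentiate. This gives
\[
1-C_4^{\rm th}\sin^2\vartheta\geq \exp\{-\mathcal E_4^{\alpha,z}/z\}.
\]
Since $C_4^{\rm th}>0$, we may solve for $\sin^2\vartheta$. Replacing $\sin^2\vartheta$ by the expression \eqref{eq:Hubbard-exact-angle} yields \eqref{eq:Hubbard-inverse-certificate}.

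The converse direction of ``equivalently'' is immediate, because each step is a monotone rearrangement. Here $1-C_4^{\rm th}\sin^2\vartheta>0$, since $C_4^{\rm th}<1$ and $\sin^2\vartheta<1/2$.

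\textbf{Expected obstacle.} The only step needing care is confirming that Corollary~\ref{cor:thermal-low-energy} applies despite the degenerate levels $0,0,0$ inside the sector. That corollary requires only the strict gap at the boundary index $k=4$, together with $\mathcal L_4$ being the spectral subspace of the four lowest levels. Both hold, so degeneracy below the boundary is harmless.

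Finally, I would note that the phase choice in \eqref{eq:Hubbard-singlet-block} does not affect $|\langle S,\psi_-\rangle|=\cos\vartheta$.
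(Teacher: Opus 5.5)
Your proposal is correct and follows essentially the same route as the paper. Both specialize \eqref{eq:thermal-remainder} with principal angles $0,0,0,\vartheta$, so that $S_4^{\rm th}=\sin^2\vartheta$, and then invert using \eqref{eq:Hubbard-exact-angle}. The only cosmetic difference is that you exponentiate the first inequality directly, whereas the paper cites \eqref{eq:thermal-inverse-stability} together with $\|P_4^{\rm Hub}-P_{\rm sp}\|^2=\sin^2\vartheta$.
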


\begin{proof}
The four principal angles are $0,0,0,\vartheta$. Hence
\[
S_4^{\rm th}
=
1-\prod_{j=1}^4\cos^2\theta_j
=
\sin^2\vartheta.
\]
Applying \eqref{eq:thermal-remainder} gives
\eqref{eq:Hubbard-defect-lower-bound}. The inverse estimate
\eqref{eq:Hubbard-inverse-certificate} follows from
\eqref{eq:thermal-inverse-stability} and
\eqref{eq:Hubbard-exact-angle}.
\end{proof}

The quantity on the left-hand side of
\eqref{eq:Hubbard-inverse-certificate} is the exact weight of the doubly
occupied component in the low-energy singlet. Thus the local thermal defect
controls the charge admixture that is removed in passing from the Hubbard
model to the effective spin model.

The same estimate also controls low-energy observables. For every
$O\in\mathbb M_6$,
\begin{equation}\label{eq:Hubbard-observable-certificate}
\left|
\frac14\Tr(P_4^{\rm Hub}OP_4^{\rm Hub})
-
\frac14\Tr(P_{\rm sp}OP_{\rm sp})
\right|
\leq
2\|O\|
\min\left\{
1,
\left(
\frac{
\mathcal E_4^{\alpha,z}
(\rho_{\beta}^{\rm Hub},\sigma_{\gamma}^{\rm eff})
}
{4zC_4^{\rm th}}
\right)^{1/2}
\right\}.
\end{equation}
Therefore, one scalar thermal defect gives a simultaneous error estimate for
all bounded observables in the normalized low-energy sectors.

For a numerical illustration, measure energy in units of the hopping
parameter and set
\[
t=1,
\qquad
\Lambda=U,
\qquad
\alpha=\frac12,
\qquad
z=1,
\qquad
\beta=\gamma=1.
\]
The following values are obtained by direct diagonalization of the two
six-dimensional Hamiltonians and Definition~\ref{def:local-renyi-defect}.

\begin{table}[htbp]
\centering
\begin{tabular}{c|c|c|c|c}
$U$
&
$J=4/U$
&
$\sin^2\vartheta$
&
$\mathcal E_4^{1/2,1}$
&
$\displaystyle
\frac{1-e^{-\mathcal E_4^{1/2,1}}}{C_4^{\rm th}}$
\\
\hline
$4$  & $1$        & $0.146447$ & $0.136034$ & $0.167001$\\
$8$  & $0.5$      & $0.052786$ & $0.052819$ & $0.053368$\\
$12$ & $0.333333$ & $0.025658$ & $0.025891$ & $0.025686$\\
$16$ & $0.25$     & $0.014929$ & $0.015033$ & $0.014930$
\end{tabular}
\caption{The exact low-energy subspace error and the thermal certificate for
the Hubbard dimer. The certificate becomes very accurate in the
strong-coupling regime.}
\label{tab:Hubbard-thermal-certificate}
\end{table}

The table shows that the thermal certificate follows the exact low-energy
subspace error and becomes nearly sharp as $U/t$ increases. Thus the
determinant defect provides a quantitative test of the effective Heisenberg
description of the strongly interacting Hubbard dimer.

\subsection{Ground states and the two-level model}
\label{subsec:ground-state-qubit}

We now specialize the preceding thermal estimate to nondegenerate ground
states. In this case, the distance between the spectral projections is
determined by the overlap of the two ground-state vectors. We first obtain
a general overlap bound and then show that, for a two-level system, the
ground-state overlap can be recovered exactly from the local defect.

\begin{corollary}
\label{cor:thermal-ground-state}

Under the assumptions of
Corollary~\ref{cor:thermal-low-energy}, let $k=1$, and let $u_H$ and $u_K$
be unit ground-state vectors of $H$ and $K$, respectively.  Then
\begin{equation}\label{eq:ground-state-remainder}
\begin{split}
\mathcal E_1^{\alpha,z}
\bigl(\rho_{\beta,H},\sigma_{\gamma,K}\bigr)
&\geq
-z\log\left[
1-C_1^{\rm th}
\bigl(1-|\langle u_H,u_K\rangle|^2\bigr)
\right]\\
&\geq
zC_1^{\rm th}
\bigl(1-|\langle u_H,u_K\rangle|^2\bigr).
\end{split}
\end{equation}
Consequently,
\begin{equation}\label{eq:ground-state-overlap-certificate}
|\langle u_H,u_K\rangle|^2
\geq
1-
\frac{
1-\exp\left\{
-\mathcal E_1^{\alpha,z}
(\rho_{\beta,H},\sigma_{\gamma,K})/z
\right\}}
{C_1^{\rm th}}.
\end{equation}
Thus, when both ground-state energy gaps are positive, a small local
R\'enyi determinant defect guarantees a large overlap between the two
ground states.
\end{corollary}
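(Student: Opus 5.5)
This is the case $k=1$ of Corollary~\ref{cor:thermal-low-energy}, and the plan is to specialize inequality \eqref{eq:thermal-remainder} and then invert it. With $k=1$ the low-energy subspaces $\mathcal L_1(H)=\mathbb C u_H$ and $\mathcal L_1(K)=\mathbb C u_K$ are one-dimensional. This uses the standing assumptions $h_1<h_2$ and $\kappa_1<\kappa_2$, so the ground states are nondegenerate and $u_H,u_K$ are determined up to phase. There is then a single principal angle $\theta_1$, given by $\cos\theta_1=|\langle u_H,u_K\rangle|$. Hence
\[
S_1^{\rm th}=1-\cos^2\theta_1=1-|\langle u_H,u_K\rangle|^2 .
\]

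Substituting this into the first two inequalities of \eqref{eq:thermal-remainder} gives \eqref{eq:ground-state-remainder} directly. The second inequality there is just $-\log(1-s)\geq s$ for $0\leq s<1$. This applies because $0<C_1^{\rm th}<1$ and $S_1^{\rm th}\leq 1$, so $s=C_1^{\rm th}S_1^{\rm th}<1$.

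For the overlap certificate \eqref{eq:ground-state-overlap-certificate}, I would rewrite the first inequality of \eqref{eq:ground-state-remainder} as
\[
1-C_1^{\rm th}\bigl(1-|\langle u_H,u_K\rangle|^2\bigr)\geq \exp\bigl\{-\mathcal E_1^{\alpha,z}/z\bigr\}.
\]
This step divides by $z>0$ and uses that the exponential is monotone. I would then isolate the overlap term and divide by $C_1^{\rm th}>0$. Alternatively, one can quote \eqref{eq:thermal-inverse-stability} with $\|P_1^H-P_1^K\|^2=\sin^2\theta_1=1-|\langle u_H,u_K\rangle|^2$, which holds because the difference of two rank-one projections has norm $\sin\theta_1$.

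If $\mathcal E_1^{\alpha,z}=+\infty$, the inequalities are trivial under the extended-real conventions. Apart from this remark there is no real obstacle. The only point to check is the identification of $S_1^{\rm th}$ with $1-|\langle u_H,u_K\rangle|^2$, and this is independent of the phase choices of the unit vectors.
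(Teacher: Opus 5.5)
Your proposal is correct and matches the paper's proof essentially step by step. You identify $S_1^{\rm th}=1-|\langle u_H,u_K\rangle|^2$ through the single principal angle, substitute it into \eqref{eq:thermal-remainder} together with $-\log(1-s)\geq s$, and then exponentiate and divide by $C_1^{\rm th}>0$ to obtain \eqref{eq:ground-state-overlap-certificate}. Your alternative via \eqref{eq:thermal-inverse-stability} with $\|P_1^H-P_1^K\|^2=\sin^2\theta_1$ is equally valid.
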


\begin{proof}
Since $k=1$ and the ground states are nondegenerate, their spectral
projections are
\[
P_1^H=|u_H\rangle\langle u_H|,
\qquad
P_1^K=|u_K\rangle\langle u_K|.
\]
Let $\theta_1$ be the principal angle between their one-dimensional ranges.
Then $\cos\theta_1=|\langle u_H,u_K\rangle|$,
and hence
\[
S_1^{\rm th}
=
1-\cos^2\theta_1
=
1-|\langle u_H,u_K\rangle|^2.
\]
Substituting this identity into \eqref{eq:thermal-remainder} gives
\[
\mathcal E_1^{\alpha,z}
\bigl(\rho_{\beta,H},\sigma_{\gamma,K}\bigr)
\geq
-z\log\left[
1-C_1^{\rm th}
\bigl(1-|\langle u_H,u_K\rangle|^2\bigr)
\right].
\]
The second inequality in \eqref{eq:ground-state-remainder} follows from
$-\log(1-s)\geq s,~ 0\leq s<1$,
with
\[s=C_1^{\rm th}\bigl(1-|\langle u_H,u_K\rangle|^2\bigr).\]

To prove the overlap estimate, exponentiate the first inequality in
\eqref{eq:ground-state-remainder}.  This gives
\[
\exp\left\{
-\mathcal E_1^{\alpha,z}
(\rho_{\beta,H},\sigma_{\gamma,K})/z
\right\}
\leq
1-C_1^{\rm th}
\bigl(1-|\langle u_H,u_K\rangle|^2\bigr).
\]
Therefore,
\[
1-|\langle u_H,u_K\rangle|^2
\leq
\frac{
1-\exp\left\{
-\mathcal E_1^{\alpha,z}
(\rho_{\beta,H},\sigma_{\gamma,K})/z
\right\}}
{C_1^{\rm th}}.
\]
Since the energy gaps are positive, $C_1^{\rm th}>0$.  Rearranging the last
inequality proves \eqref{eq:ground-state-overlap-certificate}.
\end{proof}

\begin{corollary}
\label{cor:thermal-qubit-exact}

Assume that $d=2$ and retain the assumptions of
Corollary~\ref{cor:thermal-ground-state}.  Set
\[
\delta
:=
\mathcal E_1^{\alpha,z}
\bigl(\rho_{\beta,H},\sigma_{\gamma,K}\bigr),
\qquad
\eta:=e^{-\delta/z},
\qquad
p:=p_1^{\rm th},
\qquad
q:=q_1^{\rm th}.
\]
Then the ground-state overlap is determined exactly by the local defect:
\begin{equation}\label{eq:exact-ground-overlap}
|\langle u_H,u_K\rangle|^2
=
\frac{
\eta+p^2q^2/\eta-p^2-q^2}
{(1-p^2)(1-q^2)}.
\end{equation}
In particular, if
$
H=h_0I+\mathbf h\cdot\boldsymbol\sigma,~
K=\kappa_0I+\mathbf k\cdot\boldsymbol\sigma,~
\mathbf h,\mathbf k\neq0$,
then
\begin{equation}\label{eq:Bloch-ground-overlap}
|\langle u_H,u_K\rangle|^2
=
\frac{1+\widehat{\mathbf h}\cdot\widehat{\mathbf k}}{2},
\end{equation}
where
\[
\widehat{\mathbf h}:=\frac{\mathbf h}{\|\mathbf h\|},
\qquad
\widehat{\mathbf k}:=\frac{\mathbf k}{\|\mathbf k\|}.
\]
Thus \eqref{eq:exact-ground-overlap} determines
$\widehat{\mathbf h}\cdot\widehat{\mathbf k}$ and hence the angle between the
two Hamiltonian Bloch vectors.
\end{corollary}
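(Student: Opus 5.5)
In the qubit case $d=2$, $k=1$, the plan is to show that the lower bound of Theorem~\ref{thm:principal-angle-remainder} is in fact an equality. Then the defect is an explicit, strictly monotone function of $\cos^2\theta_1=|\langle u_H,u_K\rangle|^2$, and that function can be inverted. Set $X=\rho_{\beta,H}^{\alpha/2}$, $Y=\sigma_{\gamma,K}^{(1-\alpha)/2}$ and $r=1/z$, and put $A=X^r$, $B=Y^r$. By Proposition~\ref{prop:exterior-local-gap} with $k=1$ (so $\bigwedge^1$ is the identity),
\[
\delta=\mathcal E_1^{\alpha,z}=2z\log\frac{\|A\|\,\|B\|}{\|AB\|}.
\]
Since $d=2$, the rescaled matrices $A/\|A\|$ and $B/\|B\|$ are exactly the two-level operators $pI+(1-p)P_u$ and $qI+(1-q)P_v$. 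Here $u=u_H$ and $v=u_K$, because the dominant eigenvectors of the Gibbs powers are the ground states. As in Corollary~\ref{cor:thermal-low-energy}, the ratios of the eigenvalues are $p=p_1^{\rm th}$ and $q=q_1^{\rm th}$, since the partition functions cancel.

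The key step uses the proof of Lemma~\ref{lem:norm-defect-angle}. There it is shown that for $A_0=pI+(1-p)P_u$ and $B_0=qI+(1-q)P_v$ one has $\|A_0B_0\|^2=\eta(p,q,\theta)$ exactly, and in dimension $2$ there is no orthogonal complement of $\operatorname{span}\{u,v\}$ to worry about; if $u\parallel v$, the value is $1$ by the direct computation. Equivalently, this is Theorem~\ref{thm:exact-two-projection-formula} with $\nu_{e,f}=\delta_{\cos^2\theta}$, which gives equality in the first bound. Hence $\eta:=e^{-\delta/z}=\eta_{p,q}(x)$ with $x=|\langle u_H,u_K\rangle|^2$.

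To invert, I would use that $\eta_{p,q}(x)$ is the larger root of $\lambda^2-T_{p,q}(x)\lambda+p^2q^2=0$. This gives $T_{p,q}(x)=\eta+p^2q^2/\eta$. Since $T_{p,q}(x)=p^2+q^2+(1-p^2)(1-q^2)x$ is affine with positive slope,
\[
x=\frac{\eta+p^2q^2/\eta-p^2-q^2}{(1-p^2)(1-q^2)},
\]
which is \eqref{eq:exact-ground-overlap}. The denominator is nonzero because both gaps are positive.

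For the Bloch form, the ground state of $h_0I+\mathbf h\cdot\boldsymbol\sigma$ is the $-1$ eigenvector of $\widehat{\mathbf h}\cdot\boldsymbol\sigma$. Its projection is therefore $\tfrac12(I-\widehat{\mathbf h}\cdot\boldsymbol\sigma)$, and similarly for $K$. The identity $\Tr\bigl[(\mathbf a\cdot\boldsymbol\sigma)(\mathbf b\cdot\boldsymbol\sigma)\bigr]=2\,\mathbf a\cdot\mathbf b$ then gives
\[
|\langle u_H,u_K\rangle|^2=\Tr(P_HP_K)=\tfrac14\bigl(2+2\,\widehat{\mathbf h}\cdot\widehat{\mathbf k}\bigr)=\tfrac{1+\widehat{\mathbf h}\cdot\widehat{\mathbf k}}{2}.
\]
The main point needing care is the exactness claim, namely that the normalized $A$ and $B$ coincide with the two-level models. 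This holds only because $d=2$, and it is the step I would verify explicitly.
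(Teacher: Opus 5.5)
Your proposal is correct and follows the paper's proof essentially step for step. The $k=1$ exterior-power formula gives $\delta=2z\log\bigl(\|A\|\,\|B\|/\|AB\|\bigr)$; because $d=2$, $A/\|A\|$ and $B/\|B\|$ coincide with the two-level comparison operators, which makes the first bound of Lemma~\ref{lem:norm-defect-angle} an equality; and the larger-root identity $T=\eta+p^2q^2/\eta$, the affine form of $T$ in $\cos^2\theta$, and the Pauli trace identity finish the argument (your handling of $u\parallel v$ and the alternative via Theorem~\ref{thm:exact-two-projection-formula} with $\nu_{e,f}=\delta_{\cos^2\theta}$ are also consistent with the paper).
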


\begin{proof}
Set
$
A:=\rho_{\beta,H}^{\alpha/(2z)}$, and
$B:=\sigma_{\gamma,K}^{(1-\alpha)/(2z)}$.
Since $d=2$, the normalized operators $A/\|A\|$ and $B/\|B\|$ have the
two-level forms
\[
\frac{A}{\|A\|}
=
pI+(1-p)P_1^H,
\qquad
\frac{B}{\|B\|}
=
qI+(1-q)P_1^K.
\]
Indeed, $p$ and $q$ are the ratios of the smaller eigenvalues to the larger
eigenvalues of $A$ and $B$, respectively.  Thus the comparison operators in
Lemma~\ref{lem:norm-defect-angle} coincide with the normalized operators
themselves, and its first estimate is an equality.

Let $\theta$ be the principal angle between the two ground-state subspaces.
Then
\[
\cos^2\theta=|\langle u_H,u_K\rangle|^2.
\]
By the exterior-power formula for $k=1$,
\[
\delta
=
2z\log
\frac{\|A\|\,\|B\|}{\|AB\|}.
\]
The exact two-dimensional calculation in
Lemma~\ref{lem:norm-defect-angle} gives
\[
\frac{\|AB\|^2}{\|A\|^2\|B\|^2}
=
\eta_{p,q}(\cos^2\theta).
\]
Consequently,
$\eta=e^{-\delta/z}=\eta_{p,q}(\cos^2\theta)$.
By the definition of $\eta_{p,q}$, the number $\eta$ is the larger root of
$
\lambda^2-T(p,q,\theta)\lambda+p^2q^2=0.
$
Hence
\[
T(p,q,\theta)
=
\eta+\frac{p^2q^2}{\eta}.
\]
On the other hand,
\begin{align*}
T(p,q,\theta)
&=
(1+p^2q^2)\cos^2\theta
+
(p^2+q^2)\sin^2\theta\\
&=
p^2+q^2
+
(1-p^2)(1-q^2)\cos^2\theta.
\end{align*}
Solving for $\cos^2\theta$ yields
\[
\cos^2\theta
=
\frac{
\eta+p^2q^2/\eta-p^2-q^2}
{(1-p^2)(1-q^2)}.
\]
Since
$\cos^2\theta=|\langle u_H,u_K\rangle|^2$,
this proves \eqref{eq:exact-ground-overlap}.

For the Bloch representation, the ground-state projections of $H$ and $K$
are
\[
P_1^H
=
\frac{I-\widehat{\mathbf h}\cdot\boldsymbol\sigma}{2},
\qquad
P_1^K
=
\frac{I-\widehat{\mathbf k}\cdot\boldsymbol\sigma}{2}.
\]
Using
$\Tr(\sigma_i)=0,~\Tr(\sigma_i\sigma_j)=2\delta_{ij}$,
we obtain
\[
|\langle u_H,u_K\rangle|^2
=\Tr(P_1^HP_1^K)=
\frac{1+\widehat{\mathbf h}\cdot\widehat{\mathbf k}}{2}.
\]
This proves \eqref{eq:Bloch-ground-overlap} and completes the proof.
\end{proof}

\subsection{An exactly solvable qubit model}
\label{subsec:qubit-model}

We apply the preceding reconstruction formula to a simple two-level system.
Let
\begin{equation}\label{eq:qubit-HK-model}
 H=\frac{\Delta_H}{2}\sigma_z,
 \qquad
 K=\frac{\Delta_K}{2}
 \bigl(\cos\varphi\,\sigma_z+\sin\varphi\,\sigma_x\bigr),
 \qquad
 \Delta_H,\Delta_K>0,\qquad 0\leq\varphi\leq\pi.
\end{equation}
The additive scalar parts of the Hamiltonians have been omitted because they
cancel after Gibbs normalization.  The energy gaps are $\Delta_H$ and
$\Delta_K$, while the ground-state Bloch vectors point in the directions
$-\mathbf e_z$ and
$-(\sin\varphi,0,\cos\varphi)$, respectively.  Consequently,
\begin{equation}\label{eq:qubit-model-overlap}
 |\langle u_H,u_K\rangle|^2=\cos^2\frac{\varphi}{2}.
\end{equation}

Put
\begin{equation}\label{eq:qubit-model-pq}
 p=\exp\left(-\frac{\alpha\beta\Delta_H}{2z}\right),
 \qquad
 q=\exp\left(-\frac{(1-\alpha)\gamma\Delta_K}{2z}\right),
 \qquad c=\cos\frac{\varphi}{2},\quad s=\sin\frac{\varphi}{2}.
\end{equation}
Define
\begin{equation}\label{eq:qubit-model-T-eta}
 T_{\varphi}(p,q)
 :=(1+p^2q^2)c^2+(p^2+q^2)s^2,
 \qquad
 \eta_{\varphi}(p,q)
 :=\frac{T_{\varphi}(p,q)
 +\sqrt{T_{\varphi}(p,q)^2-4p^2q^2}}{2}.
\end{equation}
The exact local determinant defect of the two Gibbs states is therefore
\begin{equation}\label{eq:qubit-model-exact-defect}
 \mathcal E_1^{\alpha,z}
 (\rho_{\beta,H},\sigma_{\gamma,K})
 =-z\log\eta_{\varphi}(p,q).
\end{equation}
Thus the defect is zero for aligned Hamiltonians ($\varphi=0$), whereas a
positive defect records the mismatch of their ground-state directions.  This
is a finite-distance identity: no smallness assumption on $H-K$ or on
$\varphi$ is required.

\begin{example}[A numerical low-energy certificate]
\label{ex:numerical-effective-certificate}
Take $\alpha=1/2$, $z=1$, $\beta=\gamma=4$,
$\Delta_H=\Delta_K=1$, and $\varphi=0.2$.  Then
\[
 p=q=e^{-1},\qquad
 C_1^{\rm th}=\frac{1-e^{-2}}{1+e^{-2}}
 =\tanh(1)\approx0.761594,
\]
while \eqref{eq:qubit-model-exact-defect} gives
\[
 \mathcal E_1^{1/2,1}
 \approx0.00762064.
\]
Corollary~\ref{cor:effective-Hamiltonian-certificate} therefore certifies that
every observable with $\|O\|\leq1$ satisfies
\[
 \left|
 \langle u_H,Ou_H\rangle-\langle u_K,Ou_K\rangle
 \right|
 \leq
 2\sqrt{\frac{0.00762064}{0.761594}}
 \approx0.200062.
\]
For comparison, the exact supremum of the left-hand side over
$\|O\|\leq1$ is the trace distance of the two ground-state projections,
\[
 \|P_1^H-P_1^K\|_1
 =2\sin(0.1)\approx0.199667.
\]
Thus in this concrete effective two-level model the computable thermal
certificate is close to the optimal uniform observable error.
\end{example}

\begin{proposition}
\label{prop:qubit-temperature-regimes}
For the qubit model \eqref{eq:qubit-HK-model}, the following limits hold.
\begin{enumerate}[label=\textup{(\roman*)}]

\item If $\beta,\gamma\to\infty$, then, for $0\leq\varphi<\pi$,
\begin{equation}\label{eq:qubit-low-temperature-limit}
 \mathcal E_1^{\alpha,z}
 (\rho_{\beta,H},\sigma_{\gamma,K})
 \longrightarrow
 -z\log\cos^2\frac{\varphi}{2},
 \qquad
 C_1^{\rm th}\longrightarrow1.
\end{equation}

\item If $\beta,\gamma\to0$, then
\begin{equation}\label{eq:qubit-high-temperature-limit}
 \mathcal E_1^{\alpha,z}
 (\rho_{\beta,H},\sigma_{\gamma,K})
 \longrightarrow0,
 \qquad
 C_1^{\rm th}\longrightarrow0.
\end{equation}
Moreover, if
\[
 a=\frac{\alpha\beta\Delta_H}{2z},
 \qquad
 b=\frac{(1-\alpha)\gamma\Delta_K}{2z},
\]
then, as $a,b\downarrow0$ and $m\leq a/b\leq M$ for some constants $m,M>0$,
\begin{equation}\label{eq:qubit-high-temperature-C-asymptotic}
 C_1^{\rm th}
 =
 \frac{2ab}{a+b}
 +O\bigl((a+b)^2\bigr).
\end{equation}
\end{enumerate}
\end{proposition}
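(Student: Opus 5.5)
The plan is to reduce everything to the explicit formula \eqref{eq:qubit-model-exact-defect} together with the definitions \eqref{eq:qubit-model-pq}--\eqref{eq:qubit-model-T-eta} and the formula for $C(p,q)$. After that, each limit is an elementary limit of continuous functions of $(p,q)$. The relevant dependence is: $p\to0$ as $\beta\to\infty$ and $p\to1$ as $\beta\to0$; similarly $q\to0$ as $\gamma\to\infty$ and $q\to1$ as $\gamma\to0$. The exponents $\alpha/(2z)$, $(1-\alpha)/(2z)$ and the gaps $\Delta_H,\Delta_K$ are positive constants, so these limits hold.

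For part (i), let $p,q\to0$. Then $T_\varphi(p,q)\to c^2$ and $\sqrt{T_\varphi^2-4p^2q^2}\to c^2$, hence $\eta_\varphi(p,q)\to c^2=\cos^2(\varphi/2)$. This limit is positive precisely because $\varphi<\pi$, which is why that case is excluded. Continuity of $-z\log$ on $(0,\infty)$ then gives \eqref{eq:qubit-low-temperature-limit}. For the constant, we have $C(p,q)=(1-p^2)(1-q^2)/(1-p^2q^2)\to1$ directly.

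For part (ii), let $p,q\to1$. Then $T_\varphi\to 2c^2+2s^2=2$ and $T_\varphi^2-4p^2q^2\to0$, so $\eta_\varphi\to1$ and the defect tends to $0$. The numerator of $C$ tends to $0$ while its denominator tends to $1-1=0$ as well, so $C\to0$ is not immediate. This indeterminate quotient is the one delicate point, and it is resolved by the asymptotic \eqref{eq:qubit-high-temperature-C-asymptotic}, which I derive next.

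To get the asymptotic, write $p=e^{-a}$ and $q=e^{-b}$; indeed $p=\exp(-\alpha\beta\Delta_H/(2z))=e^{-a}$, and similarly for $q$. Then
\[
C=\frac{(1-e^{-2a})(1-e^{-2b})}{1-e^{-2(a+b)}}.
\]
Expand $1-e^{-2x}=2x\bigl(1-x+O(x^2)\bigr)$. This gives
\[
C=\frac{4ab\,(1-a+O(a^2))(1-b+O(b^2))}{2(a+b)\,(1-(a+b)+O((a+b)^2))}=\frac{2ab}{a+b}\bigl(1+O(a^2+b^2+ab)\bigr).
\]
The first-order terms cancel: the numerator contributes $-(a+b)$ and the denominator contributes $+(a+b)$. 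Hence the relative error is $O((a+b)^2)$. Since $2ab/(a+b)\le a+b$, the absolute error is $O((a+b)^3)$, which is in particular $O((a+b)^2)$.

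The ratio condition $m\le a/b\le M$ is used only to make the $O$-constants uniform. It ensures that $a$ and $b$ are comparable, so $a^2$, $b^2$ and $ab$ are all $\asymp(a+b)^2$. Since $2ab/(a+b)>0$, this also yields $C\to0$ along this path. In the general limit $\beta,\gamma\to0$ without comparability, I would instead use the bound $C\le 1-p^2\to0$. This holds because $(1-q^2)\le 1-p^2q^2$, as $p\le1$.
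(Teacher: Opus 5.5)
Your proposal is correct and takes the same approach as the paper: you substitute $p,q\to0$ or $p,q\to1$ into the explicit formulas for $\eta_\varphi(p,q)$ and $C(p,q)$, and you expand $1-e^{-2x}$ to get the asymptotic for $C_1^{\rm th}$ (your second-order expansion even gives an $O((a+b)^3)$ error). Your bound $C\le 1-p^2$ also usefully closes the $0/0$ step for $C_1^{\rm th}\to0$, which the paper's proof passes over; the ratio hypothesis $m\le a/b\le M$ is not actually needed for your $O$-bound, since $a^2+ab+b^2\le(a+b)^2$ for all $a,b\ge0$.
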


\begin{proof}
As $\beta,\gamma\to\infty$, we have $p,q\to0$. Hence
\[
 T_{\varphi}(p,q)\to c^2,
 \qquad
 \eta_{\varphi}(p,q)\to c^2,
 \qquad
 c=\cos\frac{\varphi}{2}>0.
\]
Equations \eqref{eq:qubit-model-exact-defect} and
\eqref{eq:finite-thermal-C} now give
\eqref{eq:qubit-low-temperature-limit}.

As $\beta,\gamma\to0$, we have $p,q\to1$. Thus
$T_{\varphi}(p,q)\to2$, and $ \eta_{\varphi}(p,q)\to1$,
which proves \eqref{eq:qubit-high-temperature-limit}. Finally,
\[ C_1^{\rm th} = \frac{(1-e^{-2a})(1-e^{-2b})}{1-e^{-2(a+b)}}.\]
Using $ 1-e^{-2x}=2x+O(x^2)$, $ (x\downarrow0)$
gives \eqref{eq:qubit-high-temperature-C-asymptotic}.
\end{proof}

Figure~\ref{fig:qubit-defect-temperature} shows the exact defect for
$\alpha=1/2$, $z=1$, $\Delta_H=\Delta_K=1$, and $\beta=\gamma$.
The defect increases with the inverse temperature and with the angle
$\varphi$. Its low-temperature limits agree with
\eqref{eq:qubit-low-temperature-limit}.

\begin{figure}[htbp]
\centering
\begin{tikzpicture}
\begin{axis}[
 width=0.82\textwidth,
 height=0.48\textwidth,
 xlabel={inverse temperature $\beta=\gamma$},
 ylabel={$\mathcal E_1^{1/2,1}$},
 xmin=0.2,xmax=8,
 ymin=0,
 grid=major,
 legend style={at={(0.03,0.97)},anchor=north west,
 fill=white,draw=black},
 samples=180,
 domain=0.2:8,
 no markers
]
\addplot[blue,thick]
 {-ln((((1+exp(-x))*0.9330127019
       +2*exp(-x/2)*0.0669872981)
       +sqrt(max(((1+exp(-x))*0.9330127019
       +2*exp(-x/2)*0.0669872981)^2-4*exp(-x),0)))/2)};
\addlegendentry{$\varphi=\pi/6$}
\addplot[red,thick,dashed]
 {-ln((((1+exp(-x))*0.75+2*exp(-x/2)*0.25)
       +sqrt(max(((1+exp(-x))*0.75
       +2*exp(-x/2)*0.25)^2-4*exp(-x),0)))/2)};
\addlegendentry{$\varphi=\pi/3$}
\addplot[black,thick,dashdotted]
 {-ln((((1+exp(-x))*0.25+2*exp(-x/2)*0.75)
       +sqrt(max(((1+exp(-x))*0.25
       +2*exp(-x/2)*0.75)^2-4*exp(-x),0)))/2)};
\addlegendentry{$\varphi=2\pi/3$}
\end{axis}
\end{tikzpicture}
\caption{The exact qubit defect as a function of the inverse temperature
for three values of $\varphi$.}
\label{fig:qubit-defect-temperature}
\end{figure}

\begin{remark}[Physical meaning]
The factor $C_k^{\rm th}$ depends on the thermal gaps, while
$S_k^{\rm th}$ measures the difference between the low-energy subspaces.
Thus \eqref{eq:thermal-remainder} is not only related to noncommutativity.
Indeed, two commuting Hamiltonians may have different low-energy subspaces.
\end{remark}

\section{Relation to existing work}
\label{sec:relation-existing-work}
We briefly compare our results with three related directions: classical
spectral-subspace perturbation theory, ground-state fidelity, and
R\'enyi-divergence and log-majorization theory. We also explain how the
thermal defect can be used to test an effective low-energy Hamiltonian.
These comparisons clarify the scope of our results and the role of the
three-level and Hubbard-dimer examples.

\subsection{Comparison with spectral-subspace perturbation bounds}

The Davis--Kahan $\sin\Theta$ theorem estimates the distance between
spectral subspaces under an additive perturbation
\cite{DavisKahan}. Its bound depends on the norm of the perturbation and a
spectral-separation parameter. It is especially useful when
\[
K=H+V
\]
and $\|V\|$ is small relative to the relevant spectral gap.

Our estimate uses different data. We do not assume that $\|H-K\|$ is small.
Instead, the local determinant defect and the two thermal boundary gaps give
\begin{equation}\label{eq:comparison-thermal-certificate}
\|P_k^H-P_k^K\|^2
\leq
\frac{
1-\exp\left\{
-\mathcal E_k^{\alpha,z}
(\rho_{\beta,H},\sigma_{\gamma,K})/z
\right\}}
{C_k^{\rm th}}.
\end{equation}
Thus the Davis--Kahan theorem starts with an additive perturbation bound,
whereas \eqref{eq:comparison-thermal-certificate} starts with finite-temperature
spectral data.

The three-level model in
Subsection~\ref{subsec:three-level-model} makes this difference explicit.
In that example,
\[
\|H-K\|\geq L\longrightarrow\infty,
\]
while the low-energy subspace distance remains equal to $\sin\theta$.
Moreover, the thermal estimate remains informative and becomes
asymptotically exact:
\[
\frac{
1-\exp\left\{
-\mathcal E_2^{\alpha,z}
(\rho_{\beta,H},\sigma_{\gamma,K})/z
\right\}}
{C_2^{\rm th}}
\longrightarrow
\|P_2^H-P_2^K\|^2.
\]
Hence the thermal certificate can remain useful when an operator-norm
perturbation estimate based on the fixed boundary gap of the reference
Hamiltonian is ineffective.

The two methods are therefore complementary. When a good bound for
$\|H-K\|$ is available, the Davis--Kahan theorem gives a direct estimate.
When such a bound is large or unavailable, the local thermal defect may
still give information about the low-energy subspaces.

\subsection{Comparison with ground-state fidelity}

Ground-state fidelity uses the overlap of two ground states to study quantum
phase transitions. Fidelity susceptibility measures the local change of this
overlap along a differentiable family of Hamiltonians; see
\cite{ZanardiPaunkovic,GuFidelityReview,WangTroyer}.

Our use of the overlap is different. We first form a local determinant
defect from two finite-temperature Gibbs states. We then use this defect,
together with the thermal gaps, to bound the ground-state overlap by
Corollary~\ref{cor:thermal-ground-state}. The two Hamiltonians need not
belong to a differentiable one-parameter family and need not be close in
operator norm.

For the qubit model, the overlap is recovered exactly from the defect by
\eqref{eq:exact-ground-overlap}. Thus the qubit calculation is not a
fidelity-susceptibility expansion. It is a finite-separation reconstruction
from thermal spectral data.

\subsection{Effective low-energy Hamiltonians}

Effective Hamiltonians are often obtained by removing high-energy degrees
of freedom. The resulting model is useful only when its low-energy sector
remains close to that of the full Hamiltonian. Our result gives a posterior
test of this property.

Corollary~\ref{cor:effective-Hamiltonian-certificate} shows that one local
defect controls the difference
\[
\left|
\frac{1}{k}\Tr(P_k^HOP_k^H)
-
\frac{1}{k}\Tr(P_k^KOP_k^K)
\right|
\]
for every bounded observable $O$. Thus the certificate controls the whole
normalized low-energy sector, rather than one chosen observable.

The Hubbard dimer in Subsection~\ref{subsec:Hubbard-dimer} gives an
interacting example. It compares the four-dimensional low-energy sector of
the half-filled Hubbard Hamiltonian with the spin sector of the effective
Heisenberg model. In this case, the exact subspace error is
\[
\|P_4^{\rm Hub}-P_{\rm sp}\|^2
=
\frac12
\left(
1-\frac{U}{\sqrt{U^2+16t^2}}
\right),
\]
and the thermal defect gives the computable certificate
\[
\frac12
\left(
1-\frac{U}{\sqrt{U^2+16t^2}}
\right)
\leq
\frac{
1-\exp\left\{
-\mathcal E_4^{\alpha,z}
(\rho_{\beta}^{\rm Hub},\sigma_{\gamma}^{\rm eff})/z
\right\}}
{C_4^{\rm th}}.
\]
The numerical calculation in Table~\ref{tab:Hubbard-thermal-certificate}
shows that this certificate becomes nearly sharp in the strong-coupling
regime.

This application does not replace the strong-coupling derivation of the
effective Heisenberg Hamiltonian. Instead, it gives a separate test of the
low-energy subspace after the effective model has been constructed.

\subsection{Comparison with R\'enyi-divergence and log-majorization results}

Hiai and related authors proved log-majorization results for
$Q_{\alpha,z}$ and geometric-type matrix means
\cite{Hiai2019,Hiai2024,Hiai2026}. Hiai and Jen\v{c}ov\'a studied the
$\alpha$--$z$ divergence on general von Neumann algebras. Their results
include data processing, reversibility, and monotonicity in the parameters
\cite{HiaiJencova}. These works do not give local determinant remainders for
spectral subspaces.

Kibe and Roy represented sandwiched and $\alpha$--$z$ R\'enyi divergences as
averages of relative entropy along fixed-ray escort families
\cite{KibeRoyEscort,KibeRoyQNEC}. Their results concern the full divergence.
We instead use a local singular-value defect to control finite-trace
spectral subspaces.

Yan and Han proved log-submajorization and Fuglede--Kadison determinant
inequalities in finite von Neumann algebras \cite{YanHan}. Their results form
part of the determinant background of this paper, but do not contain the
two-projection remainder or the spectral-flattening estimate used here.

To the best of our knowledge, these works do not give a remainder for the
local determinant inequality of $Q_{\alpha,z}$ in terms of spectral gaps and
principal angles. Theorem~\ref{thm:spectral-flattening} reduces the defect
to an exact two-projection integral in a semifinite von Neumann algebra. It
leads to the inverse thermal estimate
\eqref{eq:thermal-inverse-stability}, the exact qubit reconstruction
\eqref{eq:exact-ground-overlap}, and the low-energy model certificates
described above.

The certificate requires the local singular-value data of the Gibbs
operators. These data may be obtained by exact diagonalization or by a
controlled numerical approximation. The present results do not provide an
experimental protocol for measuring the defect directly.


\begin{thebibliography}{99}

\bibitem{Araki}
H. Araki,
\emph{On an inequality of Lieb and Thirring},
Lett. Math. Phys. \textbf{19} (1990), 167--170.

\bibitem{AudenaertDatta}
K.~M.~R. Audenaert and N. Datta,
\emph{$\alpha$--$z$-R\'enyi relative entropies},
J. Math. Phys. \textbf{56} (2015), 022202.

\bibitem{BertaScholzTomamichel}
M. Berta, V.~B. Scholz, and M. Tomamichel,
\emph{R\'enyi divergences as weighted non-commutative vector-valued
$L_p$-spaces},
Ann. Henri Poincar\'e \textbf{19} (2018), 1843--1867.

\bibitem{DavisKahan}
C. Davis and W.~M. Kahan,
\emph{The rotation of eigenvectors by a perturbation. III},
SIAM J. Numer. Anal. \textbf{7} (1970), no.~1, 1--46.

\bibitem{FackKosaki}
T. Fack and H. Kosaki,
\emph{Generalized $s$-numbers of $\tau$-measurable operators},
Pacific J. Math. \textbf{123} (1986), 269--300.

\bibitem{DoddsEtAl}
P.~G. Dodds, T.~K. Dodds, F.~A. Sukochev, and D. Zanin,
\emph{Logarithmic submajorization, uniform majorization and H\"older type
inequalities for $\tau$-measurable operators},
Indag. Math. (N.S.) \textbf{31} (2020), no.~5, 809--830.

\bibitem{Hiai2019}
F. Hiai,
\emph{Log-majorization related to R\'enyi divergences},
Linear Algebra Appl. \textbf{563} (2019), 255--276.

\bibitem{Hiai2024}
F. Hiai,
\emph{Log-majorization and matrix norm inequalities with application to
quantum information},
Acta Sci. Math. (Szeged) \textbf{90} (2024), 527--549.

\bibitem{Hiai2026}
F. Hiai,
\emph{Log-majorizations between quasi-geometric type means for matrices},
Linear Algebra Appl. \textbf{735} (2026), 123--174.

\bibitem{HiaiJencova}
F. Hiai and A. Jen\v{c}ov\'a,
\emph{$\alpha$--$z$-R\'enyi divergences in von Neumann algebras: data
processing inequality, reversibility, and monotonicity properties in
$\alpha,z$},
Commun. Math. Phys. \textbf{405} (2024), Article 271.

\bibitem{GuFidelityReview}
S.-J. Gu,
\emph{Fidelity approach to quantum phase transitions},
Int. J. Mod. Phys. B \textbf{24} (2010), 4371--4458.

\bibitem{HalmosTwoSubspaces}
P.~R. Halmos,
\emph{Two subspaces},
Trans. Amer. Math. Soc. \textbf{144} (1969), 381--389.

\bibitem{Jencova}
A. Jen\v{c}ov\'a,
\emph{R\'enyi relative entropies and noncommutative $L_p$-spaces},
Ann. Henri Poincar\'e \textbf{19} (2018), 2513--2542.

\bibitem{Kato}
S. Kato,
\emph{On $\alpha$--$z$-R\'enyi divergence in the von Neumann algebra
setting},
J. Math. Phys. \textbf{65} (2024), 042202.

\bibitem{KibeRoyEscort}
T. Kibe and P. Roy,
\emph{Fixed-ray escort representations of sandwiched and
$\alpha$--$z$ R\'enyi divergences on von Neumann algebras},
arXiv:2608.21214v1, 2026.

\bibitem{KibeRoyQNEC}
T. Kibe and P. Roy,
\emph{Quasi-local form for $\alpha$--$z$ R\'enyi QNEC from fixed-ray
escorts},
arXiv:2609.04016v1, 2026.

\bibitem{MacDonaldGirvinYoshioka}
A.~H.~MacDonald, S.~M.~Girvin, and D.~Yoshioka,
\textit{$t/U$ expansion for the Hubbard model},
Phys. Rev. B \textbf{37} (1988), 9753--9756.

\bibitem{ManjeganiPositivity}
S.~M. Manjegani,
\emph{H\"older and Young inequalities for the trace of operators},
Positivity \textbf{11} (2007), no.~2, 239--250.

\bibitem{ReisizadehManjegani}
H. Reisizadeh and S.~M. Manjegani,
\emph{Some applications of matrix inequalities in R\'enyi entropy},
arXiv:1608.03362v2, 2016.

\bibitem{Renyi}
A. R\'enyi,
\emph{On measures of entropy and information},
Proc. Fourth Berkeley Symp. Math. Statist. Probab., Vol. I,
University of California Press, 1961, 547--561.

\bibitem{YanHan}
C. Yan and Y. Han,
\emph{Logarithmic submajorizations inequalities for operators in a finite
von Neumann algebra},
J. Math. Anal. Appl. \textbf{505} (2022), no.~1, Article 125505.

\bibitem{WangTroyer}
L. Wang, Y.-H. Liu, J. Imri\v{s}ka, P.~N. Ma, and M. Troyer,
\emph{Fidelity susceptibility made simple: A unified quantum Monte Carlo
approach},
Phys. Rev. X \textbf{5} (2015), 031007.

\bibitem{ZanardiPaunkovic}
P. Zanardi and N. Paunkovi\'c,
\emph{Ground state overlap and quantum phase transitions},
Phys. Rev. E \textbf{74} (2006), 031123.

\end{thebibliography}
\end{document}